\newtheorem{dfn}{Definition}
\newtheorem{thm}{Theorem}[section]
\newtheorem{lem}{Lemma}[section]
\newtheorem{cor}{Corollary}[section]
    \newtheoremstyle{TheoremNum}
        {\topsep}{\topsep}              
        {\itshape}                      
        {}                              
        {\bfseries}                     
        {.}                             
        { }                             
        {\thmname{#1}\thmnote{ \bfseries #3}}
    \theoremstyle{TheoremNum}
\newtheorem{thmn}{Theorem}
\newtheorem{lemn}{Lemma}
\DeclareMathOperator{\rel}{rel}
\DeclareMathOperator{\sop}{so}
\DeclareMathOperator{\sink}{sink}
\DeclareMathOperator{\ine}{in}
\DeclareMathOperator{\oute}{out}
\DeclareMathOperator{\doo}{do}
\DeclareMathOperator{\dis}{dis}
\DeclareMathOperator{\pa}{pa}
\DeclareMathOperator{\de}{de}
\DeclareMathOperator{\si}{si}
\DeclareMathOperator{\ch}{ch}
\DeclareMathOperator{\an}{an}
\DeclareMathOperator{\cb}{cb}
\DeclareMathOperator{\pre}{pre}
\def\ci{\perp\!\!\!\perp}
\begin{document}

\begin{frontmatter}

\title{Causal Inference with a Graphical Hierarchy of Interventions}
\runtitle{Hierarchy of Interventions}


\begin{aug}
\author{\fnms{Ilya} \snm{Shpitser}\thanksref{m1}
\ead[label=e1]{ilyas@cs.jhu.edu}
}
\and
\author{\fnms{Eric} \snm{Tchetgen Tchetgen}\thanksref{m2}
\ead[label=e2]{etchetge@hsph.harvard.edu}
}
\runauthor{I. Shpitser and E. Tchetgen Tchetgen}

\affiliation{
Johns Hopkins University\thanksmark{m1}
and Harvard University\thanksmark{m2}
}

\address{
Department of Computer Science\\
Johns Hopkins University\\
3400 N Charles Street\\
Baltimore, Maryland 21218\\
\printead{e1}
}

\address{
School of Public Health\\
Harvard University\\
677 Huntington Avenue\\
Kresge Building\\
Boston, Massachusetts 02115\\
\printead{e2}
}
\end{aug}

\begin{abstract}
Identifying causal parameters
from observational data is fraught with subtleties due to the
issues of selection bias and confounding.  In addition, more complex questions
of interest, such as effects of treatment on the treated and mediated
effects may not always be identified even in data where treatment assignment
is known and under investigator control, or may be identified under one causal
model but not another.

Increasingly complex effects of interest, coupled with a diversity of
causal models in use resulted in a fragmented view of identification.
This fragmentation makes it unnecessarily difficult to determine if a given
parameter is identified (and in what model), and what assumptions must hold for
this to be the case.  This, in turn, complicates the development of estimation
theory and sensitivity analysis procedures.

In this paper, we give a unifying view of a large class of
causal effects of interest, including novel effects not previously considered,
in terms of a hierarchy of interventions, and show that identification theory
for this large class reduces to an identification theory of random variables
under interventions from this hierarchy.  Moreover, we show that one type of
intervention in the hierarchy is naturally associated with queries identified
under the Finest Fully Randomized Causally Interpretable Structure Tree Graph
(FFRCISTG) model of Robins (via the extended g-formula),
and another is naturally associated with queries
identified under the Non-Parametric Structural Equation Model with Independent
Errors (NPSEM-IE) of Pearl, via a more general functional
we call the edge g-formula.

Our results motivate the study of estimation theory for the edge g-formula,
since we show it arises both in mediation analysis, and in settings where
treatment assignment has unobserved causes, such as models associated with
Pearl's front-door criterion.

\end{abstract}



\end{frontmatter}


\section{Introduction}

The goal of the empirical sciences is discerning cause-effect relationships
by experimentation and analysis.  This is made difficult by the ubiquity of
hidden variables, and the difficulty of collecting data free from confounding
and selection bias.  Two useful frameworks for addressing these
difficulties have been potential outcomes, introduced by Neyman
\cite{neyman23app}, and expanded by Rubin \cite{rubin74potential}, and causal
graphical models, first used in linear models by Wright
\cite{wright21correlation}, and later expanded into a general framework (see for
example \cite{spirtes01causation}, and \cite{pearl09causality}).
There exists a modern synthesis of these two frameworks, where causal
models based on non-parametric structural equations are defined on potential
outcome random variables, and assumptions defining these models can be
represented by (absences) of arrows in a graph.  See
\cite{pearl09causality} chapter 7, and \cite{thomas13swig} for a detailed
treatment.

Potential outcome random variables represent outcomes under
a hypothetical \emph{intervention} operation, which corresponds to
an idealized randomized control trial.  Concepts such as the overall
causal effect of a treatment can be represented as causal parameters
on appropriate potential outcomes, and as statistical estimands if appropriate
assumptions hold.

The synthesis of potential outcomes and graphs has been instrumental in much
of the recent work on identification of various types of causal parameters
such as total effects \cite{robins86new, tian02on, shpitser06id,shpitser06idc,
shpitser07hierarchy}, and mediated effects \cite{pearl01direct,
chen05ijcai, shpitser13cogsci}.

Nevertheless, the existing literature suffers from three problems.  First,
a single graph may correspond to different causal models, which means a
particular causal parameter may be identified under one causal model, but not
under another, even though the models \emph{share the same graph}.
Second, different types of causal parameters seem to have different key issues
underlying their identification, which makes it difficult to determine the
specific assumptions that must hold for identification.  For
instance, certain types of unobserved confounding must be absent in order for
overall effects to be identifiable, while even completely unconfounded
mediated effects may be unidentified \cite{chen05ijcai}.  
Finally, because of the complex nature of identification theory for causal
parameters, existing conventional wisdom on what is identifiable is
\emph{too conservative}.  For example, it is often assumed that
a mediator and outcome must remain completely unconfounded in order to obtain
identification of mediated causal effects.  However, this is not true
\cite{shpitser13cogsci}.

These issues make it difficult to determine \emph{if} a particular causal
parameter is identified, and \emph{under what model}, what \emph{assumptions}
underlie this identification, and what the corresponding statistical parameter
is.  This complicates estimation theory, the development of parametric
relaxations that permit identification, and sensitivity analysis procedures.

\subsection{Outline of the Paper}

The contents of the paper can be summarized by a picture in
Fig. \ref{fig:hierarchy}.  In section \ref{sec:intro}, we introduce our
notation, necessary graph theory, standard interventions
(which we call node interventions in this manuscript) and potential outcomes,
which are responses to node interventions.  We also introduce
the FFRCISTG model of Robins, which in this paper we call the ``single world
model (SWM),''
and the NPSEM-IE of Pearl, which is a submodel of the FFRCISTG model, and
which we call the ``multiple worlds model (MWM).''  The reasons for these
names will become clear when these models are defined.
The subset relationship of these two models is shown explicitly in
Fig. \ref{fig:hierarchy}.  Finally, we discuss targets of interest in causal
inference known as total effects, which are defined in terms of node
interventions, and discuss identification theory for these targets under
the SWM via the extended g-formula.

In section \ref{sec:edge-path}, we define additional types of interventions,
that we term edge and path interventions, and responses to these types of
interventions via recursive substitution.  Responses to node, edge and path
interventions form an inclusion hierarchy in the sense that responses to
node interventions are a special case of responses to edge interventions, which
are in turn a special case of responses to path interventions.  This inclusion
is denoted by the subset relations in Fig. \ref{fig:hierarchy}.  We also
discuss how targets of inference in mediation analysis known as direct and
indirect effects are defined in terms of edge interventions.

In section \ref{sec:targets}, we show how we can express a wide variety of
targets of interest in causal inference, such as path-specific effects (PSEs)
or effects of treatment on the multiply treated (ETMTs)
as responses to path interventions.  In addition, we show that path
interventions are general enough to accommodate novel targets which combine
features of PSEs and ETMTs, which we call effects of treatment on the indirectly
treated (ETITs).  Our results then imply novel identification results
for these targets, and others not previously considered in the literature, but
expressible as path interventions.

In section \ref{sec:id}, we show that there is a natural correspondence
between causal models and intervention types we discuss in the following sense.
We show that responses to node interventions are identified under the SWM,
and responses to edge interventions are identified under the
MWM.  Furthermore, we show that if a response to an edge intervention
cannot be expressed as a node intervention, then it is \emph{not} identified
under the SWM, and if a response to a path intervention cannot be
expressed as an edge intervention, then it is \emph{not} identified under the
MWM.

The identification of node interventions under the SWM is via
the well known \emph{extended g-formula} \cite{robins04effects,thomas13swig},
which we give as equation (\ref{eqn:g-formula}).
The identification of edge interventions under the MWM is via
a generalization of (\ref{eqn:g-formula}), which we call the
\emph{edge g-formula}, and give as equation (\ref{eqn:g-formula-edge}).

We also give examples of targets of interest in causal inference that do not
correspond to responses to path interventions, as well as an example of
a submodel of the MWM where even path interventions not ordinarily identified
under the MWM are identified.

In Section \ref{sec:swigs} we briefly discuss the relationship of our
results to Single World Intervention Graphs (SWIGs) \cite{thomas13swig}.

Section \ref{sec:tian}
shows that a certain class of functionals that identify causal
effects in latent variable causal models \cite{tian02on,shpitser06id}
corresponds to functionals derived from the edge g-formula.
This implies, in particular, that functionals that arise for treatment effects
with unobserved causes of treatments, such as the front-door functional,
also arise in mediation analysis.

In section \ref{sec:estimator},
we illustrate the connection of our work to existing estimation theory for
causal parameters, and suggest avenues of future work,
by giving a known example of an estimator for a parameter derived from a special
case of the edge g-formula.




What the overall picture implies is that once we solve the identification
problem for the responses to interventions in our hierarchy, as we do
here, we immediately reduce the identification problem for a wide class of
targets of interest to the much easier problem of translating those targets
into responses to path interventions.
Once that translation is complete, the question of what is identified under
what model is immediately settled.  In addition, our developments imply that
estimation theory for functionals derived from the edge g-formula
is relevant for a large class of inference targets identified under the
MWM, including path-specific effects, effects of treatment on the multiply
treated, and certain total causal effects with unobserved causes of treatments.

In the interests of space, the vast majority of arguments for our results appear
in the appendices in the supplementary materials \cite{shpitser15nep}.
In addition, the supplementary materials contains our rationale for the use of
path interventions, rather than simpler or more algebraic representations of
causal inference targets.

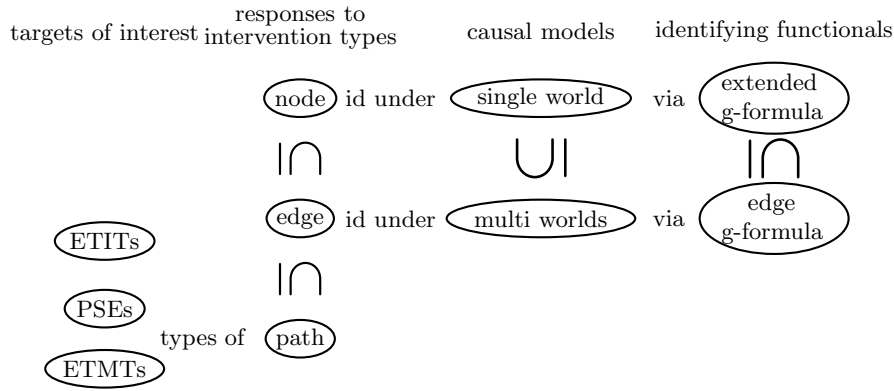
\begin{figure}
\begin{center}
\begin{tikzpicture}[>=stealth, node distance=1.6cm]
    \tikzstyle{format} = [draw, thick, ellipse, minimum size=5.0mm,
	inner sep=0pt]
    \tikzstyle{format2} = [draw, thick, dashed, ellipse, minimum size=5.0mm,
	inner sep=0pt]

	\begin{scope}
		\path[->]
			node[] (dummy) {}
			node[format, above of=dummy,yshift=-0.3cm] (new) {ETITs}
			node[format, above of=dummy,yshift=-1.2cm] (med) {PSEs}
			node[format, below of=dummy,yshift=+1.2cm] (ett) {ETMTs}
			node[format, right of=dummy, xshift=1.0cm] (path) {path}
			node[format, above of=path] (edge) {edge}
			node[format, above of=edge] (node) {node}
			node[above of=node,yshift=-0.5cm] (l1)
				{responses to}
			node[above of=node,yshift=-0.8cm] (l2)
				{intervention types}
			node[format, right of=edge, xshift=1.6cm]
				(judea) {multi worlds}
			node[format, right of=node, xshift=1.6cm]
				(jamie) {single world}
			node[above of=jamie,yshift=-0.7cm] (l3)
				{causal models}
			node[above of=med,yshift=2.05cm] (l4)
				{targets of interest}

			node[format, right of=jamie, xshift=1.5cm,
				text width=1.4cm]
				(g-formula) {extended g-formula}

			node[above of=g-formula,yshift=-0.7cm] (l3)
				{identifying functionals}

			node[format, right of=judea, xshift=1.5cm,
				text width=1.4cm]
				(e-g-formula) {\hspace{3.5mm}edge g-formula}

			node[right of=jamie,xshift=0.1cm] {via}
			node[right of=judea,xshift=0.1cm] {via}

			node[right of=dummy,xshift=-0.3cm] {types of}

			node[right of=node,xshift=-0.4cm] {id under}
			node[right of=edge,xshift=-0.4cm] {id under}

			node[below of=g-formula, yshift=0.8cm]
				{\rotatebox{270}{\Huge $\subseteq$}}
			node[below of=jamie, yshift=0.8cm]
				{\rotatebox{90}{\Huge $\subseteq$}}

			node[below of=node, yshift=0.8cm]
				{\rotatebox{270}{\huge $\subseteq$}}
			node[below of=edge, yshift=0.8cm]
				{\rotatebox{270}{\huge $\subseteq$}}


		;
	\end{scope}
\end{tikzpicture}
\end{center}
\caption{A hierarchy of responses to interventions defined with respect to
features of a causal graph, the relationship of this hierarchy to targets of
interest in causal inference, such as path-specific effects (PSEs),
effects of treatments on the multiply treated (ETMTs), and new targets such
as effects of treatments on the indirectly treated (ETITs),
and identifiability under causal models defined in the literature.
}
\label{fig:hierarchy}
\end{figure}

\section{Notation and Definitions}
\label{sec:intro}
We introduce graph theory terms, potential outcomes, and statistical and
causal graphical models.

\subsection{Graphs and Random Variables}
\label{sec:basic-dfn}

We will associate random variables with vertices in graphs.  We will denote
\emph{both} a single vertex and a single corresponding random variable as
an uppercase Roman letter, e.g. $A$.  Sets of vertices (and corresponding
random variables) will be denoted by uppercase bold letters, e.g. ${\bf A}$.

For a random variable $V$, let
${\mathfrak X}_V$ be the state space of $V$.  For example if $V$ is binary,
then ${\mathfrak X}_V = \{ 0, 1 \}$.
We denote elements of a set ${\mathfrak X}_A$ (values of $A$)
by lowercase Roman letters: $a \in {\mathfrak X}_A$.
The state space of a set ${\bf V}$ of random
variables is simply the Cartesian product of the individual state spaces:
${\mathfrak X}_{\bf V} =
\times_{V \in {\bf V}} \left( {\mathfrak X}_{V} \right)$.

Sets of values corresponding to sets of
random variables will be denoted by lowercase bold letters, e.g.
${\bf a} \in {\mathfrak X}_{\bf A}$.
Sometimes we will denote a restriction of
a set of values by a set subscript.  That is if ${\bf v}$ is a set of values
of ${\bf V}$, and ${\bf A} \subseteq {\bf V}$, then
${\bf v}_{\bf A}$ is a restriction of ${\bf v}$ to ${\bf A}$.

An edge in a graph is a vertex adjacency coupled with an orientation.
A path in a directed graph is a (possibly empty) sequence of nodes of
the form $(A_1A_2A_3\ldots A_{k-1}A_k)$, where each node in the sequence
occurs exactly once, and each $A_{i},A_{i+1}$ share an edge.
The first vertex in a path sequence is called the source, and the last vertex
is called the sink.  A path with two vertices $(A_1A_2)$ is just an edge.

A subpath of a path is a subsequence of edges in a path that themselves
form a path.  A suffix subpath of
$(A_1A_2\ldots A_{m-1}A_{m}\ldots A_{k-1}A_k)$
is a subpath of the form $(A_{m-1}A_{m}\ldots A_{k-1}A_k)$, while
a prefix subpath is a subpath of the form
$(A_1A_2\ldots A_{m-1}A_{m})$.
A directed path from $A_1$ to $A_k$ has edges for every $i$ of the form
$A_i \to A_{i+1}$.  We will denote a directed path as
$(A_1A_2\ldots A_k)_{\to}$, and also by Greek letters, e.g. $\alpha$,
and sets of directed paths by bold Greek letters, e.g. ${\bm\alpha}$.
A source vertex of $\alpha$ will be written $\sop_{\cal G}(\alpha)$,
and the sink vertex will be written $\sink_{\cal G}(\alpha)$.

We say a directed cycle exists in a graph if it contains a path
$(A_1A_2A_3 \ldots A_{k})_{\to}$ and an edge $(A_{k}A_1)_{\to}$.
A directed graph lacking directed cycles is called acyclic, abbreviated as
DAG.

\subsection{
Causal Models of a DAG}
\label{sec:models}

For a subset ${\bf A}$ of random variables ${\bf V}$,
and a value assignment ${\bf a}$ to
${\bf A}$, we denote a forced assignment of ${\bf A}$ to an element of
${\mathfrak X}_{\bf A}$ as a \emph{node intervention}.
A node intervention which maps ${\bf A}$ to ${\bf a} \in
{\mathfrak X}_{\bf A}$
will be denoted by $\nu_{\bf a}$.  Pearl denoted node interventions
$\nu_{\bf a}$ by $\doo({\bf a})$, and Robins by $g = {\bf a}$.  We use
alternative notation in this paper to avoid ambiguity, because we will consider
other types of interventions.  It is also possible to consider more complex
types of interventions on nodes, known as \emph{dynamic treatment regimes},
where assigned values to ${\bf A}$ are not constants, but functions of variables
assigned and observed in the past
\cite{robins86new,murphy03optimal,moodie07demystifying}.  Although
generalizations of our results to this setting are possible, we do not pursue
them in the interests of space.

For a random variable $Y \in {\bf V}$, and ${\bf a} \in {\mathfrak X}_{\bf A}$
for a set ${\bf A} \subseteq {\bf V}$,
we denote a (random) response to a node intervention $\nu_{\bf a}$ as
$Y({\bf a})$.  These random variables are also called
\emph{potential outcomes}, because $Y$ is often an outcome of interest, and
the intervention is often hypothetical, rather than actually occurring.
Given a set ${\bf Y} = \{ Y_1, \ldots, Y_k \}$ of random variables, we denote
$\{ Y_1({\bf a}), \ldots, Y_k({\bf a}) \}$ by ${\bf Y}({\bf a})$ or
$\{ {\bf Y} \}({\bf a})$.

Let $\pa_{\cal G}(V)$ be the set of parents of $V$ in ${\cal G}$, that is
the set $\{ W \mid (WV)_{\to} \text{ is in } {\cal G} \}$.
Following \cite{thomas13swig}, given a DAG ${\cal G}$ with vertices ${\bf V}$,
we will assume the existence of $V({\bf v}_{\pa_{\cal G}(V)})$
for every $V \in {\bf V}$, and for all
${\bf v}_{\pa_{\cal G}(V)} \in {\mathfrak X}_{\pa_{\cal G}(V)}$, as well as
a well-defined joint distribution over these random variables, and use
these potential outcomes, and the associated joint,
to define others using recursive substitution.

In particular, for any ${\bf A} \subseteq {\bf V}$, and any
${\bf a} \in {\mathfrak X}_{\bf A}$, we define for every $V \in {\bf V}$
\begin{align}
V({\bf a}) \equiv V({\bf a}_{\pa_{\cal G}(V)},
	\{ \pa_{\cal G}(V) \setminus {\bf A} \}({\bf a}))
\label{eqn:rec-sub}
\end{align}
In words, this states that the response of $V$ to $\nu_{\bf a}$ is defined
as the potential outcome where all parents of $V$ which are in ${\bf A}$
are assigned an appropriate value from ${\bf a}$, and all
other parents are assigned whatever value they would have attained under
a node intervention $\nu_{\bf a}$ (these are defined recursively, and the
definition terminates because of the lack of directed cycles in ${\cal G}$).
For example, in the graph in Fig. \ref{fig:triangle} (a),
$Y(a) = Y(a,M(a))$.

It is possible to construct additional types of potential outcomes
other than those that are responses to node interventions.  We will discuss
some such potential outcomes later.
However, responses to node interventions are sufficient to define
causal models.  Just as a statistical model is a set of distributions over
${\bf V}$ defined by some restriction, we view a causal model as a set of
distributions over $\{ V({\bf v}_{\pa_{\cal G}(V)}) \mid V \in {\bf V} \}$
defined by some restriction.  We will call elements of a causal model
\emph{causal structures}, and denote them as $c({\bf V},{\cal G})$, by analogy
with $p({\bf V})$, but indexed by a graph.
In this paper we will consider two causal models.

We adopt the definitions presented in \cite{thomas13swig}.
We define the \emph{finest fully randomized causally interpretable structured
tree graph (FFRCISTG) model}
associated with a DAG ${\cal G}$ with vertices ${\bf V}$,
as the set of all possible potential outcome responses subject to
the restriction that the variables in the set
\[
\left\{ V({\bf v}_{\pa_{\cal G}(V)}) \mid 
V \in {\bf V} \right\}
\]
are mutually independent for every ${\bf v} \in {\mathfrak X}_{\bf V}$.
We define the \emph{non-parametric structural equation model with independent
errors (NPSEM-IE)} associated with a DAG ${\cal G}$ with vertices ${\bf V}$,
as the set of all possible potential outcome responses subject to the
restriction that the sets of variables
\[
\left\{
\{ V({\bf a}_V) \mid {\bf a}_V \in {\mathfrak X}_{\pa_{\cal G}(V)}
\} \middle| V \in {\bf V}
\right\}
\]
are mutually independent.
The NPSEM-IE associated with a particular graph is a submodel of the FFRCISTG
model associated with the same graph, because it always places at
least as many restrictions on potential outcome responses, and in most cases
many more.

For example,
the binary FFRCISTG model associated with the DAG in Fig. \ref{fig:triangle} (a)
asserts that variables $W$, $A(w)$, $M(a,w)$, $Y(a,m)$ are mutually independent
for any $a,m,w \in \{ 0, 1 \}$,
while the binary NPSEM-IE model associated with the same DAG
asserts that sets $\{ W \}, \{ A(w) \mid w \in \{0,1\} \},
\{ M(a,w) \mid a \in \{0,1\},w \in \{0,1\} \}, \{ Y(a,m) \mid
	a \in \{0,1\}, m \in \{0,1\} \}$ are mutually
independent.
The FFRCISTG model always imposes restrictions on a set of variables under
a single set of interventions (a ``single world''), while the NPSEM-IE may also
impose restrictions on variables across multiple conflicting sets of
interventions simultaneously.
To emphasize this, we will refer to the FFRCISTG model as a
``single world model'' (SWM), and to the NPSEM-IE as a
``multiple worlds model'' (MWM) in the remainder of this paper.

A crucial difference between the SWM and the MWM, is that the assumptions of
the former are possible to test, at least in principle, by checking
independences in a distribution of responses in an idealized randomized
controlled trial.  That is, if we wanted to check if $W$ is independent of
$A(w)$, we could check independence in a joint distribution obtained from
recording, for a set of units,
the values of $W$ immediately before treatment $w$ is assigned, and
the response values of $A$ under that assignment.  However, checking if
$M(a)$ is independent of $Y(a',m)$ would entail somehow knowing how the
response $M$ of a unit behaves under assigned treatment $a$, and
\emph{simultaneously} how the response $Y$ of the unit behaves under
a conflicting treatment $a'$ (and $m$).
One may be able to argue for explicit construction of such
joint responses in certain designs \cite{imai13experimental}, or
for certain types of units, for instance logic gates in a digital circuit.
However, in general, assumptions defining the MWM are not experimentally
testable.

\subsection{Identification of Node Interventions}
\label{sec:node-id}

Responses to interventions of various types can be used to define
targets of interest, discussed in more detail in Section \ref{sec:targets}.
However, in order for these definitions to be useful, they must be linked to
actually observed data.  If such a link can be provided, that is, if
a particular response can be expressed as a functional
of the observed joint distribution $p({\bf V})$ for any element of
a causal model, we say that the response is \emph{identified} under that causal
model from $p({\bf V})$.

In causal models, this link is typically provided via
the \emph{consistency assumption}, which is sometimes informally stated as
``in the subpopulation where ${\bf A} = {\bf a}$, ${\bf Y}({\bf a})$
behaves as ${\bf Y}$.''
Under the definition of the SWM (and the MWM),
consistency is implied by (\ref{eqn:rec-sub}), see \cite{thomas13swig}, p. 21.
Thus, consistency is ``folded in'' to the model definition.  Thus we will
describe identification in terms of a particular model, and not mention
consistency itself.  Note that (\ref{eqn:rec-sub}) is an assumption defined
using a particular graph.  If we are mistaken about the true graph, for
instance due to the presence of unaccounted hidden variables, then some parts
of (\ref{eqn:rec-sub}), and thus some parts of the consistency assumption, may
not be justifiable under the true causal model.

Identification theory for node interventions in causal DAG models is well
understood.
Given a DAG ${\cal G}$ with vertices ${\bf V}$, and two arbitrary subsets
${\bf A},{\bf Y}$ of ${\bf V}$ (not necessarily disjoint), the distribution
$p({\bf Y}({\bf a}))$ for any value assignment
${\bf a} \in {\mathfrak X}_{\bf A}$ can be
identified under the SWM as a functional of the observed
distribution $p({\bf V})$ using the \emph{extended g-formula}
\cite{robins04effects}, given by
\begin{align}
p({\bf Y}({\bf a}) = {\bf v}_{\bf Y}) =
	\sum_{{\bf v}_{{\bf V} \setminus {\bf Y}}}
	\prod_{V \in {\bf V}} p({\bf v}_V \mid
		{\bf a}_{\pa_{\cal G}(V) \cap {\bf A}},
		{\bf v}_{\pa_{\cal G}(V) \setminus {\bf A}})
\label{eqn:g-formula}
\end{align}
where ${\bf v} \in {\mathfrak X}_{\bf V}$.  A recent proof of this
appears in \cite{thomas13swig}.  Special cases of (\ref{eqn:g-formula})
where ${\bf A}$ and ${\bf Y}$ are disjoint are known as the \emph{g-formula}
\cite{robins86new}, the \emph{manipulated distribution}
\cite{spirtes01causation}, or the \emph{truncated factorization}
\cite{pearl09causality}.
Because the MWM is a causal submodel of the SWM, (\ref{eqn:g-formula}) also
holds under the MWM.

\subsection{Total Effects as Responses to Node Interventions}
\label{sec:total-as-node}

Node interventions are used to represent causal effects of treatments as a
contrast of potential outcome responses to different treatment assignments.
By considering an intervention we remove the impact of confounding via
assignment policy.  For example, consider the simple causal graph shown in Fig.
\ref{fig:triangle} (a), representing an observational study with a
single application of one of two treatments $m, m'$.  Variable $M$ is
assigned to either $m$ or $m'$ based on (observed) patient health
status ($A,W$), and survival $Y$ is measured.  Doctors follow a known policy
$p(M \mid A,W)$ in assigning $M$ where sicker patients are more likely to get
$m$.  Note that $p(\text{alive} \mid m) < p(\text{alive} \mid m')$ may
hold simply due to the assignment policy in the study which introduces
confounding by health status, even if $m$ is a better drug.

One appropriate contrast that adjusts for the influence of confounding by
health status on the effect of interest
can be expressed via node interventions, and is known as the average causal
effect (ACE): $\mathbb{E}[Y(m)] - \mathbb{E}[Y(m')]$.
This contrast can be computed from the distribution $p(Y(m))$ for all
$m \in {\mathfrak X}_M$, which is equal, under (\ref{eqn:g-formula}), to
\[
p(Y(m)) = \!\!\! \sum_{w,a,m'} p(Y \mid m,a,w) p(m' \mid a,w) p(a,w) =
	\sum_{w,a} p(Y \mid m,a,w) p(a,w).
\]
This recovers the well-known 
back-door
formula \cite{pearl09causality}.

Consider now a more complex example corresponding to
the following problem from HIV research.
In a longitudinal study, HIV patients were put on an antiretroviral drug
regimen, where the specific level of drug exposure over time was controlled by
a known policy, which was based on covariates observed for each patient.
However, the outcome of the study has been disappointing.
The question is whether this was due to the drug itself performing poorly,
or whether patient's adherence was poor.  Consider a causal graph
representing two time slices of this longitudinal study.  To avoid cluttering
the figure with too many edges, we represent the causal graph schematically
by its transitive reduction with respect to blue edges, shown
in Fig. \ref{fig:triangle} (b).
That is, the true graph ${\cal G}^*$ contains a blue arrow between any
pair of nodes $A,B$ connected by a blue directed path in
Fig. \ref{fig:triangle} (b) (and inherits all red edges as well).

Here $C_0$ is a vector of observed baseline confounders,
$A_1,A_2$ are exposures over
time, $W_1,W_2$ are drug toxicity levels at each exposure time, $C_1,C_2$
are adherence levels at each time, $Y_1,Y_2$ are outcomes, and $U$ is an
unobserved confounder.  Both red and blue arrows represent direct causation.
In general, a reasonable causal graph will contain unobserved common
causes of most vertices, but in this example we assume adherence $C_1,C_2$,
and treatments $A_1,A_2$ are only \emph{directly} affected by the observed
variables in the past, such as the toxicity level of the drug, and not by $U$.
These assumptions are represented graphically by the
absence of red edges from $U$ to $A_1,A_2,C_1,C_2$.

We first consider the total effect of the two exposures on outcome $Y_2$,
formalized as the two-exposure version of ACE.  We consider more complex
effects involving mediation by adherence in subsequent sections.
The ACE contrast is defined with respect to active treatment levels, which we
denote $a_1,a_2$, and baseline treatment levels, which we denote $a'_1,a'_2$.
In our case, the contrast is equal to
$\text{ACE} \equiv
\mathbb{E}[Y_2(a_1, a_2)] - \mathbb{E}[Y_2(a'_1, a'_2)]$.
If we were able to randomize treatment assignment to $A_1,A_2$, we could
evaluate the ACE directly from experimental data.  However, our data comes from
an observational longitudinal study, and therefore we must properly adjust
for observed confounders of the exposures.  Robins \cite{robins86new} noted
that in cases like these, assuming the underlying SWM represented by
our graph is correct, we can get a bias-free estimand of the ACE from
observational data using
the \emph{g-computation algorithm}, which in this case gives
\begin{align*}
\text{ACE} =
\sum_{y_1, c_1, w_1, c_0} &
\mathbb{E}[Y_2 \mid a_2, y_1, c_1, w_1, a_1, c_0]
p(y_1, c_1, w_1 \mid a_1, c_0)
p(c_0) -\\
\sum_{y_1, c_1, w_1, c_0} &
\mathbb{E}[Y_2 \mid a'_2, y_1, c_1, w_1, a'_1, c_0]
p(y_1, c_1, w_1 \mid a'_1, c_0)
p(c_0)
\end{align*}
This is, yet again, a special case of (\ref{eqn:g-formula}).
This estimand can be estimated via either the parametric g-formula
\cite{robins87graphical}, inverse weighting methods \cite{robins00marginal}, or
doubly robust methods \cite{robins94estimation}.

In the following section, we introduce intervention types that generalize
node interventions, and consider other types of causal effects which may be
represented as responses to such intervention types.

\section{Edge and Path Interventions}
\label{sec:edge-path}

We consider two additional types of interventions defined on graphical
features, edge and path interventions, and define responses to these
interventions using recursive substitution in a natural way.
As we shall see, responses to path interventions include many targets of
interest in causal inference, including effects of treatment on the treated,
mediated effects, and even novel effects that combine features of both.

\subsection{Edge Interventions}

For a set of edges ${\bm\alpha}$ in a DAG ${\cal G}$, define
${\mathfrak X}_{\bm\alpha} \equiv
{\mathfrak X}_{\sop_{\cal G}({\bm\alpha})}$.
In other words, ${\mathfrak X}_{\bm\alpha}$ is a Cartesian product of the
state spaces of source variables of all directed edges in
${\bm\alpha}$.

The state space of a given vertex in ${\cal G}$ may occur
\emph{multiple times} in ${\mathfrak X}_{\bm\alpha}$ if multiple edges in
${\bm\alpha}$ share the same source vertex.  We denote members of
${\mathfrak X}_{\bm\alpha}$ by lowercase Frankfurt font:
${\mathfrak a} \in {\mathfrak X}_{\bm\alpha}$.  We do so to emphasize that
elements of ${\mathfrak X}_{\bm\alpha}$
may contain multiple \emph{conflicting}
value assignments to the same random variable, unlike elements of
${\mathfrak X}_{\bf A}$.  For example, consider the graph in
Fig. \ref{fig:triangle} (a), where ${\mathfrak X}_{A} = \{ 0, 1 \}$.
Then if ${\bm\alpha} = \{ (AM)_{\to}, (AY)_{\to} \}$, a valid element
${\mathfrak a}$ of ${\mathfrak X}_{\bm\alpha}$
associates $0$ with the variable associated with the parent vertex $A$ of
$(AM)_{\to}$ and $1$ with the variable associated with the
parent vertex $A$ of $(AY)_{\to}$.  Unlike elements of
${\mathfrak X}_{\bf A}$,
it is not immediately clear what set of edges ${\mathfrak a}$ is referring to,
so we will subscript the set of edges if necessary, like so:
${\mathfrak a}_{\bm\alpha}$.

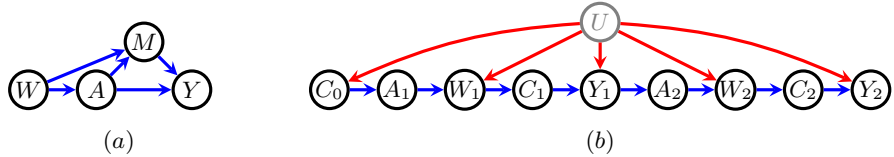
\begin{figure}
\begin{center}
\begin{tikzpicture}[>=stealth, node distance=0.9cm]
    \tikzstyle{format} = [draw, very thick, circle, minimum size=5.0mm,
	inner sep=0pt]

	\begin{scope}
		\path[->, very thick]
			node[format] (w) {$W$}
			node[format, right of=w] (a) {$A$}
			node[format, above right of=a] (m) {$M$}
			node[format, below right of=m] (y) {$Y$}

			(w) edge[blue] (a)
			(a) edge[blue] (y)
			(a) edge[blue] (m)
			(m) edge[blue] (y)
			(w) edge[blue] (m)

			node[below of=a, yshift=0.2cm, xshift=0.3cm] (l) {$(a)$}
		;
	\end{scope}
	\begin{scope}[xshift=4.0cm]
		\path[->, very thick]
			node[format] (c0) {$C_0$}
			node[format, right of=c0] (a1) {$A_1$}
			node[format, right of=a1] (w1) {$W_1$}
			node[format, right of=w1] (c1) {$C_1$}
			node[format, right of=c1] (y1) {$Y_1$}

			node[format, right of=y1] (a2) {$A_2$}
			node[format, right of=a2] (w2) {$W_2$}
			node[format, right of=w2] (c2) {$C_2$}
			node[format, right of=c2] (y2) {$Y_2$}

			node[format, above of=y1, gray] (u) {$U$}

				(c0) edge[blue] (a1)
				(a1) edge[blue] (w1)
				(w1) edge[blue] (c1)
				(c1) edge[blue] (y1)
				(y1) edge[blue] (a2)
				(a2) edge[blue] (w2)
				(w2) edge[blue] (c2)
				(c2) edge[blue] (y2)

				(u) edge[->, bend right=10, red] (c0)
				(u) edge[->, red] (w1)
				(u) edge[->, red] (y1)
				(u) edge[->, red] (w2)
				(u) edge[->, bend left=10, red] (y2)

			node[below of=y1, yshift=0.2cm] (l) {$(b)$}
			;
	\end{scope}

\end{tikzpicture}
\end{center}
\caption{(a) A simple causal graph.
(b) The transitive closure with respect to blue arrows
of this graph is a causal graph representing two time slices of
a longitudinal study in HIV research.
}
\label{fig:triangle}
\end{figure}

We call a forced assignment of variables corresponding to source vertices
of edges from ${\bm\alpha}$ to an element of ${\mathfrak X}_{\bm\alpha}$ an
\emph{edge intervention}.  An edge intervention which assigns ${\bm\alpha}$
to an element ${\mathfrak a}_{\bm\alpha} \in {\mathfrak X}_{\bm\alpha}$ will
be denoted by $\eta_{{\mathfrak a}_{\bm\alpha}}$.
As with elements of ${\mathfrak X}_{\bf A}$, we
denote a restriction of ${\mathfrak a}$ by a set subscript.  That is, if
${\mathfrak a}_{\bm\alpha} \in {\mathfrak X}_{\bm\alpha}$, and
${\bm\beta} \subseteq {\bm\alpha}$, then ${\mathfrak a}_{\bm\beta}$ is
a restriction of ${\mathfrak a}$ to variables corresponding to source vertices
of ${\bm\beta}$.

We define responses of outcomes to edge
interventions in the natural way using recursive substitution, the
potential outcomes of the form $V({\bf v}_{\pa_{\cal G}(V)})$, and a joint
distribution over these potential outcomes.
For every $V \in {\bf V}$, a set of edges ${\bm\alpha}$ in a DAG ${\cal G}$,
and an element ${\mathfrak a}_{\bm\alpha} \in {\mathfrak X}_{\bm\alpha}$,
we define the response of $V$ to $\eta_{{\mathfrak a}_{\bm\alpha}}$ as
\begin{align}
V({\mathfrak a}_{\bm\alpha}) \equiv
V({\mathfrak a}_{\{ (*V)_{\to} \in {\bm\alpha} \}},
\{ \pa^{\overline{\bm\alpha}}_{\cal G}(V) \}({\mathfrak a}_{\bm\alpha}))
\label{eqn:edge-rec-sub}
\end{align}
where $\pa^{\overline{\bm\alpha}}_{\cal G}(V) \equiv
	\{ A \in \pa_{\cal G}(V) \mid
	(AV)_{\to} \not\in {\bm\alpha} \}$.

In words, this states that the response of $V$ to
$\eta_{{\mathfrak a}_{\bm\alpha}}$, where
${\mathfrak a}_{\bm\alpha} \in {\mathfrak X}_{\bm\alpha}$ is
defined as the potential outcome where all parents of $V$ along edges in
${\bm\alpha}$ are assigned an appropriate value from
${\mathfrak a}_{\bm\alpha}$,
and all other parents are assigned whatever value they would have attained
under an edge intervention $\eta_{{\mathfrak a}_{\bm\alpha}}$ (these are defined
recursively, and the definition terminates because of the lack of directed
cycles in ${\cal G}$).

As before, given a set ${\bf Y} = \{ Y_1, \ldots, Y_k \}$ of random variables,
we denote $\{ Y_1({\mathfrak a}_{\bm\alpha}), \ldots,
Y_k({\mathfrak a}_{\bm\alpha}) \}$ by
${\bf Y}({\mathfrak a}_{\bm\alpha})$ or
$\{ {\bf Y} \}({\mathfrak a}_{\bm\alpha})$.

\subsection{Direct and Indirect Effects as Responses to Edge Interventions}
\label{sec:dir-indir-edge}

Just as responses to node interventions can be used to represent total causal
effects, so can responses to edge interventions be used to represent
direct and indirect effects.  Consider again Fig. \ref{fig:triangle} (a),
but now assume $A$ is the treatment (one of two drugs $a,a'$),
$Y$ is the outcome (survival), and $M$ is a dangerous side effect that
mediates some of the effect of $A$ on $Y$.

We may be interested in how much of the total effect, as formalized via the ACE
contrast $\mathbb{E}[Y(a)] - \mathbb{E}[Y(a')]$, can be attributed to the
direct effect of the drugs on $Y$, and how much to the mediated effect via the
side effect $M$.  To formalize this, we want to consider how $Y$ varies if
we can set treatments separately for the purposes of the direct causal pathway
represented by $(AY)_{\to}$ and the pathway mediated by $M$, represented by
$(AM)_{\to}$.  This is precisely what edge interventions allow us to do.
Consider $\eta_{\mathfrak a}$ that sets
$(AM)_{\to}$ to $a$ and $(AY)_{\to}$ to $a'$.  Then (\ref{eqn:edge-rec-sub})
implies $Y({\mathfrak a}) = Y(a',M(a))$.
We can use this type of response to define the direct effect as the contrast
$\mathbb{E}(Y(a)) - \mathbb{E}[Y(a',M(a))]$, and the indirect effect as
the contrast $\mathbb{E}[Y(a',M(a))] - \mathbb{E}[Y(a')]$.
Note that the ACE is a sum of the direct and indirect effect contrasts above.

The idea of using nested responses like $Y(a',M(a))$ to represent
direct and indirect effects for mediation analysis appears in
\cite{robins92effects}, and is discussed in
the context of graphical causal models in \cite{pearl01direct}.  Our
contribution is to aid interpretability of such nested responses by viewing
them as responses to interventions associated with edges, graphical features
intuitively associated with effects we are trying to formalize.


Just as it is good practice to only
discuss node interventions in settings where it is possible, at least in
principle, to assign treatment by fiat, so it is good practice to only
discuss edge interventions in settings where it is possible, at least in
principle, to conceive of assigning only those components of the overall
treatment that influences a particular direct consequence.  For instance,
if smoking affects cardiovascular disease only by means of nicotine content,
then we might simulate the absence of smoking, but only for the purposes of
cardiovascular disease, by assigning the ``treatment''
of nicotine-free cigarettes.
In this paper, we leave the issues of applicability of edge interventions and
mediation analysis in particular settings aside \cite{robins10alternative},
and consider, in subsequent sections, questions of identification and
the form of resulting functionals.

\subsection{Path Interventions}

We are going to define responses to path interventions, which associate a
set of directed paths with values of sources of every path in the set.
A response to a path intervention will behave as if the source of a path were
set to a particular value, \emph{but only for the purposes of a particular
outgoing directed path}.  This behavior generalizes the behavior of edge
interventions, where vertices may behave differently with respect to different
outgoing edges.  Path interventions serve as a very general, graphical
representation of counterfactual quantities associated with causal pathways
that generalizes both edge and path interventions.
The supplementary materials \cite{shpitser15nep}
contain our rationale for the use of path
interventions versus simpler or more algebraic approaches to representing
counterfactuals of interest.

To make sure we end up with well-defined responses, we insist on a property
for sets of directed paths called \emph{properness}.
A set of directed paths ${\bm\alpha}$ in a DAG ${\cal G}$ is called
\emph{proper} if no path in ${\bm\alpha}$ is a prefix subpath of
another path in ${\bm\alpha}$.  A set consisting of a single path is
always proper, as is a set of length 1 paths (e.g. a set of edges).
In the remainder of the paper, when we say ``a set of paths ${\bm\alpha}$,''
we mean a proper set of directed paths.

For a set of paths ${\bm\alpha}$ in a DAG ${\cal G}$, define
${\mathfrak X}_{\bm\alpha} \equiv
	{\mathfrak X}_{\sop_{\cal G}({\bm\alpha})}$.
In other words, ${\mathfrak X}_{\bm\alpha}$ is a Cartesian product of the
state spaces of source variables of all directed paths in ${\bm\alpha}$.
Since sets of paths clearly generalize sets of edges, the same issue occurs
where a single vertex in ${\cal G}$ may occur multiple times in
${\mathfrak X}_{\bm\alpha}$.  As before, to emphasize this, we will denote
elements of ${\mathfrak X}_{\bm\alpha}$ by lowercase Frankfurt font:
${\mathfrak a}$, possibly indexed by a path set subscript:
${\mathfrak a}_{\bm\alpha}$.

We denote a forced assignment of variables corresponding to source vertices
of paths from ${\bm\alpha}$ to an element of ${\mathfrak X}_{\bm\alpha}$ as a
\emph{path intervention}.  A path intervention which assigns ${\bm\alpha}$
to an element ${\mathfrak a}_{\bm\alpha} \in {\mathfrak X}_{\bm\alpha}$ will
denoted by $\pi_{{\mathfrak a}_{\bm\alpha}}$.
As with elements of ${\mathfrak X}_{\bf A}$, we
denote a restriction of ${\mathfrak a}$ by a set subscript.  That is, if
${\mathfrak a}_{\bm\alpha} \in {\mathfrak X}_{\bm\alpha}$, and
${\bm\beta} \subseteq {\bm\alpha}$, then ${\mathfrak a}_{\bm\beta}$ is
a restriction of ${\mathfrak a}$ to variables corresponding to source vertices
of ${\bm\beta}$.

As was the case with node and edge interventions, our definition of path
interventions will be inductive.  To get the induction to work,
we need to consider how treatments affect the response via pathways that end
in a particular edge.  We use the following definition to formalize this.
Given a set of paths ${\bm\alpha}$ in a DAG ${\cal G}$, and
an edge $(WY)_{\to}$, define a \emph{funnel operator}
$\lhd_{(WY)_{\to}}$ which maps from ${\bm\alpha}$ to the set of paths
$\lhd_{(WY)_{\to}}({\bm\alpha})$ obtained from ${\bm\alpha}$ by
replacing any path of the form $(A,\ldots,W,Y)_{\to}$ by $(A,\ldots,W)_{\to}$,
by removing all paths containing $W$ but no suffix $(WY)_{\to}$,
and keeping all other paths intact.

\begin{lem}
If ${\bm\alpha}$ is proper, then for any edge $(WY)_{\to}$, so is
$\lhd_{(WY)_{\to}}({\bm\alpha})$.
\label{lem:proper-funnel}
\end{lem}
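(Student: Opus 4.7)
The plan is a short proof by contradiction and case analysis. Suppose, toward contradiction, that $\lhd_{(WY)_{\to}}({\bm\alpha})$ is not proper, so there exist distinct paths $\pi_1,\pi_2 \in \lhd_{(WY)_{\to}}({\bm\alpha})$ with $\pi_1$ a prefix subpath of $\pi_2$. By the definition of the funnel operator, each $\pi_i$ arises from some $\pi_i' \in {\bm\alpha}$ in exactly one of two ways: either $\pi_i' $ was kept intact (in which case $\pi_i = \pi_i'$ does not contain $W$), or $\pi_i'$ had the form $(\ldots,W,Y)_{\to}$ and got truncated to $\pi_i = (\ldots,W)_{\to}$ ending at $W$. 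I would split into the four resulting combinations.

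First, if both $\pi_1',\pi_2'$ are kept intact, then $\pi_1 = \pi_1'$ and $\pi_2 = \pi_2'$ are already in ${\bm\alpha}$, so the prefix relation contradicts properness of ${\bm\alpha}$ directly. Second, if $\pi_1'$ is truncated but $\pi_2'$ is kept intact, then $\pi_1$ ends at $W$, so $\pi_2$ would have to contain $W$ as an internal vertex; this contradicts the fact that kept-intact paths contain no $W$. Third, if $\pi_1'$ is kept intact while $\pi_2'$ is truncated, then $\pi_1$ is a prefix of the truncated $\pi_2 = (\ldots,W)_{\to}$, hence also a prefix of the original $\pi_2' = (\ldots,W,Y)_{\to}$, again contradicting the properness of ${\bm\alpha}$.

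The only slightly delicate case is when both $\pi_1'$ and $\pi_2'$ are truncated, so both $\pi_1$ and $\pi_2$ end at $W$. Here I would use the structural observation that a strict prefix subpath of a path ending at $W$ must end at some vertex occurring strictly before $W$, so it cannot itself end at $W$; therefore $\pi_1$ being a prefix of $\pi_2$ forces $\pi_1 = \pi_2$, and then $\pi_1' = \pi_2'$ follows since both are the unique extension by the edge $(WY)_{\to}$. This contradicts the assumption that $\pi_1$ and $\pi_2$ are distinct.

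I expect the main (and only real) obstacle to be keeping the case analysis clean and remembering that the funnel operator also deletes paths that pass through $W$ without ending in $(WY)_{\to}$; this deletion is what eliminates every potentially problematic pairing except for the trivial equality in the last case. Once those bookkeeping points are in place, each case reduces in one or two lines to a prefix relation already present in ${\bm\alpha}$, contradicting its assumed properness.
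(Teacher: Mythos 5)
Your proof is correct and follows essentially the same route as the paper's: a contradiction argument with a four-way case analysis over the two witnessing paths, reducing each case either to a prefix relation inside ${\bm\alpha}$ (contradicting its properness) or to the observation that two truncated paths ending at $W$ in a prefix relation must coincide. The only cosmetic difference is that you organize the cases by how each image arose under the funnel operator (kept intact versus truncated) rather than by membership in ${\bm\alpha}$, but since kept paths contain no $W$ and truncations cannot lie in a proper ${\bm\alpha}$, the two decompositions coincide and your individual case arguments match the paper's.
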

Given a path intervention $\pi$ that assigns ${\bm\alpha}$ to
${\mathfrak a}_{\bm\alpha}$, and a funnel operator $\lhd_{(WY)_{\to}}$,
we consider \emph{funneled} path interventions on
$\lhd_{(WY)_{\to}}({\bm\alpha})$.  For every
$\alpha$ such that $\lhd_{(WY)_{\to}}(\alpha) = \alpha$, the funneled
path intervention assigns $\alpha$ to ${\mathfrak a}_{\alpha}$, that is it
keeps the same value assignment as the original path intervention.
For the path $\alpha \equiv (A\ldots W,Y)_{\to}$
the funneled path intervention assigns
$\lhd_{(WY)_{\to}}(\alpha)$ to ${\mathfrak a}_{(A\ldots WY)_{\to}}$, that is
assigns the value given by the original intervention to $(A\ldots WY)_{\to}$.
We denote such an assignment by
${\mathfrak a}_{\lhd_{(WY)_{\to}}({\bm\alpha})}$.

Our insistence on ${\bm\alpha}$ being proper, together with
Lemma \ref{lem:proper-funnel}, means that there is never any ambiguity in
defining the funneled path intervention.  That is, it is
never the case that two distinct paths in ${\bm\alpha}$ are of the form
$(A\ldots W)_{\to}$ and $(A\ldots WY)_{\to}$.  If such a pair of paths were
allowed, the difficulty would then be that these paths can both
reasonably be claimed to represent an effect of setting $A$ along the path
$(A\ldots WY)_{\to}$, while potentially disagreeing on what that setting is.

We are now ready to define responses to path interventions.  For every
$V \in {\bf V}$, a proper set of directed paths ${\bm\alpha}$ in a DAG
${\cal G}$, and an element
${\mathfrak a}_{\bm\alpha} \in {\mathfrak X}_{\bm\alpha}$, we define the
response of $V$ to $\pi_{{\mathfrak a}_{\bm\alpha}}$ as
\begin{align}
V({\mathfrak a}_{\bm\alpha}) \equiv
	V({\mathfrak a}_{(*V)_{\to} \in {\bm\alpha}},
	\{ W( {\mathfrak a}_{\lhd_{(WY)_{\to}}({\bm\alpha})} ) \mid
	W \in \pa_{\cal G}^{\overline{\bm\alpha}}(V) \})
\label{eqn:path-rec-sub}
\end{align}
where
$\pa_{\cal G}^{\overline{\bm\alpha}}(V) \equiv \{ W \in \pa_{\cal G}(V) \mid
	(WV)_{\to} \not\in {\bm\alpha} \}$.

In words, this states that the response of $V$ to
$\pi_{{\mathfrak a}_{\bm\alpha}}$, where ${\mathfrak a}_{\bm\alpha}
\in {\mathfrak X}_{\bm\alpha}$ is defined as the potential outcome where all
parents of $V$ along edges which are (length 1) paths in ${\bm\alpha}$ are
assigned an appropriate value from ${\mathfrak a}_{\bm\alpha}$, and
all other parents $W$ are assigned whatever value they would have attained under
the funneled path intervention associated with a funnel operator
for the edge between that parent $W$ and $V$.
Note that the definition is inductive for such parents, with the result of
applying a funnel operator serving as the new set of paths.
Lemma \ref{lem:proper-funnel} ensures that properness propagates to this set,
and thus the overall response is well-defined.

For example, if $\pi_{\mathfrak a}$ assigns $w$ to $(WAMY)_{\to}$ in
Fig. \ref{fig:triangle} (a), then $Y({\mathfrak a})$ is defined by
(\ref{eqn:path-rec-sub}) to equal $Y(M(A(w)),A)$.
We will use a notational shorthand for responses to path interventions,
where rather than listing nested responses in parentheses after the response,
we list the paths with the source node replaced by the intervened on value.
For example, we write $Y({\mathfrak a}) = Y(M(A(w)),A)$ above as
$Y((wAMY)_{\to})$.  We use the same shorthand for responses to
edge interventions.


As before, given a set ${\bf Y} = \{ Y_1, \ldots, Y_k \}$ of random variables,
we denote $\{ Y_1({\mathfrak a}_{\bm\alpha}), \ldots,
Y_k({\mathfrak a}_{\bm\alpha}) \}$ by
${\bf Y}({\mathfrak a}_{\bm\alpha})$ or
$\{ {\bf Y} \}({\mathfrak a}_{\bm\alpha})$.

\subsection{Responses to Path Interventions to Natural Values}
\label{sec:natural}

So far we have defined path interventions as a mapping from 
a proper set of directed paths ${\bm\alpha}$ to values in
${\mathfrak X}_{\bm\alpha}$.  However, we might be interested in considering
responses to interventions that assign a variable not to a specific
constant value, but to a value the variable would have attained under a no
intervention regime.  For instance, this might happen if the baseline exposure
is one received by the general population, not a specific exposure level
assigned by the experimenter, or if the effect of multiple treatments on the
treated is of interest.  In the context of node interventions, this situation
was discussed in \cite{population08hubbard}.  In order for
responses to path interventions to include this case, we must extend the
definition of path interventions to include intervening to \emph{natural}
values, that is values attained by variables under no interventions.
Allowing arbitrary variables to be set
to natural values may lead to identification difficulties even in very simple
cases.  Consider the following response to a node intervention in the MWM
given by Fig. \ref{fig:triangle} (a), $\{ A, Y \}(A,w)$.  In words, this is
the joint response of $A$ and $Y$ to an intervention where $W$ is set to value
$w$, and $A$ is set to the natural value it attains under no interventions.
The definition of responses to node interventions via recursive substitution
shows that $\{ A, Y \}(A,w) = Y(A),A(w)$.  However, the distribution
$p(A,A(w))$ is not identified under the MWM for the graph in
Fig. \ref{fig:triangle} (a), see Lemma \ref{lem:chen}, and thus neither is
the joint response in question.

To avoid this difficulty, we consider only a special subset of path
interventions containing settings on natural values.  This special subset can
safely be rephrased in such a way that only interventions on constants remain
explicit.  To define this special subset, we need a few preliminary definitions.

For a node $A$, and a directed path (or an edge) $\alpha$
with source $A$, define the \emph{extended state space} as follows
${\mathfrak X}^*_A \equiv {\mathfrak X}_A \cup \{ A \}$, and
${\mathfrak X}^*_{\alpha} \equiv {\mathfrak X}_{\alpha} \cup \{ A \}$.
We define the extended state space for sets of nodes, edges, and paths
disjunctively as before.
An intervention on an extended state space is allowed on either any constant
value, or on the ``natural value.''

Given a set of paths ${\bm\alpha}$ and a response set
${\bf Y}$, we call a directed path $\alpha$ \emph{relevant} for ${\bf Y}$
given ${\bm\alpha}$ if $\alpha = (A \ldots Y)_{\to}$, where $Y \in {\bf Y}$,
and no path in ${\bm\alpha}$ is a subpath of $\alpha$ except possibly a prefix
of $\alpha$.  We denote the set of all relevant paths for ${\bf Y}$ given
${\bm\alpha}$ in ${\cal G}$ by $\rel_{\cal G}({\bf Y}\mid{\bm\alpha})$.

Paths relevant for ${\bf Y}$ given ${\bm\alpha}$ are those paths consisting
of sequences of intermediate responses that arise in the inductive definition
(\ref{eqn:path-rec-sub}).  For example, assume we are interested in
the singleton response set $\{ Y \}$ and a singleton path set
$\{ (WAMY)_{\to} \}$ in Fig. \ref{fig:triangle} (a).
Then defining $Y((wAMY)_{\to})$ for a particular
$w$ via (\ref{eqn:path-rec-sub}) entails defining intermediate responses
$M((wAM)_{\to})$ and $A((wA)_{\to})$.  The sequence of vertices $(A,M,Y)$
are all linked by directed edges by (\ref{eqn:path-rec-sub}), and $(AMY)_{\to}$
is relevant for $\{ Y \}$ given $\{ (WAMY)_{\to} \}$.
Similarly, $(WAMY)_{\to}$ and $(WAY)_{\to}$ are relevant for
$\{ Y \}$ given $\{ (WAMY)_{\to} \}$.


We now give two useful results about relevant paths.

\begin{lem}
If $\alpha \in \rel_{\cal G}({\bf Y} \mid {\bm\alpha})$, then
$\beta \in \rel_{\cal G}({\bf Y} \mid {\bm\alpha})$
for any suffix subpath $\beta$ of $\alpha$.
\label{lem:suffix-rel}
\end{lem}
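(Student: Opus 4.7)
The plan is to argue directly by unfolding the definition of relevance, using the fact that the subpath relation is transitive and that each vertex appears at most once in a path. The lemma is essentially a bookkeeping result, so there is no deep obstacle; the only subtlety is carefully relating ``prefix of $\alpha$'' and ``subpath of $\beta$'' when $\beta$ is a proper suffix of $\alpha$.

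First I would observe that since $\beta$ is a suffix subpath of $\alpha$, it shares the sink of $\alpha$. Because $\alpha \in \rel_{\cal G}({\bf Y}\mid{\bm\alpha})$, that sink lies in ${\bf Y}$, so $\beta$ is a directed path ending in ${\bf Y}$, satisfying the first requirement of relevance. It remains to check the second requirement: that no $\gamma \in {\bm\alpha}$ is a subpath of $\beta$ except possibly a prefix of $\beta$.

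The key step is then to fix an arbitrary $\gamma \in {\bm\alpha}$ which is a subpath of $\beta$, and show $\gamma$ must be a prefix of $\beta$. Since the subpath relation is transitive and $\beta$ is itself a subpath of $\alpha$, $\gamma$ is also a subpath of $\alpha$. Relevance of $\alpha$ then forces $\gamma$ to be a prefix of $\alpha$, so in particular $\sop_{\cal G}(\gamma) = \sop_{\cal G}(\alpha)$.

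The final step is to observe that $\sop_{\cal G}(\gamma)$ must appear in the vertex sequence of $\beta$ (as $\gamma$ is a subpath of $\beta$). Since vertices of $\alpha$ are pairwise distinct and $\beta$ is a suffix of $\alpha$, the only way $\sop_{\cal G}(\alpha)$ can appear in $\beta$ is if $\beta = \alpha$. In that case $\gamma$ being a prefix of $\alpha$ is the same as $\gamma$ being a prefix of $\beta$, completing the argument. (If one prefers a direct contrapositive: if $\beta$ is a proper suffix, no $\gamma \in {\bm\alpha}$ can be a subpath of $\beta$ at all, so the exception clause is vacuous.) The hardest ``work'' in the proof is simply the observation that a prefix of $\alpha$ sitting inside a suffix of $\alpha$ would force a repeated vertex, which is excluded by the definition of a path.
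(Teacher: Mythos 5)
Your proof is correct, and it is the natural definitional argument: a subpath of the suffix $\beta$ is a subpath of $\alpha$, hence (by relevance of $\alpha$) a prefix of $\alpha$, and the no-repeated-vertex property of paths forces $\beta = \alpha$ in that case, so relevance of $\beta$ holds either way. This matches the paper's own (supplementary) treatment of this bookkeeping lemma in approach and level of detail, so there is nothing to add.
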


\begin{lem}
If 
${\bm\beta} \subseteq {\bm\alpha}$, then for any ${\bf Y}$,
$\rel_{\cal G}({\bf Y}\mid{\bm\alpha}) \subseteq
\rel_{\cal G}({\bf Y}\mid{\bm\beta})$.
\label{lem:subset-rel}
\end{lem}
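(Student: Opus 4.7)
The plan is a direct unpacking of the definition of relevance, exploiting the fact that the relevance condition is monotone in the set of paths: it forbids certain subpath configurations, and shrinking the forbidden family can only make the condition easier to satisfy.

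First I would fix an arbitrary $\gamma \in \rel_{\cal G}({\bf Y} \mid {\bm\alpha})$ and write $\gamma = (A \ldots Y)_{\to}$ with $Y \in {\bf Y}$. The endpoint condition $Y \in {\bf Y}$ is independent of the intervention path set, so it already witnesses the first half of what is required for membership in $\rel_{\cal G}({\bf Y} \mid {\bm\beta})$.

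Next I would verify the second half: no path in ${\bm\beta}$ is a non-prefix subpath of $\gamma$. Because $\gamma \in \rel_{\cal G}({\bf Y} \mid {\bm\alpha})$, by definition no path in ${\bm\alpha}$ is a subpath of $\gamma$ except possibly as a prefix of $\gamma$. Since ${\bm\beta} \subseteq {\bm\alpha}$, every element of ${\bm\beta}$ is also an element of ${\bm\alpha}$, so the same restriction holds verbatim when ${\bm\alpha}$ is replaced by ${\bm\beta}$. Hence $\gamma \in \rel_{\cal G}({\bf Y} \mid {\bm\beta})$, which gives the inclusion.

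There is essentially no obstacle; the only small care needed is to preserve the ``except possibly a prefix'' clause through the inclusion, but this is automatic since the clause is a universally allowed exception and does not depend on which of ${\bm\alpha}, {\bm\beta}$ we consider. No induction on paths, no use of properness, and no appeal to Lemma \ref{lem:suffix-rel} is needed.
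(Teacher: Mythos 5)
Your proof is correct and is essentially the paper's own argument: the relevance condition is anti-monotone in the intervening path set, so restricting from ${\bm\alpha}$ to ${\bm\beta}\subseteq{\bm\alpha}$ only weakens the "no non-prefix subpath" requirement while the endpoint condition $Y\in{\bf Y}$ is unaffected. Nothing further is needed.
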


A set of interventions may not all have an effect on a response, due to
constraints of the model.
For instance, since $Y(a,m,w) \neq Y(a',m,w)$ but $Y(a,m,w) = Y(a,m,w')$ for
any $m,a,w,a',w'$ in Fig. \ref{fig:triangle} (a), $A$ has an effect on $Y$, but
$W$ does not, given that we also intervene on $A$ and $M$.
We extend this notion to path interventions, and call those
paths with sources that actually have an effect on the response, given
interventions on other paths, \emph{live}.
More precisely, given a proper set of paths ${\bm\alpha}$ and a response set
${\bf Y}$, we call a path $\alpha \in {\bm\alpha}$ \emph{live} for ${\bf Y}$
given ${\bm\alpha}$ if there is an element of
$\rel_{\cal G}({\bf Y}\mid{\bm\alpha})$ containing $\alpha$ as a prefix.

Consider the maximal subset of ${\bm\alpha}$ consisting of paths in
${\bm\alpha}$ live for ${\bf Y}$ given ${\bm\alpha}$, or
${\bm\alpha}_{\bf Y} \equiv \{ \alpha \in {\bm\alpha} \mid \alpha
	\text{ live for } {\bf Y} \text{ given } {\bm\alpha} \}$.
We say a set of directed paths ${\bm\alpha}$ is live for ${\bf Y}$
if ${\bm\alpha} = {\bm\alpha}_{\bf Y}$.  When discussing path interventions,
we can always restrict our attention to sets of paths live for ${\bf Y}$
without loss of generality, due to the following result.

\begin{lem}
For any ${\bf Y}$ and ${\bm\alpha}$ proper for ${\bf Y}$,
$\rel_{\cal G}({\bf Y}\mid{\bm\alpha}) =
\rel_{\cal G}({\bf Y}\mid{\bm\alpha}_{\bf Y})$,
$({\bm\alpha}_{\bf Y})_{\bf Y} = {\bm\alpha}_{\bf Y}$, and in addition,
for any ${\mathfrak a}_{\bm\alpha}$,
$p({\bf Y}({\mathfrak a}_{\bm\alpha})) =
p({\bf Y}({\mathfrak a}_{{\bm\alpha}_{\bf Y}}))$.
\label{lem:path-live}
\end{lem}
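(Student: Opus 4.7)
The plan is to prove the three conclusions in order, with (iii) being the most substantive. For (i), the inclusion $\rel_{\cal G}({\bf Y}\mid{\bm\alpha}) \subseteq \rel_{\cal G}({\bf Y}\mid{\bm\alpha}_{\bf Y})$ follows immediately from Lemma \ref{lem:subset-rel} applied to ${\bm\alpha}_{\bf Y} \subseteq {\bm\alpha}$. For the reverse inclusion, I would fix $\gamma \in \rel_{\cal G}({\bf Y}\mid{\bm\alpha}_{\bf Y})$ and suppose for contradiction that some $\beta \in {\bm\alpha}$ is a non-prefix subpath of $\gamma$; such a $\beta$ must lie in ${\bm\alpha} \setminus {\bm\alpha}_{\bf Y}$, since $\gamma$ is relevant given ${\bm\alpha}_{\bf Y}$. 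Choose such a $\beta$ whose source occurs latest along $\gamma$, and let $\gamma'$ be the suffix of $\gamma$ starting at that source. Any non-prefix subpath of $\gamma'$ in ${\bm\alpha}$ would have an even later source along $\gamma$, violating the maximality of $\beta$ if it lay in ${\bm\alpha} \setminus {\bm\alpha}_{\bf Y}$, and violating $\gamma \in \rel_{\cal G}({\bf Y}\mid{\bm\alpha}_{\bf Y})$ if it lay in ${\bm\alpha}_{\bf Y}$. So $\gamma' \in \rel_{\cal G}({\bf Y}\mid{\bm\alpha})$, and since $\beta$ is a prefix of $\gamma'$, $\beta$ is live for ${\bf Y}$ given ${\bm\alpha}$, so $\beta \in {\bm\alpha}_{\bf Y}$, contradicting the choice of $\beta$.

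Claim (ii) then follows quickly from (i): for $\alpha \in {\bm\alpha}_{\bf Y}$, pick a witness $\gamma \in \rel_{\cal G}({\bf Y}\mid{\bm\alpha})$ having $\alpha$ as prefix; by (i), $\gamma \in \rel_{\cal G}({\bf Y}\mid{\bm\alpha}_{\bf Y})$, so $\alpha$ is live for ${\bf Y}$ given ${\bm\alpha}_{\bf Y}$, i.e., $\alpha \in ({\bm\alpha}_{\bf Y})_{\bf Y}$. The reverse containment is by definition.

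For (iii), I would show the pointwise equality $V({\mathfrak a}_{\bm\alpha}) = V({\mathfrak a}_{{\bm\alpha}_{\bf Y}})$ as random variables for every $V \in {\bf Y}$, which immediately yields equality of the joints and hence of their distributions. Repeatedly unfolding (\ref{eqn:path-rec-sub}) produces a nested counterfactual whose leaves, by induction on depth and using Lemma \ref{lem:proper-funnel} to maintain properness through each funneling step, are indexed by the relevant paths $\delta \in \rel_{\cal G}({\bf Y}\mid{\bm\alpha})$. At such a leaf, either $\delta$ has a prefix $\beta \in {\bm\alpha}$ (necessarily unique by properness), in which case the installed value is ${\mathfrak a}_{\beta}$, or no such prefix exists and the leaf is a root response of ${\cal G}$. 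When the prefix $\beta$ exists, $\delta$ itself extends $\beta$ into a relevant path, so $\beta$ is live and lies in ${\bm\alpha}_{\bf Y}$. Running the same unfolding with ${\bm\alpha}_{\bf Y}$ in place of ${\bm\alpha}$ produces a tree indexed by $\rel_{\cal G}({\bf Y}\mid{\bm\alpha}_{\bf Y})$, which equals $\rel_{\cal G}({\bf Y}\mid{\bm\alpha})$ by (i), with matching installed values at each leaf; the two expressions therefore define the same random variable.

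The main obstacle is the inductive bookkeeping in (iii): one must precisely track how the funnel operator transforms the active path set at each recursive step so that the expression tree is cleanly indexed by relevant paths, and verify that properness propagates throughout (this is where Lemma \ref{lem:proper-funnel} is essential). Properness is the ingredient that makes each leaf's installed value unambiguous, and is the bridge by which the combinatorial equality in (i) transfers to equality of the responses.
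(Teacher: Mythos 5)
Your proposal is correct and follows essentially the same route as the paper's proof: one inclusion of the relevant-path equality via Lemma \ref{lem:subset-rel}, the reverse via the same ``latest source'' extremal argument on a subpath in ${\bm\alpha}\setminus{\bm\alpha}_{\bf Y}$, part (ii) as an immediate corollary, and part (iii) by matching the unfolded nested counterfactuals leaf-by-leaf through the correspondence with relevant paths and the observation that any prefix in ${\bm\alpha}$ of a relevant path is necessarily live. The only difference is presentational: the paper phrases the last step in terms of its defined response graph and its bijection with $\rel_{\cal G}({\bf Y}\mid{\bm\alpha})$, which you reconstruct inline.
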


We now show that we can either ignore interventions to natural values in a
response to a path intervention, or the response is not identified under the
MWM.  The set of paths for which the former is true for the response ${\bf Y}$
will be called \emph{natural} for ${\bf Y}$.
Due to this result, we do not need to consider interventions to natural
values explicitly.

\begin{dfn}
Let ${\bm\alpha}$ be live for ${\bf Y}$.
Let $\pi_{{\mathfrak a}_{\bm\alpha}}$ be a path intervention in ${\cal G}$
where a subset ${\bm\alpha}^* \subseteq {\bm\alpha}$ is assigned constant
values, and ${\bm\alpha} \setminus {\bm\alpha}^*$ is assigned natural values.
Then if no element of $\rel_{\cal G}({\bf Y}\mid{\bm\alpha}^*)$
with a prefix subpath in ${\bm\alpha}^*$ contains a subpath in
${\bm\alpha} \setminus {\bm\alpha}^*$, we say $\pi$ is \emph{natural} for
${\bf Y}$.
\end{dfn}

\begin{lem}
Let $\pi_{{\mathfrak a}_{\bm\alpha}}$ be a path intervention
natural for ${\bf Y}$, and ${\bm\alpha}^* \subseteq {\bm\alpha}$ is
all paths assigned constant values by $\pi$.
Then $p({\bf Y}({\mathfrak a}_{\bm\alpha})) =
p({\bf Y}({\mathfrak a}_{{\bm\alpha}^*}))$.
\label{lem:reduc-nat}
\end{lem}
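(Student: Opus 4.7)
The plan is to show $V({\mathfrak a}_{\bm\alpha}) = V({\mathfrak a}_{{\bm\alpha}^*})$ for each $V \in {\bf Y}$ as functions of the canonical potential outcomes $\{V({\bf v}_{\pa_{\cal G}(V)})\}$; this equality as random variables immediately yields the asserted distributional identity. Write ${\bm\alpha}^\circ \equiv {\bm\alpha} \setminus {\bm\alpha}^*$ for the paths assigned natural values. Recursive substitution (\ref{eqn:path-rec-sub}) unfolds each side into a nested counterfactual whose variable mentions correspond to elements of $\rel_{\cal G}({\bf Y}\mid\cdot)$: each mention sits at the source of its relevance path, and the chain of nesting traces that relevance path forward through ${\cal G}$ to some $Y \in {\bf Y}$.

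By Lemma \ref{lem:subset-rel}, $\rel_{\cal G}({\bf Y}\mid{\bm\alpha}) \subseteq \rel_{\cal G}({\bf Y}\mid{\bm\alpha}^*)$, and the extra relevance paths appearing on the ${\bm\alpha}^*$ side are precisely those containing some element of ${\bm\alpha}^\circ$ as a non-prefix subpath. Concretely, at every mention in the ${\bm\alpha}$-unfolding where $\pi_{\bm\alpha}$ performs a natural-value substitution via some $\alpha \in {\bm\alpha}^\circ$, inserting the observed source $X = \sop_{\cal G}(\alpha)$, the ${\bm\alpha}^*$-unfolding instead opens a sub-expression at $X$ whose mentions correspond to relevance paths containing $\alpha$ as a suffix subpath. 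The pointwise identity therefore reduces to showing each such sub-expression evaluates to the observed $X$. Provided no intervention occurs inside the sub-expression, iterated consistency ($V = V(\pa_{\cal G}(V))$, a consequence of (\ref{eqn:rec-sub})) collapses it down to $X$, matching the natural-value substitution exactly.

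The key step is to rule out interventions inside these sub-expressions. Suppose for contradiction that a mention $V''$ inside one carries an intervention from some $\gamma \in {\bm\alpha}^*$, and let $\beta'' \in \rel_{\cal G}({\bf Y}\mid{\bm\alpha}^*)$ denote its associated relevance path. Then $\gamma$ is a prefix subpath of $\beta''$: the intervening path $\gamma$ shares $V''$'s source and forms the initial segment of $\beta''$ back toward that source. Moreover, because $V''$ sits inside the sub-expression opened beneath the $X$-mention, the tree nesting from $V''$ back to the root passes through $X$ and then coincides with the tree path at the $\alpha$-leaf, which corresponds to the ${\cal G}$-path $\alpha$ itself; hence $\alpha$ is a suffix subpath of $\beta''$. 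This exhibits an element of $\rel_{\cal G}({\bf Y}\mid{\bm\alpha}^*)$ with a prefix in ${\bm\alpha}^*$ that contains a subpath in ${\bm\alpha}^\circ$, contradicting naturalness of $\pi$ for ${\bf Y}$.

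The main obstacle is making the correspondence between the recursive structure of (\ref{eqn:path-rec-sub}) and the relevance-path picture above fully rigorous, in particular tracking how the funnel operator $\lhd$ transforms the current path set at each recursive step so that mentions appearing during unfolding are in bijection with the appropriate $\rel_{\cal G}$ set. Properness of ${\bm\alpha}$, preserved under $\lhd$ by Lemma \ref{lem:proper-funnel}, makes this bookkeeping unambiguous; once the bijection is set up precisely, the two-step reduction sketched above completes the argument.
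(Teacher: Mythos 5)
Your proposal is correct and follows essentially the same route as the paper's own argument: compare the two recursive unfoldings, note that they differ only by the sub-expressions opened where $\pi$ made natural-value substitutions, and invoke the naturalness condition (this is exactly the content of the paper's auxiliary characterization of natural defined responses) to rule out any constant assignment from ${\bm\alpha}^*$ inside those sub-expressions, so that iterated consistency collapses them to the factual values. One harmless slip: $\alpha \in {\bm\alpha}\setminus{\bm\alpha}^*$ is generally only a subpath of $\beta''$, not a suffix subpath (the suffix is the relevance path of the natural-value leaf, of which $\alpha$ is a prefix), but the contradiction with the definition of naturalness only requires that $\beta'' \in \rel_{\cal G}({\bf Y}\mid{\bm\alpha}^*)$ have a prefix in ${\bm\alpha}^*$ and contain a subpath in ${\bm\alpha}\setminus{\bm\alpha}^*$, so your argument stands.
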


\begin{lem}
If $\pi_{{\mathfrak a}_{\bm\alpha}}$ is not natural for ${\bf Y}$ in ${\cal G}$,
then $p({\bf Y}({\mathfrak a}_{\bm\alpha}))$ is not identified under the
MWM for ${\cal G}$.
\label{lem:not-nat-not-id}
\end{lem}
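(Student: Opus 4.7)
The plan is to reduce non-identifiability to the well-known non-identifiability of cross-world joint distributions of the form $p(B, B({\bf a}))$ encoded by Lemma \ref{lem:chen}. Since $\pi$ is not natural for ${\bf Y}$, fix a witnessing path $\gamma \in \rel_{\cal G}({\bf Y}\mid{\bm\alpha}^*)$ whose prefix $\beta$ lies in ${\bm\alpha}^*$ and which contains some $\delta \in {\bm\alpha}\setminus{\bm\alpha}^*$ as a subpath. Let $Y$ be the sink of $\gamma$, let $B$ be the source of $\delta$, and, among all candidate witnesses, pick one in which $\delta$ is the ``innermost'' offender, i.e.\ no other path of ${\bm\alpha}\setminus{\bm\alpha}^*$ appears as a subpath between $B$ and $Y$ along $\gamma$. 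The assumption that ${\bm\alpha}$ is live for ${\bf Y}$ (Lemma \ref{lem:path-live}) lets me pass to this normalized witness without loss of generality.

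Next I would unroll ${\bf Y}({\mathfrak a}_{\bm\alpha})$ via (\ref{eqn:path-rec-sub}) along $\gamma$. Two distinct occurrences of $B$ emerge in the expansion. On one hand, propagating $\beta$'s constant assignment ${\mathfrak a}_\beta$ down $\gamma$ from its source to its sink produces a branch whose value at $B$ is precisely $B({\mathfrak a}_\beta)$, because the funnel operators along $\gamma$ never excise the segment carrying this propagation (this is where relevance of $\gamma$ given ${\bm\alpha}^*$ is used, and where the choice of ``innermost'' $\delta$ is used to rule out intermediate paths from ${\bm\alpha}\setminus{\bm\alpha}^*$ interfering). On the other hand, the natural-value assignment to $\delta$ forces the baseline random variable $B$ itself to appear in the unrolled expression as the source of $\delta$, which is then transmitted to $Y$ along the suffix of $\gamma$ after $\delta$. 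Hence $p({\bf Y}({\mathfrak a}_{\bm\alpha}))$ genuinely depends on the cross-world pair $(B, B({\mathfrak a}_\beta))$, not merely on its marginals.

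To conclude, I would exhibit two causal structures $c_1, c_2 \in \text{MWM}({\cal G})$ that agree on $p({\bf V})$ but disagree on $p({\bf Y}({\mathfrak a}_{\bm\alpha}))$. Taking all variables binary, I would make every non-$B$, non-$Y$ variable in the relevant parts of the expansion act as the identity on the coordinates being propagated, and let $Y$ depend on its two $B$-slots via XOR. Then $p({\bf V})$ is determined by marginals of the potential outcomes, while $p({\bf Y}({\mathfrak a}_{\bm\alpha}))$ is controlled by the cross-world joint $p(B, B({\mathfrak a}_\beta))$. Because the MWM only imposes independence across the groups $\{V({\bf a}_V)\}_{V\in{\bf V}}$ and places no constraint on within-group cross-world joints beyond consistency of marginals, two different $p(B, B({\mathfrak a}_\beta))$ can coexist with the same observed marginals; this is exactly the construction underlying Lemma \ref{lem:chen}.

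The main obstacle is the bookkeeping in the second step: I need to verify that neither the iterated funnel operators nor the other paths in ${\bm\alpha}$ (those not equal to $\beta$ or $\delta$) destroy the cross-world pair before it reaches $Y$ in the recursive expansion. The relevance of $\gamma$ given ${\bm\alpha}^*$ handles interference from ${\bm\alpha}^*$, and the ``innermost $\delta$'' choice handles interference from ${\bm\alpha}\setminus{\bm\alpha}^*$ between $B$ and $Y$; the delicate case is interference \emph{between the source of $\beta$ and $B$}, which I would dispatch by a further minimality choice on $\gamma$, shrinking $\beta$ if necessary so that the propagation along $\beta \cdot \mu$ to $B$ is unobstructed. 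Once this normalization is in hand, the cross-world reduction and the two-structure construction are routine.
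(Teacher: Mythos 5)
There is a genuine gap at the heart of your second step. You claim that unrolling ${\bf Y}({\mathfrak a}_{\bm\alpha})$ via (\ref{eqn:path-rec-sub}) \emph{along the single witness $\gamma$} exhibits both the natural $B$ and the counterfactual $B({\mathfrak a}_{\beta})$, where $B=\sop_{\cal G}(\delta)$. But the recursive substitution does not behave that way: when the recursion descends along $\gamma$ and reaches the second node of $\delta$, the funneled path set contains the first edge of $\delta$ (assigned the \emph{natural} value of $B$), so the parent slot at $B$ is filled with the factual $B$ and the counterfactual response of $B$ to the funneled intervention is never computed on that track. In other words, the propagation of the constant assigned to your $\beta$ is cut at $\sop_{\cal G}(\delta)$, and the pair $(B,B({\mathfrak a}_{\beta}))$ you want to feed into an XOR at $Y$ simply need not occur in the defined response. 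The paper's argument avoids this by using the liveness of ${\bm\alpha}$ for ${\bf Y}$ in an essential way that your proposal omits: liveness supplies \emph{two separate} paths $\alpha^{*},\beta^{*}\in\rel_{\cal G}({\bf Y}\mid{\bm\alpha})$ having the constant path and the natural path as prefixes, so that the counterfactual response to the constant reaches some outcome along $\alpha^{*}$ and the natural value of $\sop_{\cal G}(\delta)$ reaches some (possibly different) outcome along $\beta^{*}$; the witness $\gamma$ is used only to causally connect the child of the constant path's source to $\sop_{\cal G}(\delta)$, so that in the constructed structures the natural value carried by $\beta^{*}$ is an invertible function of the factual value of that child. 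The cross-world object is then $p(B,B({\mathfrak a}_A))$ at the \emph{first edge of the constant path} (Lemma \ref{lem:chen}), pushed to the sinks by the one-to-one linear-map induction of Theorem \ref{lem:non-id-path}. Your construction, which tries to realize both slots at the same outcome through $\gamma$ alone, skips exactly the step where one must show the response genuinely depends on a cross-world joint.

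Two further problems. First, your normalization is backwards: you choose the ``innermost'' offender (no other natural-assigned path between $B$ and $Y$), but what must be protected is the segment of $\gamma$ between the source of the constant prefix and the offending natural path, which is why the paper picks the element of ${\bm\alpha}\setminus{\bm\alpha}^{*}$ whose source is closest to $\sop_{\cal G}(\alpha)$ along $\gamma$. You notice this ``delicate case'' yourself, but the proposed fix of ``shrinking $\beta$'' is not available --- $\beta$ is a fixed member of the intervention, not a free parameter; the correct fix is to re-choose the offender, i.e.\ the paper's choice. Second, even granting a two-structure counterexample on the small configuration of paths, you still need to embed it as a pair of elements of the MWM for the full ${\cal G}$ that agree on all of $p({\bf V})$; this is the role of the edge-subgraph construction together with Lemma \ref{lem:small-to-large}, and your sketch (identity maps on ``the relevant parts of the expansion'') does not address the remaining vertices and edges of ${\cal G}$.
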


Lemma \ref{lem:reduc-nat} does not guarantee that a response to a natural
path intervention is identifiable, merely that it can be expressed as a response
to an intervention only setting to constant values.

\section{Causal Inference Targets as Responses to Path Interventions}
\label{sec:targets}

In this section we consider how a number of targets of interest
in causal inference, including novel targets not previously considered in the
literature, may be expressed as responses to path interventions.

We use as our running example the two time point fragment of a
longitudinal study in HIV research, described in Section
\ref{sec:total-as-node}.
We consider path-specific effects that arise in mediation analysis,
and effects of treatment on the multiply treated, which
are of interest in tort cases (since these are effects of the exposure
on those actually exposed), and in epidemiology if natural exposure levels
carry information about the causal effect of the exposure.
It is not straightforward to see whether these types of effects
are identifiable, and under what model, nor is it obvious whether there is
a single unifying principle which governs identification for these effects.   

By translating the effect types above into responses to path interventions,
we show that such responses form a very general class of causal inference
targets.  Thus, the advantage of path interventions is that we can
use them to give a single characterization for a wide variety of targets of
interest at once.  The close relationship between effects of treatment on the
treated and mediated effects hinted by their common generalization as
responses to path interventions is currently not widely known.

We will define a special set of directed paths important for our translation
scheme.  Given a treatment set ${\bf A}$ and an outcome set ${\bf Y}$ (that
possibly intersect) in a DAG
${\cal G}$, define the set ${\bm\alpha}_{{\bf A},{\bf Y},{\cal G}}$ to be the
set of all directed paths with a source in ${\bf A}$, a sink in
${\bf A} \cup {\bf Y}$ and which do not intersect ${\bf A} \cup {\bf Y}$ except
at the source and sink.
Since ${\bf A}$ and ${\bf Y}$ are allowed to intersect, the names ``treatment''
and ``outcome'' are slightly misleading.  We allow the intersection to admit
cases such as effect of treatment on the treated (ETT) where some treatments
are also treated as responses for the purposes of certain paths.

\begin{lem}
${\bm\alpha}_{{\bf A},{\bf Y},{\cal G}}$ is always proper.
\label{lem:always-proper}
\end{lem}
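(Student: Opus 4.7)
The plan is to prove this by contradiction, using the interior-disjointness condition in the definition of $\bm\alpha_{{\bf A},{\bf Y},{\cal G}}$.

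Suppose for contradiction that $\bm\alpha_{{\bf A},{\bf Y},{\cal G}}$ is not proper. Then there exist distinct paths $\alpha, \beta \in \bm\alpha_{{\bf A},{\bf Y},{\cal G}}$ such that $\alpha$ is a prefix subpath of $\beta$. Since $\alpha$ is a prefix of $\beta$, the two paths share the same source, and since they are distinct, $\alpha$ is a strictly shorter prefix, so the sink $Z \equiv \sink_{\cal G}(\alpha)$ is distinct from $\sink_{\cal G}(\beta)$. Thus $Z$ appears as a vertex in $\beta$ that is neither the source nor the sink of $\beta$, i.e.\ $Z$ is an interior vertex of $\beta$.

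Now invoke the defining properties of $\bm\alpha_{{\bf A},{\bf Y},{\cal G}}$. From $\alpha \in \bm\alpha_{{\bf A},{\bf Y},{\cal G}}$ we have $Z = \sink_{\cal G}(\alpha) \in {\bf A} \cup {\bf Y}$. But from $\beta \in \bm\alpha_{{\bf A},{\bf Y},{\cal G}}$ we have that $\beta$ does not intersect ${\bf A} \cup {\bf Y}$ except at its source and sink, and we just argued $Z$ is an interior vertex of $\beta$, contradicting $Z \in {\bf A} \cup {\bf Y}$. Hence no such pair $\alpha, \beta$ can exist, and $\bm\alpha_{{\bf A},{\bf Y},{\cal G}}$ is proper.

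The main (and only) obstacle is simply matching the definitions carefully: one has to observe that because paths in the set have both endpoints in ${\bf A} \cup {\bf Y}$ and are otherwise disjoint from this set, a strict prefix of any such path immediately collides with another such path by exhibiting an ``interior'' vertex that is forced to lie in ${\bf A} \cup {\bf Y}$. No induction or recursion on the graph structure is required, and no acyclicity is invoked beyond what is already built into the notion of a directed path in a DAG.
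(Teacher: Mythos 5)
Your proof is correct and follows essentially the same argument as the paper's: assume a prefix pair $\alpha,\beta$, note the sink of $\alpha$ lies in ${\bf A}\cup{\bf Y}$ yet is an interior vertex of $\beta$, contradicting the requirement that paths in ${\bm\alpha}_{{\bf A},{\bf Y},{\cal G}}$ meet ${\bf A}\cup{\bf Y}$ only at their endpoints. If anything, you state the endpoint condition slightly more carefully (using ${\bf A}\cup{\bf Y}$ rather than just ${\bf Y}$), but the reasoning is the same.
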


\subsection{Effects of Treatment on the Treated}

We consider an effect on the mean difference scale where we condition
on the naturally observed treatment levels.
This is known as the effect of treatment
on the treated (ETT), and in our two time point HIV example,
it is defined as follows
\[
\text{ETT} \equiv
\mathbb{E}[Y(a_1,a_2) \mid a_1, a_2] - \mathbb{E}[Y(a'_1,a'_2) \mid a_1, a_2].
\]

This contrast is often of interest to epidemiologists.  It also
arises in cases where interventions are functions of the natural value of
the exposure.  For example, we may be interested in outcome for people who
were encouraged to exercise for 30 more minutes than they normally would, which
is a random variable of the form $Y(A+30) \equiv Y(a+30) \mid A=a$.  These
types of interventions are discussed in \cite{identification14jessica},
in particular sufficient conditions for identification under the SWM, in
terms of the extended g-formula (\ref{eqn:g-formula}) are given
there and in \cite{thomas13swig}.

Assume $A_1$ is a binary variable (only two treatment levels).  If we
consider, instead,
the ETT with respect to only the exposure $A_1$, we obtain the
following derivation for the second term in the contrast
\begin{align*}
p(Y_2(a'_1) \mid a_1) = \frac{p(Y_2(a'_1), a_1)}{p(a_1)} =
\frac{p(Y_2(a'_1)) - p(Y_2(a'_1), a'_1)}{p(a_1)},
\end{align*}
where the first identity is by definition, and the second by the binary
treatment assumption.
Since consistency implies $p(Y_2(a_1), a_1) = p(Y_2, a_1)$ for any value $a_1$,
the ETT for a single binary exposure $A_1$ can be identified if
$p(Y_2(a_1))$ is identified.

However, if the exposure is not binary, or if there are multiple exposures, as
in our example, we cannot use the same algebraic trick to obtain identification,
and we must proceed by exploiting additional assumptions in our causal model.

In our case, the first conditional mean in the contrast can be readily
identified via consistency:
$\mathbb{E}[Y(a_1, a_2) \mid a_1, a_2] = \mathbb{E}[Y \mid a_1, a_2]$.
However, the second conditional mean presents a problem, because it contains
a conflict between the naturally observed exposures, and the assigned
exposures.  Here we show how to represent the underlying joint distribution
over potential outcomes, $p(Y_2(a_1, a_2), A_1, A_2)$, in terms of path
interventions, and then attack the identification problem for \emph{all}
responses to path interventions, which would then include the problematic
second term of the ETT.

We consider all directed paths from $A_2$ to $Y_2$, which we assign a value
$a_2$, all directed paths from $A_1$ to $Y_2$ not through $A_2$, which we
assign a value $a_1$, and all directed paths from $A_1$ to $A_2$, which we
assign the natural value of $A_1$.  Note that this set of paths is simply
${\bm\alpha}_{\{ A_1,A_2\},\{Y_2\},{\cal G}}$ for ${\cal G}$ that is the
transitive closure with respect to blue edges of the graph in Fig.
\ref{fig:triangle} (b), and thus is proper by Lemma \ref{lem:always-proper}.
We then
consider the response of $A_1, A_2, Y_2$ to the path intervention so defined,
or $\{ A_1, A_2, Y_2 \}({\mathfrak a}_{\bm\alpha})$.  By our definition,
all paths set to a value ancestral for $A_1,A_2$ are set to natural values.
Thus, $\{ A_1, A_2 \}({\mathfrak a}_{\bm\alpha})$ is defined in terms of natural
values of its direct causal parents, or as
$A_1(C_0) = A_1$ and $A_2(Y_1,C_1,W_1,A_1,C_0) = A_2$.

Finally, we consider all paths ancestral for $Y_2$.
Since $A_1$ and $A_2$ are parents
of $Y_2$ in ${\cal G}^*$, the single edge paths $(A_1Y_2)_{\to}$ and
$(A_2Y_2)_{\to}$ are in our set, thus we substitute $a_1$ and $a_2$ into
the potential outcome answer.  Furthermore, for other parents of $Y_2$,
namely $C_0,U,W_1,C_1,Y_1,W_2$ and $C_2$, we consider an appropriate
set derived from ${\bm\alpha}$.  For example, for the node $W_2$, we replace
the path $A_2 \to W_2 \to Y_2$ by a path $A_2 \to W_2$ (while keeping the
assignment $a_2$).  We proceed in this way recursively until we obtain the
response for $Y_2$, which is
{\small
\[
Y_2(a_1,a_2,U,C_0, W_1(a_1, \ldots), C_1(a_1, \ldots),
	Y_1(a_1, \ldots), W_2(a_1,a_2, \ldots), C_2(a_1, a_2, \ldots)),
\]
}
where $\ldots$ is a shorthand that means
``include all earlier potential outcomes.''  For
example, $C_1(a_1, \ldots)$ means $C_1(a_1, W_1(a_1, U, C_0), U, C_0)$.
By definition of node intervention responses, this counterfactual is equal to
$Y_2(a_1, a_2)$, and our overall joint distribution over the responses is
$p(Y_2(a_1, a_2), A_1, A_2)$.

For arbitrary sets of treatments ${\bf A}$ and outcomes ${\bf Y}$,
and active treatment values ${\bf a}$, we may still represent ETT
as a single mean difference, for example
$\mathbb{E}[f({\bf y})]_{[p({\bf Y}({\bf a}) \mid {\bf a})]} -
\mathbb{E}[f({\bf y})]_{[p({\bf Y}({\bf a}') \mid {\bf a})]}$,
for some function $f({\bf y})$.

Note that though ETT resembles the total effect, it is in fact a more
complex kind of counterfactual.  This is because we are simultaneously
interested in ``outcome responses'' ${\bf Y}$, and ``treatment responses''
${\bf A}$.  Defining these treatment responses may introduce conflicts among
intermediate counterfactual responses, not well represented by node
interventions, which is why we represent ETT as a response
to a path intervention.

The \emph{ETT path intervention}
$\pi^{\bf a}_
{{\mathfrak a}_{{\bm\alpha}_{{\bf A},{\bf Y},{\cal G}}}}$
simply assigns all paths in
${\bm\alpha}_{{\bf A},{\bf Y},{\cal G}}$ to the appropriate
value.  That is, paths from ${\bf A}$ to ${\bf A}$ are assigned
the appropriate natural value, and paths from ${\bf A}$ to ${\bf Y}$ are
assigned the appropriate value in ${\bf a}$.
Given this definition, either the ETT is not identified, or
the joint distribution from which ETT is obtained
corresponds to the joint response of ${\bf Y} \cup {\bf A}$ to the
ETT path intervention.
\begin{lem}
If there exists $A \in {\bf A}$ such that
$A({\mathfrak a}_{{\bm\alpha}_{{\bf A},{\bf Y},{\cal G}}}) \neq A$,
$p({\bf Y}({\bf a}), {\bf A})$ is not identified under the MWM for
${\cal G}$.  If there does not exist such an $A$,
$p({\bf Y}({\bf a}), {\bf A}) = p(\{ {\bf Y} \cup {\bf A} \}
({\mathfrak a}_{{\bm\alpha}_{{\bf A},{\bf Y},{\cal G}}}))$.
\label{lem:ett-translation}
\end{lem}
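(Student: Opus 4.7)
The plan is to prove the two conditionals separately: Part 2 (the representation when every $A \in {\bf A}$ satisfies $A({\mathfrak a}_{\bm\alpha}) = A$) first, then Part 1 (non-identification when this fails for some $A$).

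For Part 2, the hypothesis immediately gives ${\bf A}({\mathfrak a}_{\bm\alpha}) = {\bf A}$, so what remains is to show ${\bf Y}({\mathfrak a}_{\bm\alpha}) = {\bf Y}({\bf a})$ as joint random variables. I would argue this by induction on the topological order of ${\cal G}$ under the following stronger claim: for any $V \in {\bf V}$ and any funneled path set $\tilde{\bm\alpha}$ obtained from ${\bm\alpha}_{{\bf A},{\bf Y},{\cal G}}$ by applying a sequence of funnel operators $\lhd_{(W_k V_k)_{\to}} \circ \cdots \circ \lhd_{(W_1 V_1)_{\to}}$ whose edges trace a directed chain from $V$ down to some $Y \in {\bf Y}$, we have $V(\tilde{\bm\alpha}) = V({\bf a})$. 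Unwinding (\ref{eqn:path-rec-sub}), each parent $W$ of $V$ in ${\bf A}$ is assigned the active value $a_W$ because the edge $(WV)_{\to}$ lives in $\tilde{\bm\alpha}$ as a length-$1$ path arising from the shortening of an original path in ${\bm\alpha}$ that ended at a ${\bf Y}$-sink; each other parent $W'$ contributes via an additional funnel step and returns $W'({\bf a})$ by the induction hypothesis. Applied at the top level with $V = Y \in {\bf Y}$ and an empty funnel chain, this yields $Y({\mathfrak a}_{\bm\alpha}) = Y({\bf a})$. The crucial structural facts used are that paths in ${\bm\alpha}_{{\bf A},{\bf Y},{\cal G}}$ avoid ${\bf A} \cup {\bf Y}$ except at endpoints, and that funnel operators preserve properness (Lemma \ref{lem:proper-funnel}).

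For Part 1, the plan is to exhibit a recanting-witness structure and invoke a non-identifiability construction analogous to Lemma \ref{lem:chen}. The assumption $A({\mathfrak a}_{\bm\alpha}) \neq A$ means that the recursive substitution for $A$ does not collapse to its natural value; tracing (\ref{eqn:path-rec-sub}) downward pinpoints a node $W$ whose response under some cascaded funnel picks up an active assignment even though $A$'s own incoming paths in ${\bm\alpha}$ are all natural. Such a $W$ must sit simultaneously on a natural-assignment path in ${\bm\alpha}$ ending at ${\bf A}$ and on an active-assignment path in ${\bm\alpha}$ ending at ${\bf Y}$, in a configuration that funnel operators cannot disentangle. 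From this structure I would construct two MWM-compatible causal structures on ${\cal G}$ that agree on $p({\bf V})$ but disagree on $p({\bf Y}({\bf a}), {\bf A})$, establishing non-identification under the MWM.

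The main obstacle is Part 1. In the ETT path intervention the funnel operators typically separate natural from active assignments very cleanly, so translating the algebraic statement $A({\mathfrak a}_{\bm\alpha}) \neq A$ into an explicit recanting-witness subgraph, and then constructing the two MWM witnesses in full generality, requires careful bookkeeping through the nested funnels; the non-identification construction itself should then follow the standard template used in the mediation literature. Part 2 is largely mechanical once the inductive claim is correctly formulated, but the induction must keep track of which funnel chains produce which length-$1$ residues at ${\bf A}$-parents, a point where the properness of ${\bm\alpha}_{{\bf A},{\bf Y},{\cal G}}$ from Lemma \ref{lem:always-proper} is essential.
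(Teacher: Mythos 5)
Your Part 2 is essentially sound: once ${\bf A}({\mathfrak a}_{{\bm\alpha}_{{\bf A},{\bf Y},{\cal G}}}) = {\bf A}$, the remaining identity ${\bf Y}({\mathfrak a}_{{\bm\alpha}_{{\bf A},{\bf Y},{\cal G}}}) = {\bf Y}({\bf a})$ is exactly the content of Lemma \ref{lem:node-as-path} (together with the observation, via Lemma \ref{lem:reduc-nat}, that the natural-valued paths from ${\bf A}$ to ${\bf A}$ can be dropped); your funnel-chain induction is a hands-on reproof of that lemma rather than a new route, and it is fine.

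Part 1, however, contains a genuine gap. First, your structural diagnosis is not the right consequence of $A({\mathfrak a}_{{\bm\alpha}_{{\bf A},{\bf Y},{\cal G}}}) \neq A$: the failure does not require a node $W$ lying simultaneously on a natural-assignment path and an active-assignment path of ${\bm\alpha}_{{\bf A},{\bf Y},{\cal G}}$ (that is the edge-inconsistency/recanting-witness picture, which is a different phenomenon). What the hypothesis actually yields is a directed path relevant for $A$ whose prefix is a \emph{constant-valued} path $\beta \in {\bm\alpha}_{{\bf A},{\bf Y},{\cal G}}$ with sink in ${\bf Y}$; in other words, the conditioning variable $A$ is causally downstream of an actively-set path, so the target involves a joint law of the form $p(B, B(a))$ along a single directed chain — the object shown non-identified in Lemma \ref{lem:chen} — not a $p(W(a),W(a'))$-type conflict. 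Second, and more importantly, the actual non-identification construction is exactly the part you defer (``careful bookkeeping,'' ``standard template''), and you propose to build the two agreeing-but-disagreeing causal structures directly on ${\cal G}$. That is substantially harder than what is needed and is not carried out: the workable route is to restrict to the edge subgraph consisting only of the single relevant path, apply Lemma \ref{lem:chen} at its first edge, propagate the disagreement down the path using the one-to-one linear-map argument from the proof of Theorem \ref{lem:non-id-path} to reach the pair $(Y({\mathfrak a}), A)$, and then lift non-identification from the subgraph to ${\cal G}$ via Lemma \ref{lem:small-to-large}. Without identifying the correct witnessing structure and without these subgraph-plus-lifting devices (or an equivalent explicit construction on ${\cal G}$), the first claim of the lemma remains unproved in your proposal.
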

If $p({\bf Y}({\bf a}), {\bf A})$ is expressible as a response to a path
intervention, it may still not be identifiable under the MWM.

Our subsequent results on identification of path interventions under the
MWM complement identification results in
\cite{identification14jessica,thomas13swig}.  In particular, our results
imply the distribution $p(Y(a,m) \mid A,M)$ is identified under the MWM for
Fig. \ref{fig:triangle} (a), but not under the SWM for
Fig. \ref{fig:triangle} (a).

\subsection{Path-Specific Effects}

Next, we consider the mediated effect of $A_1,A_2$ on $Y_2$ through $C_1,C_2$,
in other words, the effect of exposures on outcome mediated by adherence.
Originally these kinds of effects were considered in \cite{baron86mm} in the
context of linear models, and were generalized to a form not restricted by
particular parametric models in \cite{robins92effects}.  We discuss a simple
version of mediated effects in the graph in Fig. \ref{fig:triangle} (a),
known as natural direct and indirect effects
\cite{robins92effects,pearl01direct} in Section \ref{sec:dir-indir-edge},
where we represented them as responses to edge interventions.

In our case, we are interested in a more complicated effect, but we can
represent it using a similar idea using paths rather than edges -- paths we are
interested in are assigned active treatment values $a_1,a_2$, while paths we
are not interested in are assigned baseline treatment values $a'_1,a'_2$.  The
paths we are interested in are all directed paths with the first edges are one
of $\{ (A_1C_1)_{\to}, (A_1C_2)_{\to}, (A_2C_2)_{\to} \}$, which end in $Y_2$,
and which do not proceed through $A_2$ if started at $A_1$.  The
paths we are not interested in are all other paths which start with $A_2$ or
$A_1$ (and do not proceed through $A_2$) and end in $Y_2$.  
Call this assignment ${\mathfrak a}_1$.  Note that the assignment
${\mathfrak a}_1$ is on the set of paths that is precisely equal to
${\bm\alpha}_{\{ A_1,A_2\},\{Y_2\},{\cal G}}$ for ${\cal G}$ that is the
transitive closure with respect to blue edges of the graph in Fig.
\ref{fig:triangle} (b), and thus is proper.

We apply our definition to obtain a response of $Y_2$ to this intervention.
We must substitute a value for every parent of $Y_2$.  The values for
$A_1,A_2$ will be the baseline $a'_1, a'_2$, while the values for $C_0,U$ will
just be the natural values of those variables.  Complications arise for other
parents, due to the recursive nature of the definition.  We proceed recursively:
\begin{align*}
Y_2({\mathfrak a}_1) &= Y_2(a'_1, a'_2,
	\{ C_2, W_2, Y_1, C_1, W_1, C_0 \}({\mathfrak a}_1), U) \\
C_2({\mathfrak a}_1) &= C_2(a_1, a_2,
	\{ W_2, Y_1, C_1, W_1, C_0 \}({\mathfrak a}_1), U) \\
W_2({\mathfrak a}_1) &= W_2(a'_1, a'_2,
	\{ Y_1, C_1, W_1, C_0 \}({\mathfrak a}_1), U) \\
Y_1({\mathfrak a}_1) &= Y_1(a'_1, \{ C_1, W_1, C_0 \}({\mathfrak a}_1), U) \\
C_1({\mathfrak a}_1) &= C_1(a_1, \{ W_1, C_0 \}({\mathfrak a}_1)) \\
W_1({\mathfrak a}_1) &= W_1(a'_1, C_0({\mathfrak a}_1), U) \\
C_0({\mathfrak a}_1) &= C_0(U) = C_0 \\
\end{align*}

In the matter similar to direct and indirect effects, we can use this response
along with the total effect responses to define
``the effect along paths we want'' as
$\mathbb{E}[Y({\mathfrak a}_1)] - \mathbb{E}[Y(a'_1, a'_2)]$,
and ``the effect along paths we do not want'' as
$\mathbb{E}[Y(a_1, a_2)] - \mathbb{E}[Y({\mathfrak a}_1)]$.
As before, the ACE additively decomposes into these two effect measures.
This definition (without the use of path interventions) appears in
\cite{shpitser13cogsci}.

We may also consider a response of $Y_2$ where the paths we are not interested
in are assigned the natural values, as discussed in Section \ref{sec:natural},
rather than fixed baseline values.  Such an effect is defined similarly.

Consider a set of active treatment values ${\bf a}$ of ${\bf A}$, a set of
fixed baseline treatment values ${\bf a}'$, and a subset ${\bm\beta}$ of
${\bm\alpha}_{{\bf A},{\bf Y},{\cal G}}$ (which contains
``paths of interest'').  Define the \emph{fixed baseline PSE path intervention} 
$\pi^{{\bf a},{\bf a}',{\bm\beta}}_{{\mathfrak a}_{
{\bm\alpha}_{{\bf A},{\bf Y},{\cal G}}}}$
as a path intervention that assigns appropriate active values in ${\bf a}$ to
sources in ${\bm\beta}$ and appropriate baseline values in ${\bf a}'$ to
sources of all paths in
${\bm\alpha}_{{\bf A},{\bf Y},{\cal G}} \setminus {\bm\beta}$.

Similarly, we call an intervention
$\pi^{{\bf a},{\bm\beta}}_{{\mathfrak a}_{
{\bm\alpha}_{{\bf A},{\bf Y},{\cal G}}}}$ that
assigns active values in ${\bf a}$ to sources of paths in
${\bm\beta}$
and appropriate \emph{natural} values to sources of
all paths in ${\bm\alpha}_{{\bf A},{\bf Y},{\cal G}} \setminus {\bm\beta}$
the \emph{average baseline PSE path intervention}.

Path specific effects along all paths in ${\beta}$ (with a fixed baseline)
can then be defined on the mean difference scale as
$\mathbb{E}[{\bf Y}({\mathfrak a}^{{\bf a},{\bf a}',{\bm\beta}}_{
{\bm\alpha}_{{\bf A},{\bf Y},{\cal G}}})]
- \mathbb{E}[{\bf Y}({\bf a}')]$,
and along all paths not in ${\beta}$ as
$\mathbb{E}[{\bf Y}({\bf a})] -
\mathbb{E}[{\bf Y}({\mathfrak a}^{{\bf a},{\bf a}',{\bm\beta}}_{
{\bm\alpha}_{{\bf A},{\bf Y},{\cal G}}})]$.
Average baseline path specific effects on the difference scale are defined
similarly.

\subsection{Effects of Treatment on the Indirectly Treated}
\label{sec:ett-innoc}

In this section we show that the language of path interventions is general
enough to incorporate novel targets not currently considered in the literature.
Our results immediately settle identification questions for any such target.

We consider a seemingly innocuous ETT with two treatments
that in fact can only be represented by a path intervention, not an edge
intervention, and variations of this target that are identified under the
SWM and the MWM.
Assume Fig. \ref{fig:triangle} (a) represents a simple two time point partially
randomized observational study, where $W$ and $M$ are treatments at the first
and second time points, respectively, $A$ is an intermediate health measure,
and $Y$ is the outcome.  We make very strong assumptions about this study.
In particular, $W$ is randomized, while $M$ is only assigned based on $A,W$.
Finally, no unobserved confounding exists anywhere, including between $W,M$ and
$Y$.  We are interested in the effect of treatments $W,M$ on the treated in
this study.  To obtain this contrast, we need to identify $p(Y(m,w) \mid W,M)$
which is identified if and only if $p(Y(m,w),W,M)$ is.
It is not difficult to show that
\begin{align*}
p(Y(m,w),W,M) &=
p(\{ Y,M,W \}((wAY)_{\to},(mY)_{\to}))\\
&= p(Y(m,A(w)),M(A(W),W),W).
\end{align*}

As we will show in the next section, there is no way to express this
response as a response to an edge intervention, and it is not
identified under the MWM.  This is the case despite the fact that
there is no unobserved confounding in this study.  The difficulty is that
the response is defined in terms of $A(w)$ and $A$ \emph{jointly}, and the
distribution $p(A(w),A)$ is not identified under the MWM
without more assumptions.

To obtain a target that is identified under the SWM in this case
we may consider the response $Y(w,m)$ on the treated to the natural value $W$,
and the value of $M$ occurring under the intervention setting $W$ to $w$.
This results in $p(Y(m,w),W,M(w)) = p(Y(m,A(M(w))),M(A(w)),W)$ which is then
identified under the SWM.  To obtain identification we gave up
on conditioning on the natural value of the second treatment $M$.  This may
not be ``in the spirit'' of the ETT target.

One compromise is to assume a stronger model, the MWM, and allow the
response $M$ to be ``as natural as possible'' while still retaining
identification.  This would correspond to defining a contrast in terms of
$p(\{ Y, W, M \}((mY)_{\to}, (wAY)_{\to}, (wAM)_{\to}))$,
which in turn is equivalent to
$p(\{ Y, W, M \}((mY)_{\to}, (wA)_{\to}))$.
A conditional distribution\\
$p(Y((mY)_{\to}, (wA)_{\to}), M((wA)_{\to}, (w'M)_{\to}) \mid W = w')$
represents the response $Y(w,m)$ among those
individuals whose treatment value for $W$ is $w'$ (untreated), and
whose treatment value for $M$ is whatever value $M$ would have attained had
$W$ assumed the active value $w$ with respect to the path $(WAM)_{\to}$, and
untreated value $w'$ with respect to the path $(WM)_{\to}$.

We can define a contrast based on this quantity, using a summary function
$f(Y,M)$, equal to
{\small
\[
\mathbb{E}[f(Y(m,w), M(A(w),w')) \mid W=w'] -
\mathbb{E}[f(Y(m',w'), M(A(w'),w')) \mid W=w'],
\]
}
which we call
``the effect of treatment on the indirectly treated (ETIT).''  The name is due
to the fact that we consider people whose baseline treatment $W$ is untreated,
and whose followup treatment $M$ is set to a value that is a kind of response
to the indirect effect of the first treatment.
Such a quantity would be difficult to conceive of without a direct
representation of effects along pathways, something path interventions provide.
Our results also directly imply that this quantity is identified under the MWM,
but not SWM.

\section{Identification of Edge and Path Interventions}
\label{sec:id}

Having established a correspondence between responses to path interventions
and a variety of targets of interest in causal inference, we now
consider what assumptions are necessary to express path interventions as
edge interventions, edge interventions as node interventions, and edge and
node interventions as functions of the observed data.

As we showed in section \ref{sec:natural}, we can restrict our attention to
path interventions that only assign paths to constant values, since paths
that are assigned natural values either can be dropped from the intervention
without affecting the response, or the overall response is not identified.

\subsection{Node and Edge Interventions as Path Interventions}

If node interventions are a special case of edge interventions, which are in
turn a special case of path interventions, we ought to be able to give a path
intervention the response of which is equal to the response to an arbitrary node
or edge intervention.  For any such response there may be
multiple path interventions the responses to which are identical.
We give one such path intervention here.

\begin{lem}
Let ${\bf A},{\bf Y}$ be disjoint vertex sets in a DAG ${\cal G}$, and
${\bf a}$ a value assignment to ${\bf A}$.
Let $\pi^{\bf a}_{{\mathfrak a}_{{\bm\alpha}_{{\bf A},{\bf Y},{\cal G}}}}$
assign each $\alpha \in {\bm\alpha}_{{\bf A},{\bf Y},{\cal G}}$ to
${\bf a}_{\sop_{\cal G}(\alpha)}$.  Then
$p({\bf Y}({\mathfrak a}_{{\bm\alpha}_{{\bf A},{\bf Y},{\cal G}}})) =
p({\bf Y}({\bf a}))$.
\label{lem:node-as-path}
\end{lem}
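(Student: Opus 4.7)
The plan is to prove, by structural induction on the recursive substitutions, that for each $Y \in {\bf Y}$ the nested counterfactual $Y({\mathfrak a}_{{\bm\alpha}_{{\bf A},{\bf Y},{\cal G}}})$ defined by (\ref{eqn:path-rec-sub}) coincides syntactically with $Y({\bf a})$ defined by (\ref{eqn:rec-sub}) when both are unrolled into the base variables $V({\bf v}_{\pa_{\cal G}(V)})$. Since the joint law over these base variables is fixed by the causal model and both unrollings terminate by acyclicity, syntactic equality at every $Y \in {\bf Y}$ yields $p(\{{\bf Y}\}({\mathfrak a}_{{\bm\alpha}_{{\bf A},{\bf Y},{\cal G}}})) = p({\bf Y}({\bf a}))$.

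The pivotal observation is that an edge $(WV)_{\to}$ appears as a length-$1$ element of ${\bm\alpha}_{{\bf A},{\bf Y},{\cal G}}$ if and only if $W \in {\bf A}$ and $V \in {\bf A} \cup {\bf Y}$. So at any such $V$, the first-slot substitution in (\ref{eqn:path-rec-sub}) places exactly ${\bf a}_{\pa_{\cal G}(V) \cap {\bf A}}$, matching the first slot of (\ref{eqn:rec-sub}). To carry the induction past the first level, I would strengthen the hypothesis with an invariant to be maintained for each funneled set ${\bm\beta}$ arising during recursion: at every vertex $V'$ at which ${\bm\beta}$ is evaluated, every $W' \in \pa_{\cal G}(V') \cap {\bf A}$ is the source of a length-$1$ path $(W'V')_{\to} \in {\bm\beta}$ with assignment $a_{W'}$.

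Given the invariant, the inductive step is immediate. For $V \notin {\bf A}$ evaluated under ${\bm\beta}$, the first slot of (\ref{eqn:path-rec-sub}) fills in each ${\bf A}$-parent's value, while each non-${\bf A}$-parent $W$ contributes $W({\mathfrak a}_{\lhd_{(WV)_{\to}}({\bm\beta})})$, which equals $W({\bf a})$ by the inductive hypothesis. Combining these pieces reproduces (\ref{eqn:rec-sub}) at $V$. The invariant also prevents the recursion from ever descending into a $W' \in {\bf A}$ through the second slot, which mirrors the fact that (\ref{eqn:rec-sub}) handles ${\bf A}$-parents via the first slot without recursing into them.

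The main obstacle is verifying the invariant is preserved under $\lhd_{(W'V')_{\to}}$ when $V' \notin {\bf A} \cup {\bf Y}$. The key combinatorial fact is that any path $(A \cdots W' V' V'')_{\to}$ in ${\bm\beta}$ whose sink $V''$ lies in ${\bf A} \cup {\bf Y}$ is truncated by the funnel to $(A \cdots W' V')_{\to}$, which in the case $A = W'$ yields precisely the required length-$1$ path $(W'V')_{\to}$ with assignment $a_{W'}$. Exhaustiveness of ${\bm\alpha}_{{\bf A},{\bf Y},{\cal G}}$ among paths with source in ${\bf A}$ and sink in ${\bf A} \cup {\bf Y}$ avoiding ${\bf A} \cup {\bf Y}$ internally, together with Lemma \ref{lem:proper-funnel}, ensures such paths are present in ${\bm\beta}$ whenever the invariant must be re-established.
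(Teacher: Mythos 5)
Your overall strategy---unrolling both (\ref{eqn:path-rec-sub}) and (\ref{eqn:rec-sub}) into the one-step-ahead counterfactuals and proving syntactic equality by induction, with the invariant that every ${\bf A}$-parent of an evaluated vertex appears as a length-one path in the current funneled set carrying its value from ${\bf a}$---is the right shape, and your observations that the recursion never descends into ${\bf A}$-parents and that funneling preserves sources (so no spurious constants from outside ${\bf A}$ are ever substituted) are correct. The gap is in the propagation step. The funnel $\lhd_{(W'V')_{\to}}$ does not truncate a path of the form $(A\cdots W'V'V'')_{\to}$ to $(A\cdots W'V')_{\to}$: by its definition it truncates only paths ending in the edge $(W'V')_{\to}$ (yielding paths that end at $W'$), while a path that passes through $W'$ and continues to $V''$ falls under the removal clause. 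So your ``key combinatorial fact'' is not what the operator does, and even if it were, it would re-establish the invariant at the wrong vertex: after funneling into $W'$ you need edges $(A'W')_{\to}$ into $W'$, not $(W'V')_{\to}$. Moreover, the invariant as you state it is not self-propagating: to conclude $(A'W')_{\to}\in\lhd_{(W'V')_{\to}}({\bm\beta})$ you must already know that $(A'W')_{\to}$ or $(A'W'V')_{\to}$ lies in ${\bm\beta}$, information the invariant (a statement only about length-one edges at the vertex currently evaluated) does not carry. You either need a stronger invariant or, as in a direct argument, an explicit witness path of ${\bm\alpha}_{{\bf A},{\bf Y},{\cal G}}$ traced through the \emph{entire} sequence of funnels along the descent.

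The case you set aside is in fact the delicate one. Since paths in ${\bm\alpha}_{{\bf A},{\bf Y},{\cal G}}$ cannot have interior vertices in ${\bf A}\cup{\bf Y}$, the witness for the invariant at an evaluated vertex $V$ is the path of ${\bm\alpha}_{{\bf A},{\bf Y},{\cal G}}$ that follows the descent from $V$ only until the \emph{first} ${\bf Y}$-vertex is met; at the funnel whose parent vertex is that sink, the witness is not truncated but must survive intact. Concretely, let ${\cal G}$ have edges $A\to Y_1$, $Y_1\to Y_2$, $A\to Y_2$, with ${\bf A}=\{A\}$, ${\bf Y}=\{Y_1,Y_2\}$, so ${\bm\alpha}_{{\bf A},{\bf Y},{\cal G}}=\{(AY_1)_{\to},(AY_2)_{\to}\}$ (the path $(AY_1Y_2)_{\to}$ is excluded because $Y_1\in{\bf Y}$ is interior, and Lemma \ref{lem:always-proper} depends on exactly this exclusion). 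Computing $Y_2({\mathfrak a})$ descends into $Y_1$ via $\lhd_{(Y_1Y_2)_{\to}}$, and the invariant at $Y_1$ requires $(AY_1)_{\to}$ to be in the funneled set; no truncation can supply it, so the proof must argue that a path whose sink is the funneled vertex is kept by the funnel rather than removed---precisely the situation your exhaustiveness-plus-truncation step does not cover (and the reading of the removal clause on which the lemma's truth hinges: if such paths were removed, the response here would be $Y_2(a,Y_1)$ with natural $Y_1$ rather than $Y_2(a)$). Until the propagation handles witnesses that terminate strictly before the descent returns to the original $Y$, and runs them through every funnel rather than only the last one, the induction does not close.
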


\begin{lem}
Let ${\bf Y}$ be a vertex set in a DAG ${\cal G}$, and $\bm\alpha$ a set of
edges, with ${\mathfrak a}_{\bm\alpha}$ an assignment to ${\bm\alpha}$.
Let ${\bf A} = \sop_{\cal G}({\bm\alpha})$, and
${\bm\alpha}_{{\bf Y},{\cal G}}$ be a subset of
${\bm\alpha}_{{\bf A},{\bf Y},{\cal G}}$ consisting of paths with an edge prefix
in ${\bm\alpha}$.
Let
$\pi^{\bm\alpha}_{{\mathfrak a}_{{\bm\alpha}_{{\bf Y},{\cal G}}}}$
assign each $\alpha \in {\bm\alpha}_{{\bf Y},{\cal G}}$
to the value assigned to the edge prefix of $\alpha$
by ${\mathfrak a}_{\bm\alpha}$.  Then
$p({\bf Y}({\mathfrak a}_{{\bm\alpha}_{{\bf Y},{\cal G}}})) =
p({\bf Y}({\mathfrak a}_{\bm\alpha}))$.
\label{lem:edge-as-path}
\end{lem}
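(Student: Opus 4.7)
The plan is to prove the equality of distributions by proving the stronger statement that the random variables ${\bf Y}({\mathfrak a}_{\bm\alpha})$ (edge intervention response) and ${\bf Y}({\mathfrak a}_{{\bm\alpha}_{{\bf Y},{\cal G}}})$ (path intervention response) are almost surely equal. Since both are obtained by recursive substitution from the same base potential outcomes $\{V({\bf v}_{\pa_{\cal G}(V)}) : V \in {\bf V}\}$ admitted by the model, it suffices to exhibit a common nested counterfactual expression for each coordinate $Y \in {\bf Y}$. The structure of the argument closely mirrors Lemma \ref{lem:node-as-path}, substituting edges for whole nodes.

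To make the induction go through, I would strengthen the statement: for every $V \in {\bf V}$ and every path set $\bm\beta$ obtained from ${\bm\alpha}_{{\bf Y},{\cal G}}$ by applying a (possibly empty) finite chain of funnel operators $\lhd_{(W_iV_i)_{\to}}$ corresponding to a directed descent from some $Y \in {\bf Y}$ down to $V$, the recursive unfoldings of $V({\mathfrak a}_{\bm\beta})$ via (\ref{eqn:path-rec-sub}) and of $V({\mathfrak a}_{\bm\alpha})$ via (\ref{eqn:edge-rec-sub}) coincide. Lemma \ref{lem:proper-funnel} ensures that properness (and hence well-definedness of the path-intervention response) is preserved at each step of the chain. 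The induction is on the longest directed path in ${\cal G}$ from a source to $V$, with base case being parentless $V$, for which both sides reduce to $V$.

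For the inductive step, I would unfold (\ref{eqn:path-rec-sub}) and (\ref{eqn:edge-rec-sub}) at $V$ and match contributions parent by parent. Two things need verification: (i) the length-1 paths in $\bm\beta$ with sink $V$ stand in bijection with the edges $(AV)_{\to} \in \bm\alpha$, with the same assignment on both sides; and (ii) for each parent $W$ with $(WV)_{\to} \notin \bm\alpha$, the recursive responses agree, i.e.\ $W({\mathfrak a}_{\lhd_{(WV)_{\to}}(\bm\beta)})$ (path) equals $W({\mathfrak a}_{\bm\alpha})$ (edge). Claim (ii) is immediate from the inductive hypothesis once I observe that $\lhd_{(WV)_{\to}}(\bm\beta)$ is itself the result of appending one more funnel operator to the chain producing $\bm\beta$, so the hypothesis applies to $W$.

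The main obstacle is (i). A length-1 path with sink $V$ in $\bm\beta$ arises in exactly two ways: it was already a length-1 path in ${\bm\alpha}_{{\bf Y},{\cal G}}$ with sink $V$ that survived the funnel chain, or it was produced by truncation of a length-$\ge 2$ path in ${\bm\alpha}_{{\bf Y},{\cal G}}$ whose last edge was exactly the last funnel target in the chain. In both cases, the edge that remains adjacent to $V$ is the edge prefix of an original element of ${\bm\alpha}_{{\bf Y},{\cal G}}$, which by definition lies in $\bm\alpha$; conversely, every $(AV)_{\to} \in \bm\alpha$ either appears as a length-1 path in ${\bm\alpha}_{{\bf Y},{\cal G}}$ or as the edge prefix of some longer element of ${\bm\alpha}_{{\bf Y},{\cal G}}$, producing a length-1 path in $\bm\beta$ with sink $V$ under the appropriate funnel chain. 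The assignment of ${\mathfrak a}_{\bm\alpha}$ to each such edge agrees on both sides by construction of $\pi^{\bm\alpha}_{{\mathfrak a}_{{\bm\alpha}_{{\bf Y},{\cal G}}}}$. Specializing the strengthened claim to $V \in {\bf Y}$ with empty funnel chain (so $\bm\beta = {\bm\alpha}_{{\bf Y},{\cal G}}$) yields the Lemma.
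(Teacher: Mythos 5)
Your overall strategy---matching the two recursive unfoldings node by node, by induction along the descent from ${\bf Y}$---is in spirit the same as the paper's argument, which identifies the two defined responses through the expression-tree/relevant-path characterization. The base case and your step (ii) are fine. The genuine gap is in step (i), and it sits exactly where the lemma has content: the converse direction, that every edge $(AV)_{\to}\in{\bm\alpha}$ into a node $V$ actually reached by the recursion materializes in the funneled set $\bm\beta$ as a length-$1$ path with sink $V$ carrying the value ${\mathfrak a}_{(AV)_{\to}}$, is asserted (``producing a length-1 path in $\bm\beta$ \ldots under the appropriate funnel chain'') but never argued. To prove it you must exhibit a witnessing element of ${\bm\alpha}_{{\bf Y},{\cal G}}$ and track it through the entire funnel chain: the natural witness starts with $(AV)_{\to}$ and then follows the reversed descent $V=V_k,V_{k-1},\dots,V_0=Y$ only until it first re-enters ${\bf A}\cup{\bf Y}$, since members of ${\bm\alpha}_{{\bf A},{\bf Y},{\cal G}}$ may not cross ${\bf A}\cup{\bf Y}$ internally. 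You then need to verify that at every step of the chain this witness is either kept or truncated by one edge, and in particular is never caught by the clause of $\lhd_{(WY)_{\to}}$ that deletes paths containing $W$ without the suffix $(WY)_{\to}$. The delicate case is precisely when the descent passes through another vertex of ${\bf A}\cup{\bf Y}$ strictly between $V$ and $Y$ (several sources in $\sop_{\cal G}({\bm\alpha})$, or outcomes lying on directed paths to other outcomes), because there the witness terminates at that intermediate vertex while the descent continues past it. For instance, in $A_1\to B\to A_2\to Y$ with $A_2\to W$ and ${\bm\alpha}=\{(A_1B)_{\to},(A_2W)_{\to}\}$, the only element of ${\bm\alpha}_{{\bf Y},{\cal G}}$ witnessing $(A_1B)_{\to}$ at the reached node $B$ is $(A_1BA_2)_{\to}$, which ends at $A_2\in{\bf A}$ inside the descent $Y,A_2,B$; how it fares under $\lhd_{(A_2Y)_{\to}}$ and then $\lhd_{(BA_2)_{\to}}$ is the whole question, and your proposal never engages with it.

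Two smaller points. First, your converse claim as stated (``every $(AV)_{\to}\in{\bm\alpha}$ either appears as a length-1 path in ${\bm\alpha}_{{\bf Y},{\cal G}}$ or as the edge prefix of some longer element'') is false for edges whose sink has no descendant in ${\bf A}\cup{\bf Y}$: such an edge appears nowhere in ${\bm\alpha}_{{\bf Y},{\cal G}}$. This is harmless only because such a sink is never reached by a descent from ${\bf Y}$, but you neither restrict the claim nor note why the restriction suffices. Second, your strengthened hypothesis quantifies over funnel chains ``corresponding to a directed descent'' without requiring that each descent edge be one along which the recursion actually recurses (i.e.\ not directly set at that stage); chains through directly-set edges do not arise in (\ref{eqn:path-rec-sub}) and produce degenerate path sets, so the hypothesis must be tied to the descents generated by the recursions themselves, which is part of what the induction is establishing and therefore needs to be set up jointly with the parent-matching claim.
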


\subsection{Identification of Edge Interventions}
\label{sec:edge-interventions}

The difficulty with edge interventions is that a single response to an edge
intervention may involve other responses with conflicting treatment
assignments.  It is this feature of edge interventions which in general
prevents their identification under the SWM, and which requires the
stronger assumptions of the MWM.  If such a conflicting assignment is
absent, the edge intervention can be rephrased as a node intervention.  We show
this absence of conflict is characterized by a property we call node
consistency.

A set of edges ${\bm\alpha}$ live for ${\bf Y}$ is called \emph{consistent} for
${\bf Y}$ if for every node $A$, the set of prefix edges of the path set
$\left\{ \alpha \in \rel_{\cal G}({\bf Y}\mid{\bm\alpha}) \middle|
\sop_{\cal G}(\alpha) = A \right\}$
is either disjoint from $\bm\alpha$ or contained in $\bm\alpha$.

For a set of edges ${\bm\alpha}$ live and consistent for ${\bf Y}$, we call
an edge intervention $\eta_{{\mathfrak a}_{\bm\alpha}}$
\emph{node consistent} for ${\bf Y}$ if
for every node $A$, all edges in ${\bm\alpha}$ with $A$ as the source node
are assigned the same value (say $a$).  Any edge intervention that is not
node consistent we call node inconsistent, including any edge intervention
on a set of edges not consistent for an outcome set of interest.

The edge set $\{ (AY)_{\to} \}$ in Fig. \ref{fig:triangle} (a) is
live but not consistent for $\{ Y \}$, thus any edge intervention on this
set (that sets to constant values) is inconsistent for $\{ Y \}$.
An edge intervention corresponding
to $Y((aY)_{\to}, (aM)_{\to})$ is node consistent for $Y$, while
an edge intervention corresponding to $Y((aY)_{\to}, (a'M)_{\to})$ is
consistent, but not node consistent for $Y$.

For an edge intervention
$\eta_{{\mathfrak a}_{\bm\alpha}}$ node consistent for ${\bf Y}$,
define the following set of value assignments to
${\bf A} = \sop_{\cal G}({\bm\alpha})$,
${\bf a}_{\bm\alpha} \equiv
	\{ a \mid \eta \text{ assigns $a$ to } (AB)_{\to} \in {\bm\alpha} \}$.
Let $\nu_{{\bf a}_{\bm\alpha}}$ be the \emph{induced node intervention} for
$\eta_{{\mathfrak a}_{\bm\alpha}}$.

\begin{lem}
Given a DAG ${\cal G}$ with vertices ${\bf V}$, and an edge
intervention $\eta_{{\mathfrak a}_{\bm\alpha}}$ node consistent for
${\bf Y} \subseteq {\bf V}$,
$p({\bf Y}({\mathfrak a}_{\bm\alpha})) = p({\bf Y}({\bf a}_{\bm\alpha}))$.
\label{lem:node-cons}
\end{lem}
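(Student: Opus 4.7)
The plan is to strengthen the statement to the following inductive claim, proved by induction on topological order: for every vertex $V \in {\bf V}$ that either lies in ${\bf Y}$ or admits a directed path to some $Y \in {\bf Y}$ whose every edge lies outside ${\bm\alpha}$, the random variables $V({\mathfrak a}_{\bm\alpha})$ and $V({\bf a}_{\bm\alpha})$ coincide as functions of the canonical counterfactuals $\{U({\bf v}_{\pa_{\cal G}(U)}) : U \in {\bf V}\}$. Applied to each $Y \in {\bf Y}$, this yields equality of the random vectors ${\bf Y}({\mathfrak a}_{\bm\alpha})$ and ${\bf Y}({\bf a}_{\bm\alpha})$, whence equality of their induced joint distributions.

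At the inductive step I would unfold both sides by (\ref{eqn:edge-rec-sub}) and (\ref{eqn:rec-sub}) and compare the substituted value at each parent of $V$. For a parent $W$ with $(WV)_{\to} \in {\bm\alpha}$, the edge rule substitutes the value assigned to $(WV)_{\to}$, which is $a_W$ by node consistency; since such $W$ necessarily lies in ${\bf A}$, the node rule substitutes $a_W$ as well. For a parent $W$ with $(WV)_{\to} \notin {\bm\alpha}$, the edge rule substitutes $W({\mathfrak a}_{\bm\alpha})$, while the node rule substitutes $W({\bf a}_{\bm\alpha})$ if $W \notin {\bf A}$ and $a_W$ if $W \in {\bf A}$. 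The crux is to exclude the case $W \in {\bf A}$: once this is done, prepending $(WV)_{\to}$ to the witnessing path at $V$ gives a non-${\bm\alpha}$ path from $W$ to ${\bf Y}$, so the inductive hypothesis supplies $W({\mathfrak a}_{\bm\alpha}) = W({\bf a}_{\bm\alpha})$ and the two unfoldings at $V$ coincide. The base case ($V$ parentless) follows from the same exclusion forcing $V \notin {\bf A}$, so both sides collapse to the underlying $V$.

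The key subclaim is therefore: whenever $V$ admits a directed path to ${\bf Y}$ with every edge outside ${\bm\alpha}$ and $(WV)_{\to}$ is a parent edge outside ${\bm\alpha}$, we have $W \notin {\bf A}$. Prepending $(WV)_{\to}$ yields a directed path from $W$ to ${\bf Y}$ whose edges all avoid ${\bm\alpha}$; since ${\bm\alpha}$ is a set of single edges, no subpath of this extended path lies in ${\bm\alpha}$, so it belongs to $\rel_{\cal G}({\bf Y} \mid {\bm\alpha})$ with prefix edge $(WV)_{\to} \notin {\bm\alpha}$. Consistency of ${\bm\alpha}$ for ${\bf Y}$ at $W$ then forces every prefix edge of every relevant path starting at $W$ to avoid ${\bm\alpha}$; if $W$ had any outgoing edge $(WZ)_{\to} \in {\bm\alpha}$, liveness of ${\bm\alpha}$ for ${\bf Y}$ would exhibit a relevant path with $(WZ)_{\to}$ as prefix, contradicting that disjointness. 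The main obstacle is precisely this subpath bookkeeping in the subclaim, and it stays light only because ${\bm\alpha}$ is a set of length-one paths, so ``subpath in ${\bm\alpha}$'' reduces to ``single edge in ${\bm\alpha}$'' and path-extension on the left preserves membership in $\rel_{\cal G}({\bf Y} \mid {\bm\alpha})$ automatically.
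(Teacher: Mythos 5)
Your argument is correct, but it is not the route the paper takes. The paper proves Lemma \ref{lem:node-cons} as a one-line corollary of Lemmas \ref{lem:node-as-path} and \ref{lem:edge-as-path}: both the node-intervention response and the edge-intervention response are first translated into responses to path interventions on the canonical path set ${\bm\alpha}_{{\bf A},{\bf Y},{\cal G}}$ (respectively its subset of paths with edge prefixes in ${\bm\alpha}$), and equality is then read off by matching the relevant-path structure of the two path interventions. You instead work directly at the level of the recursive-substitution definitions (\ref{eqn:rec-sub}) and (\ref{eqn:edge-rec-sub}), proving by induction in topological order that the counterfactual random variables themselves coincide, $V({\mathfrak a}_{\bm\alpha}) = V({\bf a}_{\bm\alpha})$, for every $V$ in ${\bf Y}$ or with an ${\bm\alpha}$-free directed path into ${\bf Y}$; your key subclaim (a non-${\bm\alpha}$ parent edge $(WV)_{\to}$ of such a $V$ cannot have $W \in {\bf A}$) is exactly where liveness and consistency of ${\bm\alpha}$ for ${\bf Y}$ enter, and the argument there is sound because ${\bm\alpha}$ consists of single edges. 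What your approach buys is a self-contained, elementary proof that makes the role of each hypothesis explicit and in fact delivers a stronger conclusion (surely-equal random vectors, hence equal joint laws, with no appeal to the path-intervention machinery or to model assumptions); what the paper's approach buys is uniformity, since the same relevant-path formalism is reused verbatim for the path-to-edge reduction in Lemma \ref{lem:edge-cons}. One small correction: in your base case the exclusion $V \notin {\bf A}$ is neither forced nor needed --- a parentless vertex in ${\bf Y} \cap {\bf A}$ is perfectly compatible with liveness and consistency (e.g.\ $A \to Y$ with ${\bm\alpha} = \{(AY)_{\to}\}$ and ${\bf Y} = \{A, Y\}$) --- but both recursions leave a parentless vertex at its natural value, so the base case holds regardless.
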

\begin{proof}
This follows by lemmas \ref{lem:node-as-path} and \ref{lem:edge-as-path}.
\end{proof}

\begin{cor}
If $\eta_{{\mathfrak a}_{\bm\alpha}}$ is node consistent for ${\bf Y}$,
then $p({\bf Y}({\mathfrak a}_{\bm\alpha}))$ is identified as a functional of
$p({\bf V})$ under the SWM via the extended g-formula
(\ref{eqn:g-formula}) for the response to the corresponding induced node
intervention.
\label{cor:g-formula-edge}
\end{cor}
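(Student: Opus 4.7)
The plan is to chain the two pieces that are already in hand: Lemma \ref{lem:node-cons} reduces the edge intervention to a node intervention at the level of the response distribution, and then the extended g-formula identifies the resulting distribution under the SWM. So essentially the corollary follows immediately by composition.

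More concretely, the first step is to invoke Lemma \ref{lem:node-cons}, which tells us that under node consistency of $\eta_{{\mathfrak a}_{\bm\alpha}}$ for ${\bf Y}$, the equality $p({\bf Y}({\mathfrak a}_{\bm\alpha})) = p({\bf Y}({\bf a}_{\bm\alpha}))$ holds, where ${\bf a}_{\bm\alpha}$ is the value assignment to ${\bf A} = \sop_{\cal G}({\bm\alpha})$ induced by $\eta$. This transforms an identification question about a response to an edge intervention into an identification question about a response to a node intervention.

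The second step is to apply equation (\ref{eqn:g-formula}) directly. Since ${\bf A}$ is a subset of ${\bf V}$ and ${\bf a}_{\bm\alpha} \in {\mathfrak X}_{\bf A}$, the extended g-formula identifies $p({\bf Y}({\bf a}_{\bm\alpha}))$ as a functional of $p({\bf V})$ under the SWM. Concatenating these two equalities yields the desired functional expression for $p({\bf Y}({\mathfrak a}_{\bm\alpha}))$.

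There is no substantive obstacle here; the real work was done in the preceding lemmas. The only thing worth being careful about is that the statement of Lemma \ref{lem:node-cons} requires ${\bm\alpha}$ to be live and consistent for ${\bf Y}$, which is already built into the hypothesis of node consistency in the corollary, so the invocation is valid. Thus the proof is essentially a one-line composition of Lemma \ref{lem:node-cons} and equation (\ref{eqn:g-formula}).
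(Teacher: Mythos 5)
Your proposal is correct and is exactly the argument the paper intends: the corollary is stated immediately after Lemma \ref{lem:node-cons} precisely because it follows by composing that lemma's equality $p({\bf Y}({\mathfrak a}_{\bm\alpha})) = p({\bf Y}({\bf a}_{\bm\alpha}))$ with the identification of node-intervention responses via the extended g-formula (\ref{eqn:g-formula}) under the SWM. Your remark that liveness and consistency are built into the definition of node consistency, so the lemma's hypotheses are satisfied, is also accurate.
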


We next show that if an edge intervention is not node consistent, then
responses to this intervention are not identifiable from $p({\bf V})$ under
the SWM.  By this we mean that the definition of identifiability given in
Section \ref{sec:node-id} fails, and more specifically that we can find
two elements of a causal model, in the sense of section \ref{sec:models},
that agree on $p({\bf V})$ but disagree on
the distribution of the response of interest.
We start with a simple example of a non-identified parameter in the
SWM.

\begin{figure}
\begin{center}
  \begin{tikzpicture}[>=stealth, node distance=1.0cm]
    \tikzstyle{square} = [draw, very thick, rectangle, minimum size=5mm]
    \tikzstyle{format} = [draw, very thick, circle, minimum size=5.0mm,
	inner sep=0pt]
  \begin{scope}
    \path[->, very thick]
		node[format] (a) {$A$}
		node[format, below of=a] (b) {$B$}

		(a) edge[blue] (b)
		;
		\node[xshift=0.0cm, yshift=0.7cm] {(a)}
		;
  \end{scope}
  \begin{scope}[xshift=3.0cm]
    \path[->, very thick]
		node[format] (a) {$A$}
		node[format, below left of=a] (b) {$B$}
		node[format, below right of=a] (c) {$C$}

		(a) edge[blue] (b)
		(a) edge[blue] (c)
		;
		\node[xshift=0.0cm, yshift=0.7cm] {(b)}
		;
  \end{scope}
  \begin{scope}[xshift=7.0cm]
    \path[->, very thick]
		node[format] (a) {$A$}
		node[format, below left of=a] (b) {$B$}
		node[format, below right of=a] (c) {$C$}
		node[format, above left of=b] (ub) {$U_B$}
		node[format, above right of=c] (uc) {$U_C$}

		(a) edge[blue] (b)
		(a) edge[blue] (c)
		(ub) edge[blue] (b)
		(uc) edge[blue] (c)
		;
		\node[xshift=0.0cm, yshift=0.7cm] {(c)}
		;
  \end{scope}
  \begin{scope}[xshift=11.0cm]
    \path[->, very thick]
		node[format] (a) {$A$}
		node[format, below left of=a] (b) {$B$}
		node[format, below right of=a] (c) {$C$}
		node[format, above left of=b] (ub) {$U_B$}
		node[format, above right of=c] (uc) {$U_C$}

		(a) edge[blue] (b)
		(a) edge[blue] (c)
		(ub) edge[blue] (b)
		(ub) edge[blue] (c)
		(uc) edge[blue] (b)
		(uc) edge[blue] (c)
		;
		\node[xshift=0.0cm, yshift=0.7cm] {(d)}
		;
  \end{scope}

  \end{tikzpicture}
\end{center}
\caption{}
\label{fig:non-id}
\end{figure}

\begin{lem}
Responses
$p(\{ B, C \}((aB)_{\to},(a'C)_{\to})$,
$p(\{ B, C \}((aB)_{\to})$,\\
and $p(\{ B, C \}((aC)_{\to})$, are not identifiable from
$p(A,B,C)$ under the SWM for Fig. \ref{fig:non-id} (b).
\label{lem:non-id-edge}
\end{lem}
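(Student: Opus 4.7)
The plan is to exhibit two causal structures $c_1(\{A,B,C,U_B,U_C\},{\cal G}_2)$ and $c_2(\{A,B,C,U_B,U_C\},{\cal G}_2)$ that both lie in the SWM for the DAG in Fig.~\ref{fig:non-id}~(b), induce the same observable law $p(A,B,C)$, yet disagree on each of the three responses in the statement. The structures are the ones implicit in Fig.~\ref{fig:non-id}~(c) and (d): take $A, U_B, U_C$ mutually independent $\mathrm{unif}(\{0,1\})$, and in $c_1$ set $B(a) = a \oplus U_B$ and $C(a) = a \oplus U_C$, while in $c_2$ set $B(a) = U_B^{a} \cdot U_C^{(1-a)}$ and $C(a) = U_B^{(1-a)} \cdot U_C^{a}$ (so that in $c_2$, $B(1) = C(0) = U_B$ and $B(0) = C(1) = U_C$).

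The first step is to check that both structures are in the SWM of ${\cal G}_2$. The SWM requires, for each fixed $a \in \{0,1\}$, that the one-step-ahead counterfactuals $A$, $B(a)$, $C(a)$ be mutually independent. In $c_1$ this is immediate because they are deterministic functions of the mutually independent $A, U_B, U_C$. In $c_2$, for any fixed $a$, $B(a)$ and $C(a)$ are functions of distinct noise variables ($U_B$ vs.\ $U_C$ when $a=1$, and $U_C$ vs.\ $U_B$ when $a=0$), and $A$ is independent of $(U_B, U_C)$, so mutual independence holds.

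The second step is to verify the two structures agree on $p(A,B,C)$. By recursive substitution, $B = B(A)$ and $C = C(A)$. In $c_1$, $(A, A\oplus U_B, A\oplus U_C)$ is uniform on $\{0,1\}^3$ since $(A,U_B,U_C)$ is. In $c_2$, conditional on $A=1$ we have $B=U_B$ and $C=U_C$ (independent and uniform), and conditional on $A=0$ we have $B=U_C$ and $C=U_B$ (again independent and uniform), so $p(A,B,C)$ is again uniform.

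The final and decisive step is to compute the three responses under each structure and observe that they differ. For $p(\{B,C\}((aB)_{\to}, (a'C)_{\to})) = p(B(a), C(a'))$ with $a \neq a'$: in $c_1$ the variables $B(a) = a \oplus U_B$ and $C(a') = a' \oplus U_C$ are independent and uniform, whereas in $c_2$ with $(a,a')=(1,0)$ we get $B(1) = U_B = C(0)$ almost surely, so the two laws differ. For $p(\{B,C\}((aB)_{\to})) = p(B(a), C)$, by similar bookkeeping $c_1$ again yields the uniform law on $\{0,1\}^2$, while in $c_2$ the joint places mass $\tfrac14 \mathbb{I}(b=c) + \tfrac18$ on $(b,c)$ (since marginalizing over the natural $A$ creates a dependence through the shared noise), which is non-uniform. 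The case $p(\{B,C\}((aC)_{\to}))$ is symmetric. The only real subtlety to track is step two: one must be careful that ``in the SWM'' refers to independence under each single intervention value separately, which is precisely the property $c_2$ is designed to satisfy while violating cross-regime independence; everything else is routine arithmetic on Bernoulli noise.
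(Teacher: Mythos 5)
Your proposal is correct and follows essentially the same route as the paper: it exhibits the two causal structures corresponding to Fig.~\ref{fig:non-id}~(c) and (d) (with $B = A \oplus U_B$, $C = A \oplus U_C$ versus $B = U_B^{A}U_C^{1-A}$, $C = U_B^{1-A}U_C^{A}$), checks single-world mutual independence and agreement on the uniform observed law $p(A,B,C)$, and then shows the cross-edge responses differ because the masked confounding through $U_B,U_C$ surfaces under conflicting assignments. The arithmetic for all three responses matches the paper's computations, so nothing further is needed.
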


The proofs of this result, which appears in the appendix, exhibits two causal
structures $c_1(\{A,B,C\},{\cal G})$, and $c_2(\{A,B,C\},{\cal G})$ that
agree on\\
$p(A,B,C)$, but disagree on the above responses to (node inconsistent)
edge interventions.  These two structures corresponding to graphs in
Fig. \ref{fig:non-id} (c), (d).  In particular, $c_2$ is constructed in such a
way that the confounding of $B$ and $C$ introduced by $U_B$ and $U_C$ is masked
under any single node intervention, but manifests if we consider responses to
multiple interventions simultaneously.  This is similar in spirit to an example
in \cite{robins10alternative}.
We can extend this simple example to a general result, due to the following
lemma (stated in a more general form in terms of path rather than edge
interventions).

\begin{lem}
Let ${\cal G}$ be a DAG, ${\bf Y}$,${\bf A}$ disjoint sets of vertices in
${\cal G}$, ${\bm\alpha}$ a set of live directed paths proper for ${\bf Y}$.
Let ${\cal G}^*$ be any edge supergraph of ${\cal G}$, ${\bf Y}^*$ any
superset of ${\bf Y}$ in ${\cal G}^*$, ${\bm\alpha}^*$ a superset of
${\bm\alpha}$ in ${\cal G}^*$ live and proper for ${\bf Y}^*$, such that
every path in ${\bm\alpha}^*\setminus{\bm\alpha}$ does not exist in ${\cal G}$.
Finally, let $\pi_{{\mathfrak a}_{{\bm\alpha}^*}}$ be a path intervention.
If $p({\bf Y}({\mathfrak a}_{{\bm\alpha}}),{\bf A})$ is not identified under
the MWM (SWM) for ${\cal G}$, then
$p({\bf Y}^*({\mathfrak a}_{{\bm\alpha}^*}),{\bf A})$ and
$p({\bf Y}({\mathfrak a}_{{\bm\alpha}^*}),
{\bf A} \cup {\bf Y}^*\setminus{\bf Y})$
are not identified under the MWM (SWM) for ${\cal G}^*$.
\label{lem:small-to-large}
\end{lem}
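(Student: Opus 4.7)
The plan is to prove both claims simultaneously by contrapositive: assume one of the larger responses in $\mathcal{G}^*$ is identified under the MWM (or SWM), and derive identification of $p({\bf Y}({\mathfrak a}_{\bm\alpha}),{\bf A})$ in $\mathcal{G}$, contradicting the hypothesis. Fix two witnessing causal structures $c_1({\bf V},{\cal G}), c_2({\bf V},{\cal G})$ in the MWM (SWM) for ${\cal G}$ that agree on $p({\bf V})$ but disagree on the target $p({\bf Y}({\mathfrak a}_{\bm\alpha}),{\bf A})$. I will lift each $c_i$ to a structure $c_i^*({\bf V}^*,{\cal G}^*)$ preserving this disagreement while matching on $p({\bf V}^*)$.

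The lift is canonical: for every $V\in{\bf V}$ and every assignment ${\bf v}^*_{\pa_{{\cal G}^*}(V)}$, define the one-step-ahead potential outcome $V({\bf v}^*_{\pa_{{\cal G}^*}(V)})$ in $c_i^*$ to equal $V({\bf v}^*_{\pa_{\cal G}(V)})$ from $c_i$, so that parents in $\pa_{{\cal G}^*}(V)\setminus\pa_{\cal G}(V)$ are causally inert. For each $V^*\in{\bf V}^*\setminus{\bf V}$, set $V^*(\cdot)\equiv 0$ (a fixed constant). The joint law over all one-step-ahead potential outcomes is inherited from $c_i$ with the constants appended independently. It is routine to verify that (i) $c_i^*$ lies in the MWM (or SWM) for ${\cal G}^*$, since the only new potential outcome families are either relabelings of originally mutually independent families or degenerate constants, and (ii) the observed distribution $p({\bf V}^*)$ is the same under $c_1^*$ and $c_2^*$, because restricted to ${\bf V}$ it coincides with $p({\bf V})$ from $c_i$, and the extras ${\bf V}^*\setminus{\bf V}$ are constant.

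The crux of the proof is the following inductive claim, proved by topological induction on $V\in{\bf V}$ in ${\cal G}^*$: \emph{for every $V\in{\bf V}$, the response $V({\mathfrak a}_{{\bm\alpha}^*})$ in $c_i^*$ equals $V({\mathfrak a}_{{\bm\alpha}})$ in $c_i$.} Since $V$'s one-step-ahead outcome ignores the parents in $\pa_{{\cal G}^*}(V)\setminus\pa_{\cal G}(V)$, the recursion (\ref{eqn:path-rec-sub}) restricts to $\mathcal{G}$-parents, and the length-1 paths in ${\bm\alpha}^*$ that end at $V$ and are not in $\mathcal G$ contribute nothing. The technical heart is showing that the funnel operator commutes with restriction to ${\cal G}$-paths: for any $W\in\pa_{\cal G}(V)$ with $(WV)_{\to}\in{\cal G}$, one verifies directly from the funnel definition that $(\lhd_{(WV)_{\to}}({\bm\alpha}^*))\cap\{\text{paths in }{\cal G}\}=\lhd_{(WV)_{\to}}({\bm\alpha})$, using the hypothesis that every path in ${\bm\alpha}^*\setminus{\bm\alpha}$ lies outside ${\cal G}$ together with the fact that a truncation $(A\ldots W)_{\to}$ lies in ${\cal G}$ iff $(A\ldots WV)_{\to}$ does, since $(WV)_{\to}\in{\cal G}$. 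The inductive hypothesis on $W$ then collapses the response along extra paths.

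Since $A\in{\bf A}$ is not intervened upon and the observed values of ${\bf V}$ agree in $c_i^*$ and $c_i$, the claim upgrades to the joint statement $p({\bf Y}({\mathfrak a}_{{\bm\alpha}^*}),{\bf A})$ in $c_i^*$ equals $p({\bf Y}({\mathfrak a}_{\bm\alpha}),{\bf A})$ in $c_i$, which therefore differs between $c_1^*$ and $c_2^*$. Both conclusions now follow by marginalization: marginalizing $p({\bf Y}^*({\mathfrak a}_{{\bm\alpha}^*}),{\bf A})$ over ${\bf Y}^*\setminus{\bf Y}$ (respectively, marginalizing $p({\bf Y}({\mathfrak a}_{{\bm\alpha}^*}),{\bf A}\cup{\bf Y}^*\setminus{\bf Y})$ over ${\bf Y}^*\setminus{\bf Y}$) yields $p({\bf Y}({\mathfrak a}_{{\bm\alpha}^*}),{\bf A})$, whose disagreement forces disagreement of the full distributions, and thus non-identification under the MWM (or SWM) for ${\cal G}^*$. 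The main obstacle is the bookkeeping of the funnel/restriction commutation, which must be verified case-by-case against the three clauses of the funnel operator's definition.
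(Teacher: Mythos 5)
Your proposal is correct and follows essentially the same route as the paper's argument: the witnessing pair of causal structures for ${\cal G}$ is carried into the model for ${\cal G}^*$ (the paper asserts this membership directly, you make the inert lift and the consequent equality ${\bf Y}({\mathfrak a}_{\bm\alpha})={\bf Y}({\mathfrak a}_{{\bm\alpha}^*})$ in those structures explicit via the funnel/restriction induction), and then both larger targets are ruled out by marginalizing over ${\bf Y}^*\setminus{\bf Y}$. The extra bookkeeping you supply is a fleshed-out version of the steps the paper states in one line, not a different method.
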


\begin{thm}
Consider a DAG ${\cal G}$ with vertices ${\bf V}$, and a set of edges
${\bm\alpha}$ live for ${\bf Y}$.  Then $p({\bf Y}({\mathfrak a}_{\bm\alpha}))$
is identifiable from $p({\bf V})$ under the SWM for ${\cal G}$ if
and only if $\eta_{{\mathfrak a}_{\bm\alpha}}$ is node consistent.
Moreover, if $p({\bf Y}({\mathfrak a}_{\bm\alpha}))$ is identifiable, it is
given by the extended g-formula (\ref{eqn:g-formula})
for $p({\bf Y}({\bf a}_{\bm\alpha}))$, the response to the
induced node intervention.
\label{thm:non-id-edge}
\end{thm}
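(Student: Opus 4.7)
}
The $(\Leftarrow)$ direction is essentially already in hand: if $\eta_{{\mathfrak a}_{\bm\alpha}}$ is node consistent for ${\bf Y}$, then Lemma \ref{lem:node-cons} gives $p({\bf Y}({\mathfrak a}_{\bm\alpha})) = p({\bf Y}({\bf a}_{\bm\alpha}))$, and the extended g-formula (\ref{eqn:g-formula}) identifies the right-hand side under the SWM, as summarized by Corollary \ref{cor:g-formula-edge}. This simultaneously establishes the ``moreover'' clause of the theorem.

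For the $(\Rightarrow)$ direction, I would assume $\eta_{{\mathfrak a}_{\bm\alpha}}$ is node inconsistent and split into two subcases drawn straight from the definition of node consistency. \emph{Case (a):} ${\bm\alpha}$ fails to be consistent for ${\bf Y}$, so there is a node $A$ and two paths $\alpha_1,\alpha_2\in\rel_{\cal G}({\bf Y}\mid{\bm\alpha})$ with $\sop_{\cal G}(\alpha_i)=A$ whose prefix edges are $(AB)_{\to}\in{\bm\alpha}$ and $(AC)_{\to}\notin{\bm\alpha}$, with $B\neq C$. \emph{Case (b):} ${\bm\alpha}$ is consistent for ${\bf Y}$, but the assignment gives distinct values $a\neq a'$ to two edges $(AB)_{\to},(AC)_{\to}\in{\bm\alpha}$ sharing source $A$; liveness of these edges gives two relevant paths $\alpha_1,\alpha_2\in\rel_{\cal G}({\bf Y}\mid{\bm\alpha})$ with these edges as prefixes. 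In either case let $Y_1=\sink_{\cal G}(\alpha_1),Y_2=\sink_{\cal G}(\alpha_2)\in{\bf Y}$.

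The next step is to reduce to Lemma \ref{lem:non-id-edge}. Consider the edge subgraph ${\cal G}'$ of ${\cal G}$ whose edges are exactly those in $\alpha_1\cup\alpha_2$, with outcome set ${\bf Y}'=\{Y_1,Y_2\}$ and edge set ${\bm\alpha}'={\bm\alpha}\cap(\alpha_1\cup\alpha_2)$ carrying the restricted assignment ${\mathfrak a}'_{{\bm\alpha}'}$. I would show that $p(\{Y_1,Y_2\}({\mathfrak a}'_{{\bm\alpha}'}))$ is not identified from $p({\bf V}({\cal G}'))$ under the SWM for ${\cal G}'$. This is done by lifting the two causal structures $c_1,c_2$ from the proof of Lemma \ref{lem:non-id-edge}: place the ``twin'' mechanisms of $c_1,c_2$ at $A$ (with $B,C$ playing the roles of $B,C$ in Fig.~\ref{fig:non-id}(b)) and use deterministic identity-type mechanisms along the remaining edges of $\alpha_1$ and $\alpha_2$ so that $Y_1$ is a bijective function of $B$ and $Y_2$ is a bijective function of $C$. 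The observed distribution $p({\bf V}({\cal G}'))$ then matches across $c_1,c_2$ because $p(A,B,C)$ does, while the joint response $(\{Y_1,Y_2\})({\mathfrak a}')$ disagrees in exactly the ways flagged by Lemma \ref{lem:non-id-edge} for cases (a) and (b) respectively. Finally, invoke Lemma \ref{lem:small-to-large} (with ${\bf A}=\emptyset$) to extend the non-identifiability from $({\cal G}',\{Y_1,Y_2\},{\bm\alpha}')$ to $({\cal G},{\bf Y},{\bm\alpha})$, using ${\bf Y}\supseteq\{Y_1,Y_2\}$ and that every path in ${\bm\alpha}\setminus{\bm\alpha}'$ contains an edge absent from ${\cal G}'$.

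The hard part will be the construction in the middle step, which has two subtleties. First, $\alpha_1$ and $\alpha_2$ may share intermediate vertices beyond $A$; when this happens one must argue that non-identifiability survives in the contracted subgraph by choosing the common downstream mechanisms so the disagreement on $p(B,C,\ldots)$ still passes through to a disagreement on $p(Y_1,Y_2)$ (when $Y_1=Y_2$, non-identifiability must be argued for the marginal of a single outcome, a variant of Lemma \ref{lem:non-id-edge} obtained by collapsing $B,C$ into a downstream vertex). Second, care is needed because ${\bm\alpha}'$ might be a proper subset of the prefix edges of relevant paths in ${\cal G}'$; this is actually the property that puts us in Lemma \ref{lem:non-id-edge}'s regime, so I would verify that the induced restriction is still live and still exhibits the required node inconsistency. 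Once this construction is in place, the rest of the argument is just bookkeeping via Lemma \ref{lem:small-to-large} and the already-established ``moreover'' clause.
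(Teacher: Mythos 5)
Your overall route is the same as the paper's: the $(\Leftarrow)$ direction plus the ``moreover'' clause via Lemma \ref{lem:node-cons} and Corollary \ref{cor:g-formula-edge}; for $(\Rightarrow)$, extracting two conflicting relevant paths out of $A$, building a counterexample pair of causal structures on the edge subgraph spanned by those paths by lifting $c_1,c_2$ from Lemma \ref{lem:non-id-edge}, and then lifting non-identifiability to $({\cal G},{\bf Y},{\bm\alpha})$ with Lemma \ref{lem:small-to-large}. Your bookkeeping for that last step (every edge of ${\bm\alpha}\setminus{\bm\alpha}'$ is absent from ${\cal G}'$ because relevant paths contain no element of ${\bm\alpha}$ except possibly a prefix) is correct, and your case split (a)/(b) matches the definition of node (in)consistency.

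The genuine gap is the case your construction defers: when the two witnessing paths are not vertex-disjoint beyond $A$ — in particular when they reconverge and run to a single outcome $Y_1=Y_2$. This case cannot be assumed away (at best one can reroute so the paths either are disjoint with distinct sinks, or merge once at a vertex $W$ and share a suffix into a single sink), and it is exactly where ``deterministic identity-type mechanisms so that $Y_1$ is a bijective function of $B$ and $Y_2$ of $C$'' breaks down. The reason it is delicate is that under the offending intervention the structures $c_1,c_2$ of Lemma \ref{lem:non-id-edge} \emph{agree on the marginal laws} of $B$ and of $C$ (both uniform) and differ only in their joint law; a merge vertex $W$ that is a function of one branch, or any mechanism chosen without regard to this, will wash the disagreement out, and then no downstream bijection can recover it. One must instead choose the mechanism at $W$, as a kernel in \emph{both} incoming branches (an XOR-type or otherwise suitably chosen stochastic map), so that the two specific joint laws over the branch variables induce different laws for $W$, and then propagate that disagreement along the shared suffix with invertible (one-to-one) kernels — this is precisely the paper's ``case 2'' construction with the two-parent node $W$, and it is also what your phrase ``a variant of Lemma \ref{lem:non-id-edge} obtained by collapsing $B,C$'' would have to amount to. As written, your plan proves the theorem only in the disjoint-paths/distinct-outcomes configuration; without the explicit merge-vertex construction (and the rerouting argument that reduces arbitrary intersections to the merge-and-share-suffix pattern), the ``only if'' direction is not established.
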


What we have shown is that node consistent edge interventions are
identifiable under the SWM, but an edge
intervention that is node inconsistent is not, as long as
this inconsistency is ``causally relevant'' for some response, in the sense
of there existing causal pathways from the inconsistent edges to some
responses that are not interrupted by other parts of the
edge intervention.  However, if we are willing to adopt stronger independence
assumptions of the MWM, we obtain identification of any edge intervention
via a modification of the g-formula, as the following result shows.

\begin{lem}[edge g-formula]
For a DAG ${\cal G}$ with vertices ${\bf V}$, and an edge intervention
$\eta_{{\mathfrak a}_{\bm\alpha}}$ on an edge set ${\bm\alpha}$,
we have, under the MWM for ${\cal G}$,
\begin{align}
p({\bf V}({\mathfrak a}_{\bm\alpha}) = {\bf v}) = \prod_{V \in {\bf V}}
	p(V = {\bf v}_V \mid {\bf v}_{\pa_{\cal G}^{\overline{\bm\alpha}}(V)},
	{\mathfrak a}_{\{ (WV)_{\to} \in {\bm\alpha} \}}),
\label{eqn:g-formula-edge}
\end{align}
where $\pa_{\cal G}^{\overline{\bm\alpha}}(V) \equiv
\{ A \in \pa_{\cal G}(V) \mid (AV)_{\to} \not\in {\bm\alpha} \}$.
\label{lem:g-formula-edge}
\end{lem}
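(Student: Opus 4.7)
The plan is to reduce the joint response $\mathbf{V}(\mathfrak{a}_{\bm\alpha})$ to a collection of the primitive one-step-ahead potential outcomes $\{V({\bf v}_{\pa_{\cal G}(V)})\}$, factor using the MWM mutual-independence assumption, and then invoke consistency to convert each factor into the advertised conditional probability. First I would expand the recursive substitution identity (\ref{eqn:edge-rec-sub}) one step: for each $V$,
\[
V(\mathfrak{a}_{\bm\alpha}) = V(\mathfrak{a}_{\{(*V)_{\to}\in\bm\alpha\}},\,\{\pa_{\cal G}^{\overline{\bm\alpha}}(V)\}(\mathfrak{a}_{\bm\alpha})).
\]
Then, by induction on a topological order of ${\cal G}$, I would show that the joint event $E_1 = \bigcap_V \{V(\mathfrak{a}_{\bm\alpha}) = {\bf v}_V\}$ coincides with the event $E_2 = \bigcap_V \{V({\bf b}_V) = {\bf v}_V\}$, where ${\bf b}_V$ is the constant assignment to $\pa_{\cal G}(V)$ whose coordinates on edges $(WV)_{\to}\in\bm\alpha$ are read off from $\mathfrak{a}_{\bm\alpha}$ and whose remaining coordinates are the ${\bf v}_W$ for $W\in\pa_{\cal G}^{\overline{\bm\alpha}}(V)$. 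The inclusion $E_1 \subseteq E_2$ is immediate from the one-step unrolling; the reverse inclusion climbs the topological order, using that once every parent $W$ of $V$ already satisfies $W(\mathfrak{a}_{\bm\alpha}) = {\bf v}_W$, the unrolled identity forces $V(\mathfrak{a}_{\bm\alpha}) = V({\bf b}_V) = {\bf v}_V$.

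Having $p(E_1) = p(E_2)$, I would next apply the MWM, which asserts that the sets $\{V({\bf a}_V)\mid {\bf a}_V\in{\mathfrak X}_{\pa_{\cal G}(V)}\}$ are mutually independent across $V$. Since each $V({\bf b}_V)$ belongs to its respective set and ${\bf b}_V$ is a constant, the random variables $\{V({\bf b}_V)\}_{V\in{\bf V}}$ are jointly mutually independent, and so $p(E_2) = \prod_V p(V({\bf b}_V) = {\bf v}_V)$. For each factor I would argue that the natural $\pa_{\cal G}(V)$ is, by recursive substitution, a function only of primitive potential outcomes at strict ancestors of $V$, hence lies in sets independent of $\{V({\bf a}_V)\mid {\bf a}_V\}$ under MWM; combined with the consistency identity $V(\pa_{\cal G}(V)) = V$ that is already folded into (\ref{eqn:rec-sub}), this yields $p(V({\bf b}_V) = {\bf v}_V) = p(V = {\bf v}_V \mid \pa_{\cal G}(V) = {\bf b}_V)$, which is exactly the factor appearing in (\ref{eqn:g-formula-edge}).

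The hard part, I expect, will be the event equality $E_1 = E_2$ in the first step: the definition (\ref{eqn:edge-rec-sub}) is recursive and its termination relies on acyclicity, so some care is needed to ensure that the $V({\bf b}_V)$ appearing in $E_2$ is genuinely the primitive potential outcome declared by the graph, rather than a further-substituted expression, and that the ordering of substitutions does not create hidden dependence between the arguments of $V(\cdot)$ and the response being computed. The subsequent factorization and the consistency-based conversion of each $p(V({\bf b}_V) = {\bf v}_V)$ into a conditional are essentially bookkeeping once the MWM independence statement and the identification of node interventions in Section \ref{sec:node-id} are in hand, so the work concentrates almost entirely on the topological induction that makes the reduction to one-step-ahead counterfactuals precise.
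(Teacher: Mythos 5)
Your proposal is correct and takes essentially the same route as the paper's own proof: reduce each response along a topological order to the one-step-ahead counterfactual $V({\bf b}_V)$ via the recursive substitution definition (the paper does this inside a chain-rule factorization over $\pre_{{\cal G},\prec}(V)$, you do it up front as the event identity $E_1=E_2$), then factor using the MWM block independence, and finally convert each factor $p(V({\bf b}_V)={\bf v}_V)$ into the stated conditional by the same independence-plus-consistency argument the paper attributes to properties of the SWM. The differences are purely organizational, so no gap.
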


For example, in the graph in Fig. \ref{fig:triangle} (a),
we can express the distribution of the response of
$Y((a'M)_{\to}, (aY)_{\to})$
using (\ref{eqn:g-formula-edge}) as follows:
\begin{align*}
p(Y(a,M(a')) = y) &=
	\sum_{w,a'',m} p(y \mid m,a) p(m \mid a',w) p(a'' \mid w) p(w)\\
	&= \sum_{m,w} p(y \mid m,a) p(m \mid a',w) p(w)
\end{align*}
If we are interested in a mean difference parameter, for example
$\mathbb{E}[Y(a,M(a'))] - \mathbb{E}[Y(a')]$, and assume there are no baseline
factors $W$, the above reduces to
\[
\sum_{m} \left\{ \mathbb{E}[Y \mid m,a] -
\mathbb{E}[Y \mid m,a'] \right\} p(m \mid a')
\]
which recovers the well known \emph{mediation formula} \cite{pearl11cmf}.

The independence assumptions which were necessary to derive this functional,
namely $(Y(m,a) \ci M(a') \ci A)$, are implied by the MWM for the graph
in Fig. \ref{fig:triangle} (a).  It is possible to consider
such assumptions independently of a graph.  However the advantage of graphs
is their ability to encode assumptions of this type \emph{systematically},
which allowed us to derive such functionals for a wide variety of problems,
and moreover, to give simple visual characterizations of when
such derivations are possible.

\subsection{Identification of Path Interventions}

As we saw in the previous section, identification of responses to edge
interventions under the SWM requires node consistency, while any
joint response to any edge intervention is identified under the MWM.  In
this section we show that path interventions are identified under the MWM
as long as \emph{edge consistency} holds, that is as long as a path
intervention can be expressed as an edge intervention.  Lack of edge consistency
will result in non-identification under the MWM.  The presence of a
``recanting witness'' in a path-specific effect \cite{chen05ijcai} can be
viewed as a special case of the lack of edge consistency.


A set of directed paths ${\bm\alpha}$ live for ${\bf Y}$ is called
\emph{consistent} for ${\bf Y}$ if for every edge $(AB)_{\to}$ that is
an edge prefix of $\alpha \in {\bm\alpha}$, if $(AB)_{\to}$
is in $\beta \in \rel_{\cal G}({\bf Y}\mid{\bm\alpha})$, then
$(AB)_{\to}$ is an edge prefix of a prefix subpath of $\beta$
in ${\bm\alpha}$.

For a proper set of directed paths ${\bm\alpha}$ live and consistent for
${\bf Y}$, we call a path intervention
$\pi_{{\mathfrak a}_{\bm\alpha}}$ \emph{edge consistent} for
${\bf Y}$ if for every edge $(AB)_{\to}$, all paths in ${\bm\alpha}$ with
$(AB)_{\to}$ as a prefix are assigned the same value (say $a$).
Any path intervention that is not edge consistent we call edge inconsistent,
including any path intervention on a set of paths not consistent for an outcome
set of interest.

The path set $\{ (WAMY)_{\to} \}$ in Fig. \ref{fig:triangle} (a) is live but not
consistent for $\{ Y \}$, thus any path intervention on this set is
inconsistent for $\{ Y \}$.  A path intervention corresponding to
$Y((wAMY)_{\to}, (wAY)_{\to})$ is edge consistent for $Y$, while a
path intervention corresponding to
$Y((wAMY)_{\to}, (w'AY)_{\to})$ is consistent for $Y$, but not edge
consistent for $Y$.

For a path intervention $\pi_{{\mathfrak a}_{\bm\alpha}}$ edge consistent
for ${\bf Y}$, define the set of edges
${\bm\alpha}_1 \equiv \{ (AB)_{\to} \mid
	(AB)_{\to} \text{ is a prefix for } \alpha \in {\bm\alpha} \}$.
Let $\eta_{{\mathfrak a}_{{\bm\alpha}_1}}$
be the induced edge intervention for $\pi_{{\mathfrak a}_{\bm\alpha}}$,
where $\eta$ assigns $(AB)_{\to} \in {\bm\alpha}_1$ to the
value assigned by $\pi$ to all $\alpha \in {\bm\alpha}$ which have $(AB)_{\to}$
as an edge prefix.

\begin{lem}
Given a DAG ${\cal G}$ with vertices ${\bf V}$, and a path intervention
$\pi_{{\mathfrak a}_{\bm\alpha}}$ edge consistent for ${\bf Y} \subseteq
{\bf V}$, $p({\bf Y}({\mathfrak a}_{{\bm\alpha}})) =
p({\bf Y}({\mathfrak a}_{{\bm\alpha}_1}))$.
\label{lem:edge-cons}
\end{lem}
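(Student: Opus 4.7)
The plan is to chain two equivalences with Lemma \ref{lem:edge-as-path} as a bridge. First, by that lemma applied to the edge intervention $\eta_{{\mathfrak a}_{{\bm\alpha}_1}}$, we have $p({\bf Y}({\mathfrak a}_{{\bm\alpha}_1})) = p({\bf Y}({\mathfrak a}_{{\bm\alpha}^*}))$, where ${\bm\alpha}^*$ is the set of all directed paths from $\sop_{\cal G}({\bm\alpha}_1)$ to ${\bf Y}$ whose initial edge lies in ${\bm\alpha}_1$, each path inheriting the assignment carried by its edge prefix. It then suffices to establish $p({\bf Y}({\mathfrak a}_{\bm\alpha})) = p({\bf Y}({\mathfrak a}_{{\bm\alpha}^*}))$; observe that the hypothesis of edge consistency guarantees ${\bm\alpha}$ and ${\bm\alpha}^*$ share the same edge prefix set ${\bm\alpha}_1$ with matching edge-prefix assignments.

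For this second equivalence, I would prove the stronger claim by structural induction on the recursive unfolding (\ref{eqn:path-rec-sub}): if ${\bm\alpha}$ and ${\bm\alpha}'$ are two proper, live, edge consistent path sets for ${\bf Y}$ sharing the same edge prefix set and edge prefix assignments, then $V({\mathfrak a}_{\bm\alpha}) = V({\mathfrak a}_{{\bm\alpha}'})$ for every vertex $V$ reachable by relevant paths. At each inductive step the direct assignments (via length-$1$ paths ending at $V$) agree, because such paths are their own edge prefixes and are therefore identified by ${\bm\alpha}_1$, while recursive substitutions for parents $W \in \pa^{\overline{\bm\alpha}}_{\cal G}(V)$ invoke the induction hypothesis on the funneled sets $\lhd_{(WV)_{\to}}({\bm\alpha})$ and $\lhd_{(WV)_{\to}}({\bm\alpha}')$.

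The main obstacle is showing that both edge consistency and the shared edge prefix structure are preserved by the funnel operator. I would verify two closure properties: (i) if ${\bm\alpha}$ is edge consistent for ${\bf Y}$, then $\lhd_{(WV)_{\to}}({\bm\alpha})$ is edge consistent for the funneled outcome set associated with the sub-recursion at $W$; this uses the consistency condition, which forbids any prefix edge of ${\bm\alpha}$ from reappearing mid-path in a relevant computation, so in particular the truncation step of the funnel operator cannot create inconsistencies. (ii) The funnel operator depends only on the structural relationship of each individual path to the pair $(W, V)$, and the only clause that alters a path truncates $(A,\ldots,W,V)_{\to}$ to $(A,\ldots,W)_{\to}$, which preserves the source and hence the edge prefix; thus paths in ${\bm\alpha}$ and ${\bm\alpha}'$ sharing the same edge prefix remain structurally parallel after funneling, with matching prefix assignments carried through.

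Combined with Lemma \ref{lem:proper-funnel}, which guarantees properness is preserved, these closure properties allow the induction hypothesis to propagate through every recursion step, so that $V({\mathfrak a}_{\bm\alpha}) = V({\mathfrak a}_{{\bm\alpha}^*})$ for every $V \in {\bf Y}$. Composing with the Lemma \ref{lem:edge-as-path} equivalence then yields $p({\bf Y}({\mathfrak a}_{\bm\alpha})) = p({\bf Y}({\mathfrak a}_{{\bm\alpha}_1}))$.
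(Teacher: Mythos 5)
Your route is genuinely different from the paper's. The paper does not pass through Lemma \ref{lem:edge-as-path} at all: it compares the path intervention with the induced edge intervention directly, by exhibiting a bijection between the paths in $\rel_{\cal G}({\bf Y}\mid{\bm\alpha})$ and in $\rel_{\cal G}({\bf Y}\mid{\bm\alpha}_1)$ that either have a prefix in the respective intervention set or have parent-less sources and contain no element of it, invoking the consistency hypothesis twice (once in each direction of the correspondence), and then noting that these are exactly the branches of the two recursive unfoldings and that the values attached to them agree by the definition of $\eta_{{\mathfrak a}_{{\bm\alpha}_1}}$. A recursive argument like yours could in principle be made to work, but as written it has a genuine gap: the two ``closure properties'' of the funnel operator on which your induction rests are asserted, not proved, and they are precisely where the content of the lemma lives. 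Note also that your intermediate claim (two edge consistent path sets with the same edge prefixes and assignments have identical responses) is essentially as strong as the lemma itself, so nothing is gained until those properties are established.

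Concretely: for (ii), you argue that ``the only clause that alters a path truncates $(A,\ldots,W,V)_{\to}$ to $(A,\ldots,W)_{\to}$,'' but the removal clause of $\lhd_{(WV)_{\to}}$ alters the \emph{set} -- it can delete every ${\bm\alpha}$-path carrying a given edge prefix while a path with that same prefix survives in ${\bm\alpha}'$ (or vice versa), so after one funneling step the two sets need not share edge prefix sets, and your induction hypothesis no longer applies. Ruling this divergence out requires showing that it can only happen for edges that are no longer relevant in the funneled context, which forces you to track how $\rel_{\cal G}(\cdot\mid\cdot)$ transforms under funneling -- exactly the bookkeeping the paper's bijection performs. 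For (i), consistency and liveness are defined only relative to an outcome set; the sub-recursion at $W$ has outcome $\{W\}$, and the funneled sets typically contain paths that are dead for it, so ``edge consistent for the funneled outcome set'' is not a well-posed induction hypothesis until you prune dead paths (a Lemma \ref{lem:path-live}-type step) and re-verify consistency at every stage. Finally, your gloss of consistency as forbidding a prefix edge from ``reappearing mid-path'' drops half of the condition: consistency also demands that every relevant path whose \emph{first} edge is a prefix edge of ${\bm\alpha}$ itself has a prefix in ${\bm\alpha}$, and it is this coverage requirement that guarantees the assigned rather than natural value of the source is transmitted along that edge in every relevant context (e.g.\ with ${\bf Y}=\{Y_1,Y_2\}$ in $A\to B$, $B\to Y_1$, $B\to Y_2$ and ${\bm\alpha}=\{(ABY_1)_{\to}\}$, the first half holds, the coverage half fails, and the conclusion of your intermediate claim fails with it). Until (i) and (ii) are proved with consistency doing this work, the induction does not go through.
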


\begin{cor}
If $\pi_{{\mathfrak a}_{\bm\alpha}}$ is edge consistent for ${\bf Y}$,
then the distribution $p({\bf Y}({\mathfrak a}_{{\bm\alpha}}))$ is identified
as a functional of $p({\bf V})$ under the MWM model via the edge
g-formula for the response to the corresponding induced edge intervention.
\label{cor:g-formula-path}
\end{cor}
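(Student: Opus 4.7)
The plan is to chain together the two preceding identification results, reducing the path-intervention response to an edge-intervention response and then applying the edge g-formula. This is a two-step argument with essentially no additional combinatorial work required, since the heavy lifting has already been done in Lemma \ref{lem:edge-cons} and Lemma \ref{lem:g-formula-edge}.

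First, I would invoke Lemma \ref{lem:edge-cons}: because $\pi_{{\mathfrak a}_{\bm\alpha}}$ is edge consistent for ${\bf Y}$, the set ${\bm\alpha}_1$ of edge prefixes of paths in ${\bm\alpha}$ is well-defined and the induced edge intervention $\eta_{{\mathfrak a}_{{\bm\alpha}_1}}$ assigns a unique value to each such prefix. Lemma \ref{lem:edge-cons} then gives the equality of distributions
\[
p({\bf Y}({\mathfrak a}_{\bm\alpha})) = p({\bf Y}({\mathfrak a}_{{\bm\alpha}_1})).
\]
This turns the identification problem for a response to a path intervention into the identification problem for a response to an edge intervention.

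Next, I would apply Lemma \ref{lem:g-formula-edge} to the induced edge intervention on ${\bm\alpha}_1$. Under the MWM for ${\cal G}$, the edge g-formula identifies the joint distribution $p({\bf V}({\mathfrak a}_{{\bm\alpha}_1}))$ as the functional
\[
\prod_{V \in {\bf V}} p(V = {\bf v}_V \mid {\bf v}_{\pa_{\cal G}^{\overline{{\bm\alpha}_1}}(V)}, {\mathfrak a}_{\{ (WV)_{\to} \in {\bm\alpha}_1 \}})
\]
of the observed distribution $p({\bf V})$. Since ${\bf Y} \subseteq {\bf V}$, marginalizing this functional over ${\bf V}\setminus{\bf Y}$ yields a functional of $p({\bf V})$ identifying $p({\bf Y}({\mathfrak a}_{{\bm\alpha}_1}))$, which by the first step equals $p({\bf Y}({\mathfrak a}_{\bm\alpha}))$.

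There is no real obstacle here, as both ingredients are already in place; the only thing to check is that the edge set ${\bm\alpha}_1$ constructed from the edge-consistent $\pi$ is precisely the edge set on which the edge g-formula of Lemma \ref{lem:g-formula-edge} operates, and that the value assignment ${\mathfrak a}_{{\bm\alpha}_1}$ inherited from $\pi$ is unambiguous. Both are immediate consequences of the edge-consistency hypothesis and the definition of ${\bm\alpha}_1$ preceding Lemma \ref{lem:edge-cons}.
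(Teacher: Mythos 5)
Your proposal is correct and follows essentially the same route as the paper: reduce the path-intervention response to the induced edge intervention via Lemma \ref{lem:edge-cons}, then identify it by marginalizing the edge g-formula of Lemma \ref{lem:g-formula-edge} over ${\bf V}\setminus{\bf Y}$, exactly parallel to how Corollary \ref{cor:g-formula-edge} follows from Lemma \ref{lem:node-cons} and the extended g-formula. No gaps.
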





We will show that responses to edge inconsistent path interventions are not
identifiable under the MWM using the same strategy as we used for
node inconsistent edge interventions.  First, we reproduce a result
stating that a joint response to a conflicting exposure is not identifiable.
Then we extend this result to the general case we need.


\begin{lem}
The distributions $p(B(a), B(a'))$ and $p(B(a), B)$ are not identifiable
from $p(A, B)$
under the MWM for the DAG in
Fig. \ref{fig:non-id} (a).
\label{lem:chen}
\end{lem}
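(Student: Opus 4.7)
The plan is to prove non-identifiability in the standard way, by exhibiting two causal structures $c_1(\{A,B\}, \mathcal{G})$ and $c_2(\{A,B\}, \mathcal{G})$, both belonging to the MWM associated with $A \to B$, that agree on $p(A,B)$ but disagree on $p(B(a), B(a'))$ and on $p(B(a), B)$. Since the observed joint is the only ``available part'' of the model we are allowed to use, producing such a pair is sufficient.

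I will take $A$, $B$ binary and fix $p(A{=}0) = p(A{=}1) = 1/2$ in both structures. The MWM for $A \to B$ only requires that the pair $\{B(0), B(1)\}$ be jointly independent of $A$, and that each $B(a)$ have some marginal distribution. I will choose both structures to have $p(B(0){=}1) = p(B(1){=}1) = 1/2$, so that the marginals coincide, but will couple $B(0)$ and $B(1)$ differently. In $c_1$, let $(B(0), B(1)) = (U, U)$ for a single $U \sim \text{Bernoulli}(1/2)$ independent of $A$. In $c_2$, let $(B(0), B(1)) = (U, 1{-}U)$ for a similar $U$. These two structures are obviously both in the MWM: the required independence $\{B(0), B(1)\} \ci A$ holds by construction, and the one-step-ahead potential outcomes $B(a)$ are well-defined for each $a$.

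Next I would verify that $p(A,B)$ is identical in the two structures. By consistency (which, as noted in Section~\ref{sec:models}, is folded into the MWM via recursive substitution), $B = B(A)$. In $c_1$, $B = U$, so $B \sim \text{Bernoulli}(1/2)$ and $B \ci A$. In $c_2$, $B = U$ when $A=0$ and $B = 1-U$ when $A=1$; since $U$ is uniform and independent of $A$, again $B \sim \text{Bernoulli}(1/2)$ with $B \ci A$. Hence $p(A,B)$ is uniform on $\{0,1\}^2$ in both cases.

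Finally I would compute the targets and observe they differ. For $p(B(a), B(a'))$ with $a \neq a'$: in $c_1$, $\Pr(B(0) = B(1)) = 1$, while in $c_2$, $\Pr(B(0) = B(1)) = 0$, so the joint distributions of $(B(0), B(1))$ differ. For $p(B(a), B)$ (say $a=0$): using $B = B(A)$, in $c_1$ we have $B = U = B(0)$, hence $\Pr(B(0) = B) = 1$; in $c_2$, $B(0) = B$ iff $A = 0$, so $\Pr(B(0) = B) = 1/2$. No step is really an obstacle here; the only subtlety is making sure the constructed objects genuinely sit inside the MWM rather than some smaller submodel, which is why I explicitly specify the joint laws over $\{B(0), B(1)\}$ and verify the MWM independence holds. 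Once the construction is down, agreement on $p(A,B)$ and disagreement on the two targets are immediate arithmetic.
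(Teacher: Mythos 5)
Your proposal is correct and takes essentially the same route as the paper: it exhibits two causal structures in the MWM for $A \to B$ that agree on the observed law but disagree on the cross-world joints, with your comonotone vs.\ XOR coupling ($B = U$ versus $B = A \oplus U$, $U$ independent of $A$) playing the same role as the paper's pair of structural equations driven by a ternary exogenous $U_B$. The only difference is that the paper's version also arranges agreement on all single-intervention distributions $p(\{A,B\}({\bf w}))$ (a slightly stronger non-identifiability statement), which your construction in fact satisfies as well, since in both structures $B(0)$ and $B(1)$ are each Bernoulli$(1/2)$ and jointly independent of $A$.
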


\begin{thm}
Consider a DAG ${\cal G}$ with vertices ${\bf V}$, and a proper set of paths
${\bm\alpha}$ live for ${\bf Y}$.  Then
$p({\bf Y}({\mathfrak a}_{\bm\alpha}))$ is identifiable from
$p({\bf V})$ under the MWM for ${\cal G}$ if and only if
$\pi_{{\mathfrak a}_{\bm\alpha}}$ is edge consistent.  Moreover, if
$p({\bf Y}({\mathfrak a}_{\bm\alpha}))$ is identifiable, it is given by the
edge g-formula for $p({\bf Y}({\mathfrak a}_{{\bm\alpha}_1}))$,
the response to the
induced edge intervention.
\label{lem:non-id-path}
\end{thm}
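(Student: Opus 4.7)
The plan is to prove the two directions separately, with the identifiability direction being essentially a packaging of earlier lemmas, and the non-identifiability direction reducing, via the transfer lemma \ref{lem:small-to-large}, to the elementary non-identifiability statement of Lemma \ref{lem:chen}.

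For the ``if'' direction, suppose $\pi_{\mathfrak{a}_{\bm\alpha}}$ is edge consistent for $\mathbf{Y}$. Then Lemma \ref{lem:edge-cons} gives $p(\mathbf{Y}(\mathfrak{a}_{\bm\alpha})) = p(\mathbf{Y}(\mathfrak{a}_{{\bm\alpha}_1}))$, where $\eta_{\mathfrak{a}_{{\bm\alpha}_1}}$ is the induced edge intervention. Because the MWM refines the SWM, and because every edge intervention is identified under the MWM via Lemma \ref{lem:g-formula-edge} (this is precisely Corollary \ref{cor:g-formula-path}), we obtain an identifying functional, namely the edge g-formula (\ref{eqn:g-formula-edge}) applied to the induced edge intervention. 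This gives both the identification and the required form of the functional.

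For the ``only if'' direction, I would argue by contrapositive: assume $\pi_{\mathfrak{a}_{\bm\alpha}}$ is edge inconsistent for $\mathbf{Y}$, and exhibit two MWM causal structures that agree on $p(\mathbf{V})$ but disagree on $p(\mathbf{Y}(\mathfrak{a}_{\bm\alpha}))$. Edge inconsistency splits into two subcases. In Case A, $\bm\alpha$ itself is not consistent for $\mathbf{Y}$, so some edge $(AB)_\to$ is an edge prefix of a path in $\bm\alpha$ and also occurs inside some $\beta\in\rel_{\cal G}(\mathbf{Y}\mid{\bm\alpha})$ without any prefix subpath of $\beta$ in $\bm\alpha$ carrying $(AB)_\to$ as its edge prefix. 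Unwinding the recursive substitution (\ref{eqn:path-rec-sub}) along $\beta$, the response at $B$ inside $\mathbf{Y}(\mathfrak{a}_{\bm\alpha})$ is forced to behave as the \emph{natural} value $B$ for the purposes of $\beta$, while simultaneously behaving as $B(a)$ along the other path in $\bm\alpha$. Thus $p(\mathbf{Y}(\mathfrak{a}_{\bm\alpha}))$ entangles $B$ with $B(a)$. In Case B, $\bm\alpha$ is consistent but two paths in $\bm\alpha$ with the common edge prefix $(AB)_\to$ are assigned distinct values $a \neq a'$ by $\mathfrak{a}_{\bm\alpha}$, and the same unwinding forces $\mathbf{Y}(\mathfrak{a}_{\bm\alpha})$ to entangle $B(a)$ with $B(a')$.

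In each case I would isolate a minimal witness: let ${\cal G}^\circ$ be the subgraph consisting of the single edge $A\to B$, let $\mathbf{Y}^\circ=\{B\}$, and let ${\bm\alpha}^\circ$ be the singleton path set $\{(AB)_\to\}$ with the appropriate assignment. By Lemma \ref{lem:chen}, the joint distribution $p(B(a),B(a'))$ (Case B) or $p(B(a),B)$ (Case A) is not identified from $p(A,B)$ under the MWM for $A \to B$. I then invoke Lemma \ref{lem:small-to-large}, viewing $\mathcal{G}$ as an edge supergraph of $\mathcal{G}^\circ$ (after possibly reassigning which variable plays the role of the extra component of $\mathbf{Y}$), with ${\bm\alpha}$ an appropriate superset of ${\bm\alpha}^\circ$, to lift the non-identification statement to $p(\mathbf{Y}(\mathfrak{a}_{\bm\alpha}))$ under the MWM for ${\cal G}$. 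The two MWM structures witnessing non-identifiability in the small example, extended arbitrarily on the rest of $\mathcal{G}$, then witness non-identifiability in the large example.

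The main obstacle is the bookkeeping in the non-identifiability direction: verifying that the two subcases of edge inconsistency really do yield, after fully expanding (\ref{eqn:path-rec-sub}) along the offending relevant path, a response at $B$ that is structurally the joint $(B(a),B(a'))$ or $(B(a),B)$ needed to invoke Lemma \ref{lem:chen}, and then checking that the hypotheses of Lemma \ref{lem:small-to-large} (the supergraph condition on edges, properness and liveness of the enlarged path set, and the fact that no path in ${\bm\alpha}\setminus{\bm\alpha}^\circ$ exists in ${\cal G}^\circ$) are met by the natural embedding of the witness into $\cal G$. Once that reduction is set up cleanly, the non-identifiability conclusion is immediate.
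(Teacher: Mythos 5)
Your ``if'' direction matches the paper (Lemma \ref{lem:edge-cons} plus Corollary \ref{cor:g-formula-path}), and your two-way case split of edge inconsistency and the use of Lemma \ref{lem:chen} as the seed counterexample are also the paper's starting point. The gap is in the transfer step. You propose to take ${\cal G}^\circ$ to be the single edge $A \to B$ with ${\bf Y}^\circ = \{B\}$, ${\bm\alpha}^\circ = \{(AB)_{\to}\}$, and lift directly via Lemma \ref{lem:small-to-large}. That lemma cannot do this work: its hypotheses require the large response set to be a \emph{superset} of the small one and the large path set to be a superset of the small one (with all added paths absent from the small graph). Here $B$ is in general an intermediate vertex not in ${\bf Y}$, so at best you could conclude non-identifiability of a joint such as $p(\{{\bf Y}\cup\{B\}\}(\cdot))$, which does not imply non-identifiability of the marginal $p({\bf Y}({\mathfrak a}_{\bm\alpha}))$ --- a non-identified joint can have identified margins, and the theorem is a statement about the margin over ${\bf Y}$ alone. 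Moreover ${\bm\alpha}$ typically does not contain the bare edge $(AB)_{\to}$ as an element (its members merely have it as an edge prefix), so the containment ${\bm\alpha} \supseteq {\bm\alpha}^\circ$ fails as well. Lemma \ref{lem:small-to-large} only enlarges graphs, response sets and path sets; it cannot push a disagreement located at $B$ downstream to a disagreement at ${\bf Y}$.

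The missing idea, which is the bulk of the paper's argument, is a propagation construction. One works in the edge subgraph of ${\cal G}$ consisting of the \emph{entire} offending relevant paths $\alpha,\beta \in \rel_{\cal G}({\bf Y}\mid{\bm\alpha})$, from $A$ all the way to their sinks in ${\bf Y}$ (three structural cases arise: both paths end at the same outcome, they end at two distinct outcomes, or one outcome is an ancestor of the other along $\alpha$). The two structures of Lemma \ref{lem:chen} are extended to this subgraph by assigning to every additional vertex a one-to-one (invertible) linear structural equation, shared by both structures; an induction along the segments of $\alpha$ and $\beta$ then shows that the disagreement on $p(B(a),B(a'))$ (or $p(B(a),B)$) is carried forward step by step to a disagreement on the joint response of the sink outcome(s), while the structures remain in the MWM and continue to agree on the observed-data distribution. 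Only after this is established does Lemma \ref{lem:small-to-large} apply, with the path subgraph as the small graph --- whose response set now genuinely sits inside ${\bf Y}$ --- to pass from the subgraph to ${\cal G}$ and to the full path set ${\bm\alpha}$. Without this inductive propagation your reduction proves a strictly weaker statement than the theorem claims.
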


\subsection{A Model Where Responses to Path Interventions Are Identified}
\label{sec:linear-sem}
Though we have shown that responses to path interventions that cannot be
expressed as responses to edge interventions are not in general identified
under the MWM, there exist submodels of the MWM where all responses to
path interventions are identified.  In particular, consider the \emph{linear
structural equation model (SEM)}, which is an MWM where the mapping from
${\bf v}_{\pa_{\cal G}(V)} \in {\mathfrak X}_{\pa_{\cal G}(V)}$ to
$V({\bf v}_{\pa_{\cal G}(V)})$ is a linear function of
${\bf v}_{\pa_{\cal G}(V)}$ and an error term $\epsilon_V$, where such
error terms are normally distributed and mutually independent.

\begin{thm}
Let $\pi_{{\mathfrak a}_{\bm\alpha}}$ be a path intervention.  Then
$p({\bf Y}({\mathfrak a}_{\bm\alpha}))$ is identified under the linear SEM.
\end{thm}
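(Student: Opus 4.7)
The plan is to exploit the fact that in a linear SEM every structural equation takes the form $V = \sum_{W\in\pa_{\cal G}(V)}\beta_{WV}W+\epsilon_V$, with jointly normal, mutually independent errors $\epsilon_V$ of variances $\sigma_V^2$. First I would show, by induction on a topological order of ${\cal G}$, that for every vertex $V$ and every path intervention $\pi_{{\mathfrak a}_{\bm\alpha}}$, the recursive substitution definition (\ref{eqn:path-rec-sub}) forces $V({\mathfrak a}_{\bm\alpha})$ to be a linear combination
\[
V({\mathfrak a}_{\bm\alpha}) \;=\; \mu_V({\mathfrak a}_{\bm\alpha}) \;+\; \sum_{W\in{\bf V}} c_{V,W}({\bm\alpha})\,\epsilon_W,
\]
where $\mu_V({\mathfrak a}_{\bm\alpha})$ is an affine function of the constants in ${\mathfrak a}_{\bm\alpha}$ and the coefficients $\mu_V,c_{V,W}$ are polynomial functions of the structural parameters $\{\beta_{WV}\}$. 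The inductive step is immediate: linearity of the structural equation propagates through both the ``assigned parent'' and the ``funnelled parent'' branches of (\ref{eqn:path-rec-sub}), and acyclicity of ${\cal G}$ together with Lemma \ref{lem:proper-funnel} guarantees termination of the recursion.

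Next I would show that the coefficients $c_{V,W}({\bm\alpha})$ admit an explicit ``path-tracing'' description: for each element $\gamma\in\rel_{\cal G}(\{V\}\mid{\bm\alpha})$ that does \emph{not} have a prefix in ${\bm\alpha}$, and starts at $W$, the product $\prod_{(XZ)_{\to}\in\gamma}\beta_{XZ}$ contributes to $c_{V,W}({\bm\alpha})$, and similarly paths whose prefix lies in ${\bm\alpha}$ contribute to $\mu_V({\mathfrak a}_{\bm\alpha})$. This is essentially Wright's path tracing, but applied to the sequences of intermediate responses arising in (\ref{eqn:path-rec-sub}). I would \emph{not} compute these coefficients explicitly; I only need them to be well-defined polynomials in the $\beta_{WV}$.

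With the linear representation in hand, the joint vector $\{Y_i({\mathfrak a}_{\bm\alpha}):Y_i\in{\bf Y}\}$ is an affine transformation of the Gaussian error vector $(\epsilon_W:W\in{\bf V})$, hence itself multivariate normal. Its mean and covariance are completely determined by $\mu_{Y_i}({\mathfrak a}_{\bm\alpha})$, $c_{Y_i,W}({\bm\alpha})$ and $\sigma_W^2$. Finally I would invoke the standard identifiability of the linear DAG SEM from $p({\bf V})$: the parameters $\beta_{WV}$ (for $W\in\pa_{\cal G}(V)$) are recovered by regressing $V$ on $\pa_{\cal G}(V)$, and $\sigma_V^2$ as the corresponding residual variance. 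Substituting these identified values into the Gaussian parameters above yields $p({\bf Y}({\mathfrak a}_{\bm\alpha}))$ as a functional of $p({\bf V})$.

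The main obstacle is purely bookkeeping: because the same $\epsilon_W$ can enter the expansion along multiple distinct relevant paths (precisely the phenomenon that causes edge-inconsistent interventions to fail identification under the general MWM, as in Lemma \ref{lem:chen}), one must be careful that linearity collapses these ``conflicting copies'' into a single term $c_{Y,W}({\bm\alpha})\epsilon_W$ with a well-defined, identifiable coefficient. The reason the argument nevertheless goes through is exactly the extra Gaussian/linear structure: what made $p(B(a),B(a'))$ non-identifiable in the general MWM was the freedom to choose the joint law of $B(a)$ and $B(a')$ given their marginals; here that joint law is pinned down by the common dependence on $\epsilon_B$ and the identified linear coefficients.
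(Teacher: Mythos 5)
Your proposal is correct, but it takes a different route from the paper: the paper disposes of this theorem by citing \cite{balke94countereval}, observing that under linearity, normality and independence one can directly evaluate arbitrary cross-world counterfactuals such as $p(W(a),W(a'))$, so even edge-inconsistent path interventions are computable; no self-contained argument is given. You instead prove the claim from scratch: an induction along a topological order through (\ref{eqn:path-rec-sub}) (terminating by acyclicity and Lemma \ref{lem:proper-funnel}) shows every response is affine in the assigned constants plus a linear combination of the error terms, with coefficients given by path-tracing over the relevant paths (those with a prefix in ${\bm\alpha}$ feeding the mean, those without feeding the error coefficients); hence the joint response of ${\bf Y}$ is multivariate Gaussian with parameters that are polynomials in the structural coefficients and error variances, which in a fully observed linear-Gaussian DAG are identified by regressing each $V$ on $\pa_{\cal G}(V)$ (validity of this step rests on $\epsilon_V \ci \pa_{\cal G}(V)$, which holds here). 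Your closing remark correctly isolates why the obstruction of Lemma \ref{lem:chen} disappears: the cross-world joint of conflicting responses is no longer a free parameter but is pinned down by their shared dependence on the same error term. What the two approaches buy: the paper's citation is shorter and inherits a general counterfactual-evaluation algorithm, while your derivation is elementary, self-contained, and exhibits the identifying functional explicitly; either is acceptable as a proof of the theorem.
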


This follows as a corollary of results in \cite{balke94countereval}.  The
reason even edge-inconsistent path interventions are identified is that
linearity, normality and independence are such strong assumptions that we can
directly evaluate even counterfactuals of the form $p(W(a),W(a'))$ using the
algorithm in \cite{balke94countereval}.
A fruitful open question if whether there are other interesting (for instance
maximal) submodels of the MWM where all responses to path interventions
are identified.

\subsection{Targets Not Representable as Path Interventions}

We have shown that a wide class of targets of interest in causal inference
can be expressed as responses to path interventions.  Nevertheless, there
exist targets of interest which are known not to be representable in this way,
such as principal stratification effects.  For instance,
the principal stratum direct effect (PSDE) \cite{rubin04direct,rubin05causal}
is defined to be
a treatment contrast only among those individuals for whom the mediator
assumes a particular value for both active and baseline treatment levels.
In Fig. \ref{fig:triangle} (a), the PSDE is a contrast of the form
\[
\mathbb{E}[Y(a,m) \mid M(a) = M(a') = m] -
\mathbb{E}[Y(a',m) \mid M(a) = M(a') = m].
\]
Under the MWM, we obtain independences $Y(a,m) \ci \{ M(a), M(a') \}$,
and $Y(a',m) \ci \{ M(a), M(a') \}$, which implies the PSDE is equal to the
controlled direct effect contrast under the MWM:
$\mathbb{E}[Y(a,m)] - \mathbb{E}[Y(a',m)]$.
Under the SWM, the PSDE contrast is not identified without more assumptions.
In either case, it is not possible to express the condition defining the
principal strata, namely $M(a) = M(a') = m$ as a response
to a path intervention, since this will entail assigning conflicting
values to a directed edge from $A$ to $M$.  This is perhaps not surprising,
since responses to path interventions are meant to encode effects \emph{along
particular causal pathways} which is not something principal strata effects
encode.  Note that despite this, the MWM allows us to rephrase the PSDE as a
node intervention.

\section{The Edge G-Formula and Single World Intervention Graphs}
\label{sec:swigs}

A connection between the SWM, node interventions, the extended g-formula,
and a type of graph with split nodes called the Single World Intervention Graph
(SWIG) was given in \cite{thomas13swig}.

If a set of responses ${\bf V}$ to a node intervention $\nu_{\bf a}$ includes
all variables (including ${\bf A}$), then, under the SWM, the response is linked
to the observed distribution via (\ref{eqn:g-formula}), and can be viewed
as a kind of Markov factorization \cite{pearl88probabilistic} of the
joint response ${\bf V}({\bf a})$, where terms
$p(V \mid \pa_{\cal G}(V))$ with $\pa_{\cal G}(V) \cap {\bf A} \neq \emptyset$
are replaced with $p(V \mid \pa_{\cal G}(V) \setminus {\bf A},
{\bf a}_{\pa_{\cal G}(V) \cap {\bf A}})$.  SWIGs are a graphical representation
of this factorization, in the sense that independences in $p({\bf V}({\bf a}))$
can be read off from the corresponding SWIG.
Since $A$ occurs both as a treatment and a response, SWIGs split the vertex $A$
into a random and fixed versions (we draw fixed vertices as squares).

For example, the SWIG in Fig. \ref{fig:swig} (a) represents
$p(\{ Y,M,W,A\}(a))$ in the SWM corresponding to Fig. \ref{fig:triangle} (a).
We can check independences of counterfactuals in the joint $p(\{Y,M,W,A\}(a))$,
via a simple modification of the d-separation criterion
\cite{pearl88probabilistic}.  For instance, $Y(a) \ci A \mid W$,
since all d-connected paths from $Y$ to $A$ are blocked by $W$.

Similarly, if a set of responses ${\bf V}$ to an edge intervention
$\eta_{\mathfrak a}$ includes all variables (including ${\bf A}$), then, under
the MWM, the response is linked to the observed distribution via
(\ref{eqn:g-formula}), and can be viewed
as a kind of Markov factorization \cite{pearl88probabilistic} of the joint
response ${\bf V}({\mathfrak a})$, where terms
$p(V \mid \pa_{\cal G}(V))$ with
$\pa_{\cal G}(V) \cap \sop_{\cal G}({\bm\alpha}) \neq \emptyset$
are replaced with $p(V \mid \pa^{\overline{\bm\alpha}}_{\cal G}(V),
	{\mathfrak a}_{(WV)_{\to} \in {\bm\alpha}})$.
It is possible to generalize SWIGs to give a graphical representation of this
factorization.  Instead of splitting the vertices into the fixed and
random versions, we instead shatter every intervened-on vertex into a set
corresponding to distinct values (including the natural value) that vertex
assumes when defining the response.  For example, the graph in
Fig. \ref{fig:swig} (b) represents
$p(\{ Y,M,W,A\}((wM)_{\to},(w'A)_{\to},(aY)_{\to}))$ in the MWM
corresponding to Fig. \ref{fig:triangle} (a).  We can check independences of
counterfactuals in this joint via a simple modification of d-separation:
$Y((aY)_{\to},(wM)_{\to},(w'A)_{\to}) \ci A((w'A)) \mid
M((w'A)_{\to},(wM)_{\to})$ since all d-connected paths from $Y$ to $A$ are
blocked by $M$.  Note that we shatter $W$ in Fig. \ref{fig:triangle} (a) into
three vertices, and $A$ into two, where the random vertex has an outgoing
arrow to $M$.  This is because there are two treatment values for $W$, and
$W$ is also a response, while $A$ is a response for the purposes of the
$(AM)_{\to}$ edge and a treatment for the purposes of the $(AY)_{\to}$ edge.
That responses to edge interventions factorize according to these kinds of
``shattered graphs'' under the MWM (but not SWM)
follows as a straightforward generalization of the proof
of proposition 11 in \cite{thomas13swig}.  In fact, these shattered graphs
can be viewed as SWIGs defined on an augmented graph where a treatment
vertex is split into copies, corresponding to (individually intervenable)
components of the treatment associated with direct and indirect effects.
For examples of such graphs, and associated discussion, see
\cite{robins10alternative}, Section 6, and Fig. 6.

Thus, the edge g-formula can be viewed as the MWM analogue of the extended
g-formula, and it is possible to construct graphs that stand in the same
relation to edge interventions, the edge g-formula, and the MWM as SWIGs do to
node interventions, the extended g-formula, and the SWM.
In the interests of space, we do not derive this formally, nor
pursue this connection further here.

\begin{figure}
\begin{center}
\begin{tikzpicture}[>=stealth, node distance=0.9cm]
    \tikzstyle{format} = [draw, very thick, circle, minimum size=5.0mm,
	inner sep=0pt]
    \tikzstyle{square} = [draw, very thick, rectangle, minimum size=5mm]

	\begin{scope}
		\path[->, very thick]
			node[format] (w) {$W$}
			node[format, right of=w] (a) {$A$}
			node[square, right of=a] (ap) {$a$}
			node[format, above right of=ap] (m) {$M$}
			node[format, below right of=m] (y) {$Y$}

			(w) edge[blue] (a)
			(ap) edge[blue] (y)
			(ap) edge[blue] (m)
			(m) edge[blue] (y)
			(w) edge[blue, bend left] (m)

			node[below of=a, yshift=0.2cm, xshift=0.3cm] (l) {$(a)$}
		;
	\end{scope}
	\begin{scope}[xshift=5.0cm]
		\path[->, very thick]
			node[format] (w) {$W$}
			node[square, above right of=w] (w1) {$w$}
			node[square, right of=w] (w2) {$w'$}

			node[format, right of=w2] (a) {$A$}
			node[square, right of=a] (ap) {$a$}
			node[format, above right of=ap] (m) {$M$}
			node[format, below right of=m] (y) {$Y$}

			(w1) edge[blue] (m)
			(w2) edge[blue] (a)

			(ap) edge[blue] (y)
			(a) edge[blue, bend left=10] (m)
			(m) edge[blue] (y)

			node[below of=a, yshift=0.2cm, xshift=0.3cm] (l) {$(b)$}
		;
	\end{scope}
\end{tikzpicture}
\end{center}
\caption{(a) A SWIG for $\{ Y, M, W, A \}(a)$.
(b) An edge intervention version for $\{ Y, M, W, A\}((wM)_{\to},(w'A)_{\to},
(aY)_{\to})$.}
\label{fig:swig}
\end{figure}
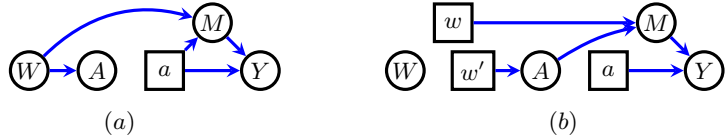

\section{The Edge G-Formula and Causal Effects in Hidden Variable
	DAGs}
\label{sec:tian}

If some variables in a causal model of a DAG are unobserved, not every
response to a node intervention is identified, since (\ref{eqn:g-formula})
cannot always be directly applied.  A complete algorithm for identifying 
${\bf Y}({\bf a})$ where ${\bf A}$ and ${\bf Y}$ are
disjoint in this setting was given in \cite{tian02on,shpitser06id}.
This algorithm, called {\bf ID}, takes as inputs a graph ${\cal G}$
representing a causal model, an observed distribution $p({\bf V})$ for this
model, and disjoint sets ${\bf A}$ and ${\bf Y}$ representing treatments and
outcomes for a causal effect we are interested in.  The algorithm either outputs
a functional of $p({\bf V})$ which is equal to $p({\bf Y}({\bf a}))$ under the
given model, or ``{\bf Not identified.}''
In this section we show that certain outputs of this algorithm
correspond to marginals of the edge g-formula (\ref{eqn:g-formula-edge}).

For example, it can be shown that $p(Y(a))$ is identified via
the \emph{front-door functional}
$\sum_{m,a'} p(Y \mid a',m) p(m \mid a) p(a')$ under the SWM shown in
Fig. \ref{fig:triangle2} (a), where $H$ is not observable.  If we replace
$H$ and its outgoing arrows by an arrow from $A$ to $Y$, we obtain the DAG
in Fig. \ref{fig:triangle2} (c).  A straightforward consequence of
(\ref{eqn:g-formula-edge}) is that $p(Y((aM)_{\to}))$ is identified via the
same functional under the MWM for Fig. \ref{fig:triangle2} (c).  In this
section, we give a general condition for case when this correspondence of
functionals occurs.

Although it is possible to define {\bf ID} on hidden variable DAGs directly,
for convenience it has been defined on acyclic directed mixed graphs (ADMGs).
An ADMG is a mixed graph
with directed $(\to)$ and bidirected edges $(\leftrightarrow)$, with
no directed cycles.  ADMGs represent classes of hidden
variable DAGs via a latent projection operation onto a graph defined only
on observed variables \cite{verma90equiv}.
For example, this operation applied to Fig. \ref{fig:triangle2} (a)
results in an ADMG shown in Fig. \ref{fig:triangle2} (d).
Connected components in a graph obtained from an ADMG ${\cal G}$ by dropping
all directed edges are called \emph{districts} of ${\cal G}$.  For example,
the sets $\{ A, Y \}$ and $\{ M \}$ are districts of the graph in Fig.
\ref{fig:triangle2} (d).
The set of districts of ${\cal G}$ is denoted by ${\cal D}({\cal G})$.
If a set ${\bf S}$ is in a district of ${\cal G}$, we denote that
district by $\dis_{\cal G}({\bf S})$.

For an ADMG ${\cal G}$ with vertices ${\bf V}$, and ${\bf A} \subseteq {\bf V}$,
let ${\cal G}_{\bf A}$ be a subgraph consisting only of vertices in ${\bf A}$
and edges between them.
Let $\an_{\cal G}(V) = \{ A \mid A \to \ldots \to V \text{ is in } {\cal G} \}$.
A total order $\prec_{\cal G}$
on ${\bf V}$ in ${\cal G}$ is \emph{topological}
if whenever $V_1 \prec_{\cal G} V_2$, $V_2 \not\in \an_{\cal G}(V_1)$.
For a total order $\prec$ on ${\bf V}$, for any $V \in {\bf V}$,
let $\pre_{\prec}(V) \equiv \{ W \in {\bf V} \setminus \{ V \} \mid
W \prec V \}$.

In the remainder of this section we will restrict
attention to inputs of ${\bf ID}$ such that
${\bf V} \subseteq \an_{\cal G}({\bf Y})$, and
${\bf V} \setminus \an_{{\cal G}_{{\bf V} \setminus {\bf A}}}({\bf Y})
\subseteq {\bf A}$, and to the
following subset of possible outputs of ${\bf ID}$.
This subset of outputs is particularly nice since it only involves conditional
distributions derived from $p({\bf V})$.

\begin{dfn}
Given $p({\bf V})$, for any total order $\prec$ on ${\bf V}$,
and ${\bf v} \in {\mathfrak X}_{\bf V}$,
a functional of $p({\bf V})$ of the form
$\sum_{{\bf V}' \setminus {\bf Y}} \prod_{V \in {\bf V}'}
	p(V \mid {\bf S}_V, {\bf v}_{\pre_{\prec}(V) \setminus {\bf S}_V})$,
where ${\bf Y} \subseteq {\bf V}' \subseteq {\bf V}$,
and ${\bf S}_V \subseteq \pre_{\prec}(V) \cap {\bf V}'$
is called a \emph{g-functional}.
\end{dfn}

The output of the g-computation algorithm \cite{robins86new}, mentioned in
Section \ref{sec:total-as-node}, is always a g-functional,
but some outputs of ${\bf ID}$ are g-functionals that
cannot arise from g-computation.  For instance, the front-door functional
is a g-functional, but g-computation cannot be used to identify
treatment effects with unobserved common causes as is the case in Fig.
\ref{fig:triangle2} (a).

We give a sufficient condition on the input ADMG ${\cal G}$ to {\bf ID} such
that the output is a g-functional, and then show that it is possible to
construct a DAG ${\cal G}^{\dag}$ from ${\cal G}$ where a certain response to
an edge intervention is identified via the same g-functional via
(\ref{eqn:g-formula-edge}).

For a particular treatment set ${\bf A}$ in ${\cal G}$,
let ${\bf D}_{{\bf S},{\bf A},{\cal G}} =
\dis_{{\cal G}_{\an_{\cal G}({\bf S})}}({\bf S})$ for each
${\bf S} \in {\cal D}({\cal G}_{{\bf V}\setminus{\bf A}})$.
We will omit ${\bf A}$ and ${\cal G}$ from the subscript if they are obvious,
to yield ${\bf D}_{\bf S}$, and
let ${\bf A}_f = {\bf A} \setminus
\bigcup_{{\bf S} \in {\cal D}({\cal G}_{{\bf V} \setminus {\bf A}})}
{\bf D}_{{\bf S}}$.

In words, ${\cal D}({\cal G}_{{\bf V} \setminus {\bf A}})$ is the
districts in a graph where treatments ${\bf A}$ are removed.  For instance,
in Fig. \ref{fig:triangle2} (d), with treatment $A$, these districts are
$\{ M \}$ and $\{ Y \}$.  For each such district ${\bf S}$, ${\bf D}_{\bf S}$
is a (possibly larger) district containing all of ${\bf S}$ in a graph
containing ancestors of ${\bf S}$.  For instance, ${\bf D}_{\{ Y \}}$ in
Fig. \ref{fig:triangle2} (d) is $\{ A, Y \}$.  ${\bf A}_f$ is all treatments
not in any such ${\bf D}_{\bf S}$.  Since $A$ is the only treatment in
Fig. \ref{fig:triangle2} (d) and is in ${\bf D}_{\{ Y \}}$, ${\bf A}_f = \{\}$
in this case.

\begin{lem}
If
$\left(
\forall {\bf S}_1,{\bf S}_2 \in {\cal D}({\cal G}_{{\bf V}\setminus{\bf A}})
\right)
({\bf D}_{{\bf S}_1} \cap {\bf D}_{{\bf S}_2} \neq \emptyset) \Rightarrow
({\bf S}_1 = {\bf S}_2)$,
then the sets
$\{ {\bf D}_{\bf S} \mid
{\bf S} \in {\cal G}({\cal G}_{{\bf V} \setminus {\bf A}}) \}$
partition ${\bf V} \setminus {\bf A}_f$.
\label{lem:dis-part}
\end{lem}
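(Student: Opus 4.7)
The claim asserts two things simultaneously: that the sets $\{{\bf D}_{\bf S}\}$ are pairwise disjoint, and that their union is exactly ${\bf V}\setminus{\bf A}_f$. The plan is to handle these two facts separately and then combine them.

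For pairwise disjointness, the hypothesis does essentially all the work: if ${\bf D}_{{\bf S}_1}\cap{\bf D}_{{\bf S}_2}\neq\emptyset$, then by assumption ${\bf S}_1={\bf S}_2$, so distinct indices yield disjoint ${\bf D}_{\bf S}$'s. The only subtlety is to note that the map ${\bf S}\mapsto{\bf D}_{\bf S}$ is well-defined because ${\bf D}_{\bf S}=\dis_{{\cal G}_{\an_{\cal G}({\bf S})}}({\bf S})$ depends only on ${\bf S}$ and ${\cal G}$.

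For the union, I would prove the set identity
\[
\bigcup_{{\bf S}\in{\cal D}({\cal G}_{{\bf V}\setminus{\bf A}})}{\bf D}_{\bf S}
\;=\;{\bf V}\setminus{\bf A}_f.
\]
The right-to-left inclusion is immediate from the definition ${\bf A}_f={\bf A}\setminus\bigcup_{\bf S}{\bf D}_{\bf S}$: any $V\in{\bf V}\setminus{\bf A}_f$ is either in ${\bf V}\setminus{\bf A}$ or in some ${\bf D}_{\bf S}$ (by the definition of ${\bf A}_f$). To finish that direction I need ${\bf V}\setminus{\bf A}\subseteq\bigcup_{\bf S}{\bf D}_{\bf S}$, which follows because the districts of ${\cal G}_{{\bf V}\setminus{\bf A}}$ partition ${\bf V}\setminus{\bf A}$ (districts of any ADMG partition its vertex set), so each $V\in{\bf V}\setminus{\bf A}$ lies in some ${\bf S}$, and ${\bf S}\subseteq{\bf D}_{\bf S}$ by construction (every vertex is in its own district, and ${\bf S}\subseteq\an_{\cal G}({\bf S})$). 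The left-to-right inclusion is equally direct: $\bigcup_{\bf S}{\bf D}_{\bf S}\subseteq{\bf V}$, and elements of $\bigcup_{\bf S}{\bf D}_{\bf S}$ that happen to lie in ${\bf A}$ are, by definition of ${\bf A}_f$, not in ${\bf A}_f$; the rest are in ${\bf V}\setminus{\bf A}\subseteq{\bf V}\setminus{\bf A}_f$.

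There is no real obstacle here; the proof is just careful bookkeeping with the definitions of ${\bf D}_{\bf S}$, ${\bf A}_f$, and ``district.'' The one point worth flagging, lest it be missed, is the small observation that ${\bf S}\subseteq{\bf D}_{\bf S}$: this requires recalling that $\an_{\cal G}({\bf S})\supseteq{\bf S}$, so the subgraph ${\cal G}_{\an_{\cal G}({\bf S})}$ contains ${\bf S}$ and hence its district there contains ${\bf S}$. Once that is noted, combining disjointness with the union identity yields the partition claim.
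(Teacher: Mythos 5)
Your proof is correct and proceeds the way the paper's own argument does: disjointness is the contrapositive of the stated hypothesis, and coverage of ${\bf V}\setminus{\bf A}_f$ follows from the facts that the districts of ${\cal G}_{{\bf V}\setminus{\bf A}}$ partition ${\bf V}\setminus{\bf A}$, that ${\bf S}\subseteq\an_{\cal G}({\bf S})$ gives ${\bf S}\subseteq{\bf D}_{\bf S}$, and the definition ${\bf A}_f={\bf A}\setminus\bigcup_{\bf S}{\bf D}_{\bf S}$. Your flagged observation that ${\bf S}\subseteq{\bf D}_{\bf S}$ (so that ${\bf D}_{\bf S}$ is well defined and contains its district) is exactly the point that makes the bookkeeping go through.
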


Given Lemma \ref{lem:dis-part}, for every $V \in {\bf V} \setminus {\bf A}_f$,
let ${\bf D}_V = {\bf D}_{\bf S}$ for the unique ${\bf D}_{\bf S}$ such that
$V \in {\bf D}_{\bf S}$.

The following lemma gives two conditions necessary for {\bf ID} to give a
g-functional output.  First, any district ${\bf S} \in
{\cal G}_{{\bf V}\setminus{\bf A}}$ must not have parents not in ${\bf S}$
as elements of ${\bf D}_{\bf S}$, and second the sets ${\bf D}_{\bf S}$ must
partition ${\bf V} \setminus {\bf A}_f$ as in Lemma \ref{lem:dis-part}.
This is satisfied by Fig. \ref{fig:triangle2} (d), since
${\bf D}_{\{ Y \}} = \{ Y, A \}$, ${\bf D}_{\{ M \}} = \{ M \}$, and
$\pa_{\cal G}(\{M\}) = \{A\}$, $\pa_{\cal G}(\{Y\}) = \{ M \}$.

\begin{lem}
If the inputs ${\bf A},{\bf Y},{\cal G}$ to {\bf ID} are such that
\begin{itemize}
\item[1]
$\left(\forall {\bf S} \in {\cal D}({\cal G}_{{\bf V} \setminus {\bf A}})
\right)$,
$(\pa_{{\cal G}}({\bf S}) \setminus {\bf S}) \cap
{\bf D}_{\bf S} = \emptyset$,
and
\item[2] $\left(\forall {\bf S}_1,{\bf S}_2 \in
{\cal D}({\cal G}_{{\bf V}\setminus{\bf A}}) \right), ({\bf D}_{{\bf S}_1}
\cap {\bf D}_{{\bf S}_2} \neq \emptyset) \Rightarrow ({\bf S}_1 = {\bf S}_2)$,
\end{itemize}
then $p({\bf Y}({\bf a}) = {\bf y})$ is identified by a g-functional
\begin{align}
\sum_{{\bf v}_{{\bf V} \setminus ({\bf Y} \cup {\bf A}_f)}}
	\prod_{V \in {\bf V} \setminus {\bf A}_f}
		p(({\bf y} \cup {\bf v})_V \mid
		{\bf a}_{\pre_{\prec_{{\cal G}}}(V) \cap
		({\bf A} \setminus {\bf D}_V)},
		({\bf y} \cup {\bf v})
		_{\pre_{\prec_{{\cal G}}}(V) \setminus
		({\bf A} \setminus {\bf D}_V)}).
\label{eqn:g-functional}
\end{align}
\label{lem:g-func}
\end{lem}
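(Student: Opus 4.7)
The plan is to derive (\ref{eqn:g-functional}) by composing two ingredients from the identification literature on ADMGs: the district factorization underlying the ID algorithm, and Tian's Q-factorization for identifiable c-components. Conditions 1 and 2 are then used to reorganize the resulting nested product into a single product indexed by vertices $V \in {\bf V} \setminus {\bf A}_f$.

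First, I would invoke the known result that under the input restrictions ${\bf V} \subseteq \an_{\cal G}({\bf Y})$ and ${\bf V} \setminus \an_{{\cal G}_{{\bf V}\setminus{\bf A}}}({\bf Y}) \subseteq {\bf A}$, the causal effect factorizes as
\[
p({\bf Y}({\bf a}) = {\bf y}) = \sum_{{\bf v}_{{\bf V} \setminus ({\bf Y} \cup {\bf A})}} \prod_{{\bf S} \in {\cal D}({\cal G}_{{\bf V} \setminus {\bf A}})} Q[{\bf D}_{\bf S}]
\]
evaluated at $({\bf y} \cup {\bf v} \cup {\bf a})_{{\bf D}_{\bf S}}$, whenever each $Q[{\bf D}_{\bf S}]$ is identifiable. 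Each $Q[{\bf D}_{\bf S}]$ is automatically identifiable, since by construction ${\bf D}_{\bf S} = \dis_{{\cal G}_{\an_{\cal G}({\bf S})}}({\bf S})$ is its own district in its own ancestral subgraph, which is the textbook identifiable case. Tian's Q-factorization then gives $Q[{\bf D}_{\bf S}] = \prod_{V \in {\bf D}_{\bf S}} p(V \mid {\bf v}_{\pre_{\prec^{\bf S}}(V) \cap \an_{\cal G}({\bf S})})$ for any topological order $\prec^{\bf S}$ on $\an_{\cal G}({\bf S})$.

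Second, I would fix a single global topological order $\prec_{\cal G}$ refining every local order $\prec^{\bf S}$, and show that each factor above rewrites as $p(V \mid {\bf a}_{\pre_{\prec_{\cal G}}(V) \cap ({\bf A} \setminus {\bf D}_V)},\, ({\bf y}\cup{\bf v})_{\pre_{\prec_{\cal G}}(V) \setminus ({\bf A} \setminus {\bf D}_V)})$. This is where condition 1 does the crucial work: it forces every parent of ${\bf D}_{\bf S}$ lying outside ${\bf D}_{\bf S}$ to belong to ${\bf A}$, so that $\pre_{\prec^{\bf S}}(V) \cap \an_{\cal G}({\bf S})$ already separates ${\bf D}_V$-variables (which appear as values of ${\bf y}\cup{\bf v}$) from $({\bf A}\setminus{\bf D}_V)$-variables (which appear as interventions to ${\bf a}$). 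Enlarging from $\pre_{\prec^{\bf S}}(V)\cap \an_{\cal G}({\bf S})$ to $\pre_{\prec_{\cal G}}(V)$ is then justified by the Markov properties of $p({\bf V})$: the extra predecessors outside $\an_{\cal G}({\bf S})$ are conditionally independent of $V$ given the original set.

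Finally, Lemma \ref{lem:dis-part} (an immediate consequence of condition 2) gives that $\{ {\bf D}_{\bf S} \mid {\bf S} \in {\cal D}({\cal G}_{{\bf V} \setminus {\bf A}}) \}$ partitions ${\bf V} \setminus {\bf A}_f$, so the nested product $\prod_{\bf S}\prod_{V \in {\bf D}_{\bf S}}$ collapses to $\prod_{V \in {\bf V} \setminus {\bf A}_f}$; the summation over ${\bf v}_{{\bf V} \setminus ({\bf Y} \cup {\bf A})}$ can be enlarged to ${\bf v}_{{\bf V} \setminus ({\bf Y} \cup {\bf A}_f)}$ without changing the value because variables in ${\bf A} \setminus {\bf A}_f$ are constrained to ${\bf a}$ by the conditioning structure. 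This assembles precisely (\ref{eqn:g-functional}). I expect the main obstacle to be the rewriting step in the second paragraph: verifying that the Tian conditionals indexed by the local order $\prec^{\bf S}$ agree with the conditionals indexed by the global order $\prec_{\cal G}$, with ${\bf A}$-predecessors outside ${\bf D}_V$ correctly substituted by ${\bf a}$, requires careful use of the latent-projection Markov property, and is the only step where condition 1 is used beyond bookkeeping.
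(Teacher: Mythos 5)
Your opening factorization is where the argument breaks. What line 4 of {\bf ID} (equivalently, Tian's c-component decomposition) gives is $p({\bf Y}({\bf a})={\bf y})=\sum_{{\bf v}_{{\bf V}\setminus({\bf Y}\cup{\bf A})}}\prod_{{\bf S}\in{\cal D}({\cal G}_{{\bf V}\setminus{\bf A}})}Q[{\bf S}]$, a product of the kernels of the districts ${\bf S}$ themselves, not of the enlarged blocks ${\bf D}_{\bf S}$. The relation between the two is exactly what condition 1 buys: it forces $({\bf D}_{\bf S}\setminus{\bf S})\cap\an_{{\cal G}_{{\bf D}_{\bf S}}}({\bf S})=\emptyset$, i.e.\ ${\bf S}$ is ancestrally closed inside its block, so the line-7 recursion of {\bf ID} terminates after one marginalization and $Q[{\bf S}]=\sum_{{\bf D}_{\bf S}\setminus{\bf S}}Q[{\bf D}_{\bf S}]$, with the treatment coordinates in ${\bf D}_{\bf S}\setminus{\bf S}$ integrated out as \emph{free dummies}. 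Evaluating $Q[{\bf D}_{\bf S}]$ at $({\bf y}\cup{\bf v}\cup{\bf a})_{{\bf D}_{\bf S}}$, i.e.\ pinning the block's treatments at ${\bf a}$, is a different (and wrong) object, and your closing patch --- that the sum may be enlarged from ${\bf V}\setminus({\bf Y}\cup{\bf A})$ to ${\bf V}\setminus({\bf Y}\cup{\bf A}_f)$ ``because variables in ${\bf A}\setminus{\bf A}_f$ are constrained to ${\bf a}$'' --- is false: in (\ref{eqn:g-functional}) those coordinates are genuine summation dummies, and that averaging is the entire content of the functional. Concretely, in the graph of Fig.~\ref{fig:triangle2}~(d) with ${\bf A}=\{A\}$, ${\bf Y}=\{Y\}$, ${\bf D}_{\{Y\}}=\{A,Y\}$, your recipe yields $\sum_m p(m\mid a)\,p(a)\,p(y\mid a,m)=p(y,a)$, whereas Lemma \ref{lem:g-func} gives the front-door functional $\sum_m p(m\mid a)\sum_{a'}p(a')\,p(y\mid a',m)$; the copy of $A$ inside ${\bf D}_{\{Y\}}$ must be summed, not set to $a$. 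Your second paragraph is also internally inconsistent with this: there you assign treatment predecessors inside ${\bf D}_V$ the values $({\bf y}\cup{\bf v})$, but under your setup ${\bf v}$ carries no coordinate for variables in ${\bf A}$, so those values do not yet exist; and condition 1 constrains $\pa_{\cal G}({\bf S})\setminus{\bf S}$ relative to ${\bf D}_{\bf S}$ --- it does not say parents of ${\bf D}_{\bf S}$ outside the block lie in ${\bf A}$.

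Once this is repaired --- replace each block factor by $Q[{\bf S}]=\sum_{{\bf D}_{\bf S}\setminus{\bf S}}Q[{\bf D}_{\bf S}]$ and let the enlargement of the outer sum to ${\bf V}\setminus({\bf Y}\cup{\bf A}_f)$ absorb precisely these inner marginals --- the remainder of your plan (Tian's Q-factorization of $Q[{\bf D}_{\bf S}]$ over $\an_{\cal G}({\bf S})$, passage to a single global order $\prec_{\cal G}$ via the latent-projection Markov property, and collapse of the double product using the partition of ${\bf V}\setminus{\bf A}_f$ from Lemma \ref{lem:dis-part}) is sound and amounts to the same algebra as the paper's argument, which is organized instead as a case analysis of the call traces of {\bf ID}: the input restrictions rule out lines 2 and 3, condition 1 rules out the failure line 5, and the outputs of the line 1, line 6, and line 4 followed by line 6/7 traces are then rewritten term by term into the form (\ref{eqn:g-functional}). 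In particular, where the paper's proof marginalizes ${\bf S}'\setminus{\bf S}$ after the line-7 call, your corrected proof must do the corresponding marginalization of ${\bf D}_{\bf S}\setminus{\bf S}$ explicitly rather than treat it as a vacuous bookkeeping step.
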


Finally, given that preconditions given by Lemma \ref{lem:g-func} are satisfied
by an ADMG ${\cal G}$, the following result claims we can modify ${\cal G}$
into a DAG ${\cal G}^{\dag}$, where there is some edge intervention with a
response identified by the same g-functional as given by lemma \ref{lem:g-func}.
This DAG for Fig. \ref{fig:triangle2} (d) is Fig. \ref{fig:triangle2} (c).

\begin{lem}
For an ADMG ${\cal G}$ with vertex set ${\bf V}$, fix disjoint
${\bf Y},{\bf A} \subseteq {\bf V}$ that satisfy the preconditions of
lemma \ref{lem:g-func}.
Then there exists a DAG ${\cal G}^{\dag}$ with vertex set ${\bf V}$,
and an edge intervention
$\eta_{{\mathfrak a}_{\bm\alpha}}$ on a set of edges
${\bm\alpha}$ in ${\cal G}^{\dag}$ such that
$p({\bf Y}({\mathfrak a}_{\bm\alpha}))$ is identified under the MWM for
${\cal G}^{\dag}$ via a margin of the functional in (\ref{eqn:g-formula-edge})
that is equal to the identifying g-functional for
$p({\bf Y}({\bf a}))$ in terms of $p({\bf V})$ in ${\cal G}$.
\label{lem:g-func-as-edge-g}
\end{lem}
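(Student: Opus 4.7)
The plan is to take $\mathcal{G}^{\dag}$ to be the \emph{complete} DAG on ${\bf V}$ oriented by the topological order $\prec_{\cal G}$ used in (\ref{eqn:g-functional}), so $V \to W$ in $\mathcal{G}^{\dag}$ whenever $V \prec_{\cal G} W$. For the intervened edge set I would take
\[
{\bm\alpha} = \{(AV)_{\to} \mid V \in {\bf V} \setminus {\bf A}_f,\; A \in \pre_{\prec_{\cal G}}(V) \cap ({\bf A} \setminus {\bf D}_V)\},
\]
with ${\mathfrak a}_{\bm\alpha}$ sending each $(AV)_{\to} \in {\bm\alpha}$ to ${\bf a}_A$. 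Lemma \ref{lem:dis-part} makes ${\bf D}_V$ unambiguous for every $V \in {\bf V} \setminus {\bf A}_f$, so the intervention is well posed (and in fact node consistent), and $\mathcal{G}^{\dag}$ is automatically acyclic. This recipe recovers Figure \ref{fig:triangle2}(c) from Figure \ref{fig:triangle2}(d) in the front-door example.

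Next I would apply the edge g-formula of Lemma \ref{lem:g-formula-edge} to $\mathcal{G}^{\dag}$. Because $\mathcal{G}^{\dag}$ is complete, $\pa_{\mathcal{G}^{\dag}}(V) = \pre_{\prec_{\cal G}}(V)$ for every $V$; for $V \in {\bf V} \setminus {\bf A}_f$ the resulting factor is exactly
\[
p(v_V \mid {\bf v}_{\pre(V) \setminus ({\bf A} \setminus {\bf D}_V)},\; {\bf a}_{\pre(V) \cap ({\bf A} \setminus {\bf D}_V)}),
\]
matching the factor in (\ref{eqn:g-functional}); for $V \in {\bf A}_f$ no edge of ${\bm\alpha}$ terminates at $V$, so the factor reduces to the chain-rule conditional $p(v_V \mid {\bf v}_{\pre(V)})$. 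These latter factors are the ones that must disappear after marginalization for the claim to hold.

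The principal obstacle is verifying that marginalizing ${\bf v}_{{\bf V} \setminus {\bf Y}}$ from the edge g-formula in $\mathcal{G}^{\dag}$ indeed collapses the ${\bf A}_f$-factors to $1$. The key observation is that for any $V \in {\bf A}_f$ and any successor $W \in {\bf V} \setminus {\bf A}_f$, Lemma \ref{lem:dis-part} forces $V \in {\bf A} \setminus {\bf D}_W$ (since $V \in {\bf A}$ and ${\bf A}_f$ is disjoint from every ${\bf D}_{\bf S}$), so $W$'s factor conditions on the intervened constant ${\bf a}_V$ rather than ${\bf v}_V$. Hence ${\bf v}_V$ appears only inside $V$'s own factor and inside factors for later ${\bf A}_f$ variables. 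Summing over ${\bf v}_{{\bf A}_f}$ in reverse topological order then peels off each $\sum_{v_V} p(v_V \mid {\bf v}_{\pre(V)}) = 1$, and summing the resulting product of non-${\bf A}_f$ factors over ${\bf v}_{{\bf V} \setminus ({\bf Y} \cup {\bf A}_f)}$ yields exactly (\ref{eqn:g-functional}). No Markov constraints on $p({\bf V})$ from the MWM of $\mathcal{G}^{\dag}$ are needed, since the complete DAG imposes none.
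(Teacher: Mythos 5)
Your construction is correct, but it is a genuinely different route from the paper's. The paper builds ${\cal G}^{\dag}$ by keeping all directed edges of the ADMG ${\cal G}$ and orienting each bidirected edge consistently with $\prec_{\cal G}$, takes ${\bm\alpha}$ to be all directed edges of ${\cal G}$ out of ${\bf A}$ (each set to ${\bf a}_A$), and then must argue in two steps: first that $p(V\mid \pa_{{\cal G}^{\dag}}(V))=p(V\mid \pre_{\prec}(V))$ by the ordered local Markov property of ${\cal G}^{\dag}$ (so the g-functional's predecessor-conditioning can be replaced by parent-conditioning), and second, using precondition 1 of Lemma \ref{lem:g-func}, that the intervened parents of each $V$ are exactly $\pa_{{\cal G}^{\dag}}(V)\cap({\bf A}\setminus{\bf D}_V)$, before summing out the ${\bf A}_f$ factors. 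You instead take the complete DAG on $\prec_{\cal G}$ and hand-pick ${\bm\alpha}$ directly from the sets ${\bf A}\setminus{\bf D}_V$, so that parents equal predecessors by fiat and the match with (\ref{eqn:g-functional}) is a purely algebraic identity valid for every $p({\bf V})$ — no Markov property of ${\cal G}^{\dag}$ is invoked, and your peeling argument for the ${\bf A}_f$ factors (using that ${\bf A}_f$ is disjoint from every ${\bf D}_{\bf S}$, so every later non-${\bf A}_f$ factor conditions on the constant ${\bf a}_V$ rather than the free $v_V$) is sound. What your version buys is economy: the existential statement is proved with essentially no graph-theoretic bookkeeping, and precondition 1 of Lemma \ref{lem:g-func} is not even needed at this stage. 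What it gives up is the substantive content of the paper's choice: the paper's ${\cal G}^{\dag}$ retains the causal structure of ${\cal G}$ (it is Fig.~\ref{fig:triangle2}(c) in the front-door example precisely because only the bidirected edge is re-oriented), so the resulting edge intervention reads as a mediation-type query in a model closely tied to the original hidden-variable model, whereas the complete DAG discards all of ${\cal G}$'s structure and only coincides with Fig.~\ref{fig:triangle2}(c) by accident of that example being saturated after orientation.

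One small correction: your parenthetical claim that the intervention is node consistent is false in general. In the front-door example your ${\bm\alpha}=\{(AM)_{\to}\}$, and the relevant paths for $\{Y\}$ given ${\bm\alpha}$ with source $A$ have prefix edges $\{(AM)_{\to},(AY)_{\to}\}$, which is neither contained in nor disjoint from ${\bm\alpha}$; indeed the response $Y((aM)_{\to})$ is exactly the kind of quantity that is \emph{not} identified under the SWM. This does not affect your argument, since Lemma \ref{lem:g-formula-edge} identifies arbitrary edge interventions under the MWM, but you should drop the node-consistency remark.
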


A natural question raised by lemma
\ref{lem:g-func-as-edge-g} is the converse -- is it the case that every
identifying functional for an edge intervention corresponds
to an identifying functional of a causal effect via {\bf ID}.
We leave this question for future work.

The fact that a class of causal effects identified via a g-functional,
even those effects with unobserved causes of treatments,
corresponds to responses to edge interventions in a DAG
gives an additional reason to study estimation theory of the edge g-formula
(\ref{eqn:g-formula-edge}).  Furthermore, this
connection gives another setting in which front-door type functionals may
arise -- the context of mediation analysis where the baseline treatment is
not a constant value, but a naturally occurring value in the population.

\section{A Multiply Robust Estimator for a Special Case of the Edge G-Formula}
\label{sec:estimator}

\begin{figure}
\begin{center}
\begin{tikzpicture}[>=stealth, node distance=0.9cm]
    \tikzstyle{format} = [draw, very thick, circle, minimum size=5.0mm,
	inner sep=0pt]

	\begin{scope}
		\path[->, very thick]
			node[format] (a) {$A$}
			node[format, right of=a] (m) {$M$}
			node[format, right of=m] (y) {$Y$}
			node[format, gray, above of=m,yshift=-0.1cm] (h) {$H$}

			(a) edge[blue] (m)
			(m) edge[blue] (y)

			(h) edge[blue] (a)			
			(h) edge[blue] (y)

			node[below of=m, yshift=0.3cm] (l) {$(a)$}
		;
	\end{scope}
	\begin{scope}[xshift=3.0cm]
		\path[->, very thick]
			node[format] (a) {$A$}
			node[format, right of=a] (m) {$M$}
			node[format, right of=m] (y) {$Y$}
			node[format, above of=m,yshift=-0.1cm] (h) {$C$}

			(a) edge[blue] (m)
			(m) edge[blue] (y)

			(h) edge[blue] (a)			
			(h) edge[blue] (y)
			(h) edge[blue] (m)

			(a) edge[blue, bend right] (y)

			node[below of=m, yshift=0.3cm] (l) {$(b)$}
		;
	\end{scope}
	\begin{scope}[xshift=6.0cm]
		\path[->, very thick]
			node[format] (a) {$A$}
			node[format, right of=a] (m) {$M$}
			node[format, right of=m] (y) {$Y$}

			(a) edge[blue, bend left] (y)
			(a) edge[blue] (m)
			(m) edge[blue] (y)

			node[below of=m, yshift=0.3cm] (l) {$(c)$}
		;
	\end{scope}
	\begin{scope}[xshift=9.0cm]
		\path[->, very thick]
			node[format] (a) {$A$}
			node[format, right of=a] (m) {$M$}
			node[format, right of=m] (y) {$Y$}

			(a) edge[<->, red, bend left] (y)
			(a) edge[blue] (m)
			(m) edge[blue] (y)

			node[below of=m, yshift=0.3cm] (l) {$(d)$}
		;
	\end{scope}

\end{tikzpicture}
\end{center}
\caption{
(a) A hidden variable
DAG where the causal effect $p(Y(a) = y)$ is identified via the front-door
formula $\sum_{m,a'} p(y \mid a',m) p(m \mid a) p(a')$.
(b) A DAG for a simple setting in mediation analysis where multiply robust
estimators for functionals derived from (\ref{eqn:g-formula-edge}) for
$Y((aY)_{\to},(a'M)_{\to})$ are known.
(c) A DAG where $Y((aM)_{\to})$ is identified via the front-door formula in
(a).
(d) A latent projection ADMG of the DAG in (a) onto $\{ A, M, Y \}$.
}
\label{fig:triangle2}
\end{figure}
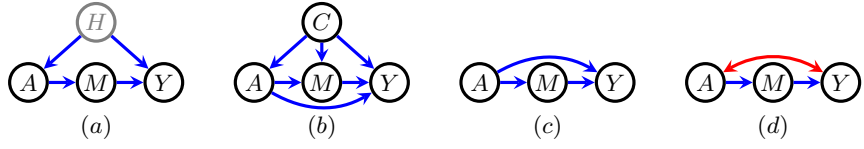

We have shown that the edge g-formula (\ref{eqn:g-formula-edge})
encodes a wide class of identified targets in causal inference.
Here we give an example of how a response to an edge-consistent path
intervention is represented as an edge intervention, identified via a marginal
of (\ref{eqn:g-formula-edge}), and re-expressed as a contrast parameter for
which an estimator exists which is robust to misspecification of parts of the
likelihood function.  We consider discrete state spaces, but
extensions to continuous state spaces are straightforward in this case.

Consider the graph in Fig. \ref{fig:triangle2} (b), which represents a simple
mediation setting, with $A$ an exposure, $Y$ an outcome, $M$ a mediator, and
$C$ a set of baseline covariates.  We might be interested in a direct or
indirect effect of $A$ on $Y$.  As discussed in section \ref{sec:targets},
we may represent such effects as contrasts obtained from a response to a path
intervention $p(Y((aMY)_{\to}, (a'Y)_{\to}))$.  This path intervention is
natural, and edge consistent, and the response of $Y$ to it is equal to
the response to an edge intervention $p(Y((aM)_{\to},(a'Y)_{\to}))$, which
is identified as a marginal of (\ref{eqn:g-formula-edge}), namely
$\sum_{c} p(Y \mid a', m, c) p(m \mid a, c) p(c)$.  Let
$\Upsilon(a,a',c) = \sum_m \mathbb{E}(Y \mid a', m, c) \cdot p(m \mid a, c)$.
Then the mean response is $\Phi(a,a') = \sum_{c} \Upsilon(a,a',c) \cdot p(c)$,
and the \emph{efficient influence function} of $\Phi(a,a')$
under the saturated model ${\cal P}_s$, that is the set of all densities
$p(Y,A,M,C)$, is
\begin{align*}
U^{\mathrm{eff}}_{{\cal P}_s}(\Phi(a,a')) =
& \frac{\mathbb{I}(A=a) p(M \mid a',C)}{p(a\mid C) p(M \mid a, C)}
\{ Y - \mathbb{E}(Y \mid C,M,a) \} +\\
& \frac{\mathbb{I}(A=a')}{p(a'\mid C)}\{ \mathbb{E}(Y\mid C,M,a) -
\Upsilon(a,a',C)\}
+ \Upsilon(a,a',C) - \Phi(a,a'),
\end{align*}
where $\mathbb{I}(.)$ is the indicator function for an event
\cite{tchetgen12semi2}.

To represent direct and indirect effects as contrasts, we also need to consider
the response of $Y$ to $A$ being set to $a$ for the purposes of all
pathways from $A$ to $Y$, which simply corresponds to $p(Y(a))$, which is
identified via a marginal of (\ref{eqn:g-formula}), namely
$\sum_{c} p(Y \mid a,c) p(c)$.  The mean response is then
$\Phi(a,a) = \sum_{c} \mathbb{E}(Y \mid a,c) p(c)$.
The efficient influence function of $\Phi(a,a)$ under the
saturated model ${\cal P}_s$ is simply
 $U^{\mathrm{eff}}_{{\cal P}_s}(\Phi(a,a))$, which simplifies to
\[
\frac{\mathbb{I}(a)}{p(a\mid C)}\{ Y - \Upsilon(a,a,C) \} + \Upsilon(a,a,C) -
\Phi(a,a),
\]
the efficient influence function derived in the context of total effects in
\cite{robins94estimation}.

Natural direct and indirect effects may be defined on the difference scale as
$\Phi(a,a) - \Phi(a,a')$, and $\Phi(a,a') - \Phi(a',a')$.  Alternatively,
for binary outcomes we may also define such effects in a natural way on the
risk ratio or odds ratio scale.

Estimating these parameters using an unrestricted likelihood is not a feasible
strategy in settings with a high dimensional vector of baseline covariates,
which means we must resort to modeling.
An approach in \cite{tchetgen12semi2} is to
assume models $\{ \mathbb{E}^{\mathrm{par}}(Y \mid a,m,c ; \hat{\alpha}),
f^{\mathrm{par}}(m \mid a,c ; \hat{\beta}),
f^{\mathrm{par}}(a\mid c ; \hat{\gamma}) \}$, and use a substitution estimator
which solves the estimating equations
\[
\mathbb{P}_n\left( \hat{U}^{\mathrm{eff}}_{{\cal P}_s}(\Phi(a,a')) \right)
= 0,
\]
where $\mathbb{P}_n(.)$ is the empirical average (for sample size $n$),
and $\hat{U}^{\mathrm{eff}}_{{\cal P}_s}$ is equal to
${U}^{\mathrm{eff}}_{{\cal P}_s}$ evaluated at
$\{ \mathbb{E}^{\mathrm{par}}(Y \mid a,m,c ; \hat{\alpha}),
f^{\mathrm{par}}(m \mid a,c ; \hat{\beta}),
f^{\mathrm{par}}(a\mid c ; \hat{\gamma}) \}$.

The resulting estimator exhibits the property of \emph{triple robustness},
that is it remains consistent in the union model where any two of the
above three parametric models is correct.  This estimator is combined with a
similarly defined doubly
robust estimator for $\Phi(a,a)$ derived in \cite{robins94estimation} to
yield a triply robust estimator for the direct and indirect parameters on
the difference scale.  This was extended to the semi-parametric models for
direct effects on the additive and multiplicative scales
\cite{tchetgen14semi}.

Since our results show that the edge g-formula encompasses a wide range of
causal inference targets, including effects of treatments on the multiply
treated, path-specific effects, and causal effects with unobserved causes
of treatments, an interesting avenue of future work is to generalize
estimation theory for simple instances of the edge g-formula, like above, to
more general cases, for instance longitudinal cases like that shown in Fig.
\ref{fig:triangle} (b).

\section{Discussion}
\label{sec:discussion}

We have defined an inclusion hierarchy of interventions associated with
graphical features:
node interventions corresponding to standard treatment interventions, edge
interventions corresponding to intervening on a portion of the treatment
mechanism associated with a particular outgoing edge, and path interventions
corresponding to intervening on a portion of the treatment mechanism associated
with a particular outgoing causal
pathway.  We have shown that a variety of causal inference targets of interest,
including effects of treatment on the multiply treated, and path-specific
effects can be viewed as special cases of responses to path interventions.
In addition, we have shown that edge interventions are in some sense naturally
associated with the MWM of Pearl as the responses to such interventions are
naturally identified under the assumptions of this model, just as node
interventions are naturally associated with the SWM of Robins.
The question of whether a particular causal inference target is identified,
and under what model thus reduces to expressing the target as a path
intervention, and then considering whether the path intervention is natural,
and whether it can be re-expressed as an edge intervention or a node
intervention.  This process is summarized in a flowchart shown in Fig.
\ref{fig:flowchart}.

\begin{figure}
\begin{center}
\begin{tikzpicture}[>=stealth, node distance=1.3cm]
    \tikzstyle{format} = [draw, thick, ellipse, minimum size=5.0mm,
	inner sep=0pt, text width=2cm]
    \tikzstyle{decision} = [draw, thick, diamond, minimum size=5.0mm,
	inner sep=0pt, text width=1.4cm]

	\begin{scope}[xshift=8.0cm]
		\path[->]
			node[decision] (natural)
			{\hspace{1.0mm}natural?}

			node[decision, text width=1.5cm,
				above right of=natural, xshift=2.0cm]
			(edge-cons)
			{\hspace{4.0mm}edge\\consistent?}

			node[format, below right of=natural, xshift=2.0cm]
				(not-id1)
				{\hspace{1.5mm}not id under\\
				\hspace{5.4mm}MWM}

			node[format, right of=edge-cons, xshift=2.0cm]
				(edge-int)
			{\hspace{4.0mm}id under\\
				\hspace{5.4mm}MWM}

			node[decision, text width=1.5cm,
				right of=edge-int, xshift=2.0cm]
			(node-cons)
			{\hspace{4.0mm}node\\consistent?}

			node[format, right of=not-id1, xshift=2.0cm]
				(not-id2)
				{\hspace{1.5mm}not id under\\
				\hspace{5.4mm}SWM}

			node[format, right of=not-id2, xshift=2.0cm]
				(node-int)
				{\hspace{4.0mm}id under\\
				\hspace{5.4mm}SWM}

			(natural) edge[thick] node[above] {no} (not-id1)
			(natural) edge[thick] node[below] {yes} (edge-cons)
			(edge-cons) edge[thick] node[right] {no} (not-id1)
			(edge-cons) edge[thick] node[above] {yes}
				(edge-int)

			(edge-int) edge[thick]
			(node-cons)

			(node-cons) edge[thick] node[right] {no}
				(not-id2)

			(node-cons) edge[thick] node[right] {yes}
				(node-int)
		;
	\end{scope}

\end{tikzpicture}
\end{center}
\caption{A flowchart for identification results for path interventions under
the MWM and the SWM.}
\label{fig:flowchart}
\end{figure}
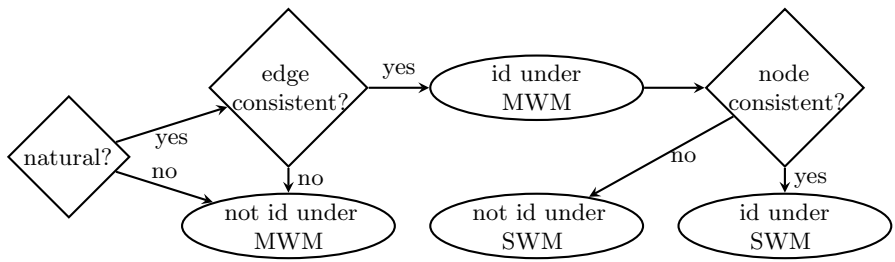


An obvious extension of our work is to consider identification of responses
in our hierarchy in hidden variable DAG models in terms of observed marginal
distributions.  Existing results on mediation analysis \cite{shpitser13cogsci}
and ETT identification \cite{shpitser09ett} would be subsumed as special cases
under this framework, but it would entail novel identification results for
any new target expressible as a path intervention response.
In addition, an interesting question is whether
\emph{all} identifying functionals of Tian's algorithm {\bf ID} correspond to
some sort of identified response to an edge intervention, although possibly
not in a DAG but an ADMG.  If true, this would recast \emph{any} identified
causal effect as a certain type of identified mediated effect.

While estimation theory of functionals derived from the
extended g-formula (\ref{eqn:g-formula}) has received attention in
the literature \cite{robins04effects},
estimators for functionals obtained
from the edge g-formula (\ref{eqn:g-formula-edge}) are known only
in very special cases such as the point treatment setting we discussed
in section \ref{sec:estimator} \cite{tchetgen12semi2}.
As we have shown in this paper, developing estimators for general functionals
obtained from the edge g-formula (\ref{eqn:g-formula-edge}) results in
estimators for a wide class of targets of interest in causal inference,
including path-specific effects, effects of treatment on the multiply treated,
effects of treatments on the indirectly treated,
and causal effects in the presence of unobserved causes of treatments.

Our results thus not only provide a unifying view of identification, under
various models, of a large class of targets of interest in causal inference,
but also motivate the development of estimation theory for a more general
functional than the g-formula.

\begin{supplement}[id=suppA]
  \stitle{Supplementary Materials For: Causal Inference with a Graphical
	Hierarchy of Interventions}
  \slink[doi]{COMPLETED BY THE TYPESETTER}
  \sdatatype{.pdf}
  \sdescription{Our supplementary materials contain detailed arguments for
	most of our claims, and some auxiliary definitions, including the
	definition of the {\bf ID} algorithm.
	In addition, we provide a detailed
	rationale for the use of path interventions.}
\end{supplement}

\bibliographystyle{plainnat}
\bibliography{references}


\end{document}